\tikzset{elliptic state/.style={draw,rounded rectangle,minimum height=1.25em}}
\tikzset{every initial by arrow/.style={initial text={ }}}
\tikzset{
  nfa/.style={
    ->,
    >=latex',
    shorten >=1pt,
    node distance=2.25cm,
    every state/.style={
      minimum size=1.25em
    },
    auto
  }
}
\newtheorem{theorem}{Theorem}[section]
\newtheorem{lemma}[theorem]{Lemma}
\newtheorem{proposition}[theorem]{Proposition}
\newtheorem*{claim*}{Claim}
\newproof{claimproof}{Proof of the claim}
\newproof{proofidea}{Proof Idea}
\newdefinition{definition}[theorem]{Definition}
\newdefinition{example}[theorem]{Example}
\newdefinition{remark}[theorem]{Remark}
\newcommand{\nat}[1][]{\eta_{}}
\newcommand{\nati}[1][]{\inv{\eta}_{}}
\newcommand{\lnf}{\operatorname{lnf}}
\newcommand{\mylabel}[2]{#2\def\@currentlabel{#2}\label{#1}}
\newcommand{\INC}{\operatorname{inc}}
\newcommand{\DEC}{\operatorname{dec}}
\newcommand{\twinindex}{\operatorname{TI}}
\newcommand{\settwinindex}{\twinindex_{*}}
\newcommand{\inv}[1]{#1^{-1}}
\newcommand{\traceMon}[1][\structD]{\monoM(#1)}
\newcommand{\ov}{\overline}
\newcommand{\alphabet}[1]{\mathrm{Alph}(#1)}
\newcommand{\automatonfont}[1]{\mathfrak{#1}}
\newcommand{\monoidfont}[1]{\mathbb{#1}}
\newcommand{\structurefont}[1]{\mathcal{#1}}
\newcommand{\@abbrev}[3]{
  \def\c@a@def##1{
    \if ##1.
    \relax
    \else
    \@ifdefinable{\@nameuse{#1##1}}{\@namedef{#1##1}{#2##1}}
    \expandafter\c@a@def
    \fi
  }
  \c@a@def #3.
}
\newcommand{\Conf}[1]{\mathrm{Conf}_{#1}}
\DeclareMathOperator{\twins}{\mathrm{twns}}
\definecolor{darkred}{HTML}{CC0000}
\newcommand{\Id}{\mathrm{Id}}
\newcommand{\relation}[1]{\mathrel{#1}}
\newcommand{\reach}{\operatorname{Reach}}
\begin{document}
\title{Reachability in Trace-Pushdown Systems}

\author[1]{Chris Köcher\fnref{fn1}}
\ead{ckoecher@mpi-sws.org}

\author[2]{Dietrich Kuske}
\ead{dietrich.kuske@tu-ilmenau.de}

\fntext[fn1]{This work was done while Chris Köcher was affiliated with the Technische Universität Ilmenau.}

\affiliation[1]{
  organization={Max Planck Institute for Software Systems},
  addressline={Paul-Ehrlich-Straße 26},
  postcode={67663},
  city={Kaiserslautern},
  country={Germany}
}

\affiliation[2]{
  organization={Technische Universität Ilmenau},
  addressline={Ehrenbergstraße 29},
  postcode={98693},
  city={Ilmenau},
  country={Germany}
}

\begin{abstract}
  We consider the reachability relation of pushdown systems whose
  pushdown holds a Mazurkiewicz trace instead of just a word as in
  classical systems.  Under two natural conditions on the transition
  structure of such systems, we prove that the reachability relation
  is lc-rational, a new notion that restricts the class of rational
  trace relations. We also develop the theory of these lc-rational
  relations to the point where they allow to infer that
  forwards-reachability of a trace-pushdown system preserves the
  rationality and backwards-reachability the recognizability of sets
  of configurations. As a consequence, we obtain that it is decidable
  whether one recognizable set of configurations can be reached from
  some rational set of configurations. All our constructions
  are polynomial (assuming the dependence alphabet to be fixed).
  
  These findings generalize results by Caucal on classical pushdown
  systems (namely the rationality of the reachability relation of such
  systems) and complement results by Zetzsche (namely the
  decidability for arbitrary transition structures under severe
  restrictions on the dependence alphabet).
\end{abstract}
\begin{keyword}
  Reachability \sep Formal Verification \sep Pushdown Automaton \sep Distributed System
\end{keyword}

\newpageafter{abstract}

\maketitle

\section{Introduction}

Much work has been done on the verification of pushdown systems that
model the behavior of sequential recursive programs. The most
fundamental result in this direction is the efficient decidability of
the reachability relation. This result was first shown by Caucal
\cite{Cau88} when he proved the reachability relation to be prefix
recognizable. To this aim, he modified the pushdown system in such a
way that the reachability relation does not change, but any path
connecting two configurations first shortens the pushdown and, in the
second phase, does not shorten it in any step. Since prefix
recognizable relations are rational, it follows that the sets of
configurations reachable forwards or backwards, respectively, from
some regular set of configurations is regular, again. One therefore
gets that even the question ``is it possible to reach some
configuration from this regular set starting in some configuration
from that regular set'' is decidable. The preservation of regularity
under backwards reachability was later reproved using a completely
different technique by Esparza et at.~\cite{BouEM97}. Differently from
Caucal \cite{Cau88}, they directly modified an NFA accepting a set of
configurations in such a way that the resulting NFA accepts the set of
configurations backwards reachable from the former set; their
technique does not immediately imply Caucal's preservation result for
the forwards reachability \cite{Sch19}.

Pushdown systems are a special case of valence automata
\cite{IbaSK76,Gil96,ItoMVM01,MitS01,ElsO04,EldKO08,Kam09,REnK09,Ren10,Zet15,Zet21,OsuMZ16}:
in a pushdown system, the possible pushdown contents come from a free
monoid and can be accessed at the prefix, only. Differently, in a
valence automaton, this free monoid is replaced by an arbitrary
monoid. Zetzsche \cite{Zet15,Zet21,OsuMZ16} considered the question
what monoids yield decidable reachability relations in valence
automata. To this aim, he considers graph monoids \cite{Zet13} that
allow to model pushdown systems, Petri nets, multi-stack automata,
counter automata and many more. He describes a large class of graph
monoids that result in decidable reachability relations, and another
class that allows to construct valence automata with an undecidable
reachability relation.

In this paper, we consider pushdown systems that hold, in their
pushdown, not a word, but a Mazurkiewicz trace. These models can,
alternatively, be understood as valence automata over loop-free graph
monoids. We aim at results similar to those by Caucal, i.e.,
preservation results under forwards and backwards reachability. At the
same time, our approach complements Zetzsche's work: while he was
concerned with properties of the graph monoid, we describe properties
of the transition structure of the automaton that guarantee
decidability of the reachability relation.

As we aim at an extension of Caucal's result, we ask which properties
of a set of configurations are preserved under the reachability. In
the case of pushdown systems, this is the case regarding
regularity. In our systems, the set of configurations is a set of
Mazurkiewicz traces where there are two distinct generalizations of
regularity: rationality and recognizability. Our main result
(Theorem~\ref{thm:reach}) states that
\begin{enumerate}[label=(\arabic*)]
\item recognizability is preserved under backwards reachability, but
  not under forwards reachability and
\item rationality is preserved under forwards reachability, but not
  under backwards reachability.
\end{enumerate}
To obtain the preservation results for pushdown systems, Caucal first
proved the rationality of the reachability relation and then employed
that regularity is preserved under rational relations. This
preservation does not hold for rational trace relations
(Lemma~\ref{lem:nonPreservation}). Therefore, we introduce a new
notion that we call left-closed rational trace relations (short:
lc-rational relations) and develop their theory as far as it is needed
in our context. Namely, we show that
\begin{enumerate}[label=(\roman*)]
\item a suitable generalization of Caucal's prefix-recognizable
  relations to the trace setting is lc-rational (Theorem~\ref{thm:product}),
\item the composition and union of lc-rational relations is
  lc-rational (Proposition~\ref{prop:composition}),
\item and the left-application of lc-rational relations preserves
  the rationality and the right-application preserves the
  recognizability of sets of traces (Theorem~\ref{thm:preservation}).
\end{enumerate}

Coming back to trace-pushdown systems, we then prove that the
reachability relation of a trace-pushdown system is
lc-rational. Recall that, following Caucal's ideas, a pushdown system
can be modified such that any path can be simulated by a path
consisting of at most two phases: in the first one, the pushdown is
shortened and, in the second, no shortening takes place. While this is
not the case for trace-pushdown systems, we are able to show that a
uniformly bounded number of phases suffices (the bound only depends on
the trace monoid, Proposition~\ref{prop:swallow}). Using (i), we also
show that the restriction of the reachability relation to single-phase
paths is lc-rational (Propositions~\ref{prop:read}
and~\ref{prop:write}). Hence (ii) implies that the reachability
relation of a trace-pushdown system is lc-rational
(Theorem~\ref{thm:reachLCRational}). Now the preservation results from
(1) and (2) follow immediately from (iii).

As all our constructions are efficient, it follows that it is
decidable in polynomial time whether a recognizable set of traces is
forwards reachable from some rational (e.g., recognizable) set of
traces. In that sense, we generalize Caucal's result to the realm of
trace pushdown systems.

\section{Preliminaries}\label{sec:prelim}

In this paper, $0$ is a natural number and, for $n\in\monoN$, we set
$[n]=\{1,2,\dots,n\}$.

Let $A$ be some alphabet and $w\in A^*$ a word over $A$. The
\emph{alphabet of $w$}, denoted $\alphabet{w}\subseteq A$, is the set
of letters occuring in the word $w$. Furthermore, $|w|_a$ denotes the
number of occurrences of the letter $a\in A$ in the word $w$.

\subparagraph{Trace Theory} A \emph{dependence alphabet} is a pair
$\structD=(A,D)$ where $A$ is a finite set of \emph{letters} and
$D\subseteq A\times A$ is a reflexive and symmetric relation, the
\emph{dependence relation}. For a letter $a\in A$, we write $D(a)$ for
the set\linebreak $\{b\in A \mid (a,b)\in D\}$ of letters dependent from $a$,
$D(B)=\bigcup_{b\in B}D(b)$ is the set of letters dependent from some
letter in $B\subseteq A$. Note that $a\in D(a)$ since $D$ is reflexive
and $a\in D(b)$ iff $b\in D(a)$ since $D$ is symmetric. For a word
$w\in A^*$, let $D(w)=D(\alphabet{w})$ be the set of letters dependent
from some letter in the word $w$.  The \emph{independence relation}
$I\subseteq A\times A$ is the set of pairs $(a,b)$ of distinct letters
with $(a,b)\notin D$.  If $\alphabet{u}\times\alphabet{v}\subseteq I$
for two words $u,v\in A^*$, i.e., if any letter $a$ from $u$ is
independent from any letter $b$ from $v$, then we write
$u\parallel v$.

To measure the complexity of our algorithms, we will need the
following parameters of a dependence alphabet $\structD=(A,D)$:
\begin{itemize}
\item The \emph{size} $\|\structD\|$ is the number $|A|$ of letters.
\item Two letters $a,b\in A$ are \emph{twins} if they have the same
  dependent letters, i.e., if $D(a)=D(b)$ (implying in particular
  $(a,b)\in D$ since $a\in D(a)$). We denote by
  $\twins(a)=\{b\in A\mid D(a)=D(b)\}$ the set of all twins of the
  letter $a$.

  By $\twins(\structD)=\{\twins(a)\mid a\in A\}$ we denote the set of
  all equivalence classes of the relation ``twin''; the \emph{twin
    index} $\twinindex(\structD)=|\twins(\structD)|$ is the number of
  these equivalence classes. Note that $\twinindex(\structD)$ equals
  the number of sets $D(a)$ for $a\in A$.
\item The \emph{set twin index} $\settwinindex(\structD)$ is the number of sets
  $D(B)$ for $B\subseteq A$ (for $B,C\subseteq A$, we have $D(B)=D(C)$
  iff $B$ and $C$ are twins in the graph $(A,D)$; the set twin index
  is the number of equivalence classes of the relation ``set twin'').
\item The \emph{independence number} $\alpha(\structD)$ is the maximal
  size of a set $B\subseteq A$ of mutually independent letters (the
  notation $\alpha(\structD)$ is standard in graph theory).
\end{itemize}
Since independent letters cannot be twins, we get
$\alpha(\structD)\le\twinindex(\structD)$; since any set $D(B)$ equals
a union of sets $D(a)$, we have
$\twinindex(\structD) \le \settwinindex(\structD) \le
2^{\twinindex(\structD)}$. 

Let ${\sim}\subseteq A^*\times A^*$ denote the least monoid congruence
with $ab\sim ba$ for all $(a,b)\in I$. In other words, $u\sim v$ holds
for two words $u,v\in A^*$ iff $u$ can be obtained from $v$ by
transposing consecutive independent letters. In particular, $u\sim v$
implies $|u|=|v|$ as well as $\alphabet{u}=\alphabet{v}$. Furthermore,
$u\parallel v$ implies $uv\sim vu$ (but the converse implication does
not hold as the example $u=v=ab$ shows).

The \emph{(Mazurkiewicz) trace monoid} induced by $\structD$ is the
quotient of the free monoid $A^*$ wrt.\ the congruence $\sim$, i.e.,
$\traceMon=A^*/{\sim}$. Its elements are equivalence classes of words
denoted $[w]$; by $[w]$, we mean the equivalence class containing $w$,
it is the \emph{trace induced by $w$}.

Suppose $D=A\times A$, i.e., all letters are mutually dependent. Then
$u\sim v\iff u=v$ holds for all words $u,v\in A^*$; hence
$\traceMon\cong A^*$ in this case. Furthermore,
$\twinindex(\structD)=\settwinindex(\structD)=\alpha(\structD)=1$.

The other extreme is $D=\{(a,a)\mid a\in A\}$ where any two distinct
letters are independent. Then $u\sim v$ iff $|u|_a=|v|_a$ holds for
any letter $a\in A$. Hence $\traceMon\cong(\bbN^{|A|},+)$ as well as
 $\twinindex(\structD)=\alpha(\structD)=|A|$ and
$\settwinindex(\structD)=2^A$ hold in this case.

For a set $B\subseteq A$ of letters, let $\pi_B\colon A^*\to B^*$
denote the monoid homomorphism with $\pi_B(b)=b$ for $b\in B$ and
$\pi_B(a)=\varepsilon$ for $a\in A\setminus B$. We refer to this
mapping as the \emph{projection to $B$}. Now let $B\subseteq A$ be a
set of pairwise dependent letters (i.e., $B\times B\subseteq D$). Then
the very definition of $\sim$ ensures $\pi_B(u)=\pi_B(v)$ for any
words $u$ and $v$ with $u\sim v$. 
\begin{theorem}[cf.~\protect{\cite[Cor.~1.4.5]{Die90}}]
  Let $\structD=(A,D)$ be a dependence alphabet and
  $B_1,\dots,B_n\subseteq A$ sets of letters with
  $D=\bigcup_{1\le i\le n}B_i\times B_i$. For any words $u,v\in A^*$,
  $u\sim v$ if, and only if, $\pi_{B_i}(u)=\pi_{B_i}(v)$ for all
  $1\le i\le n$.
\end{theorem}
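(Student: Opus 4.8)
The plan is to prove the two implications separately. The forward direction is essentially immediate from the observation recorded just before the statement: each $B_i$ is a set of pairwise dependent letters, since $B_i\times B_i\subseteq\bigcup_{1\le j\le n}B_j\times B_j=D$, and for such a set the projection $\pi_{B_i}$ is constant on $\sim$-classes; hence $u\sim v$ gives $\pi_{B_i}(u)=\pi_{B_i}(v)$ for every $i$.

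For the converse I would argue by induction on the common length of $u$ and $v$. First I note that this length is well defined: since $D$ is reflexive, every letter $a\in A$ satisfies $(a,a)\in D$, so $(a,a)\in B_j\times B_j$ for some $j$, i.e.\ $a\in B_j$; then $|u|_a=|\pi_{B_j}(u)|_a=|\pi_{B_j}(v)|_a=|v|_a$. Thus $u$ and $v$ have the same Parikh image, in particular the same length $m$. If $m=0$, both words are empty and we are done. If $m>0$, write $u=au'$ and factor $v=v_1\,a\,v_2$ where $a\notin\alphabet{v_1}$ (the prefix of $v$ up to the first occurrence of $a$, which exists because $|v|_a=|u|_a\ge 1$).

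The key step — and the one I expect to be the only real obstacle — is showing $a\parallel v_1$, i.e.\ $(a,b)\in I$ for all $b\in\alphabet{v_1}$. Suppose not: then $(a,b)\in D$ for some $b\in\alphabet{v_1}$, and $b\ne a$ as $v_1$ contains no $a$. By the hypothesis on the covering, there is an index $i$ with $a,b\in B_i$. Then $\pi_{B_i}(u)=a\,\pi_{B_i}(u')$ begins with the letter $a$, whereas $\pi_{B_i}(v)=\pi_{B_i}(v_1)\,a\,\pi_{B_i}(v_2)$ begins with the letter $b\ne a$, since $b$ occurs in $v_1$ and hence in $\pi_{B_i}(v_1)$; this contradicts $\pi_{B_i}(u)=\pi_{B_i}(v)$. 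Therefore $a\parallel v_1$, so $v_1a\sim av_1$ and consequently $v\sim av_1v_2$.

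It remains to relate $u'$ and $v_1v_2$. Since each $\pi_{B_i}$ is $\sim$-invariant, $\pi_{B_i}(v)=\pi_{B_i}(av_1v_2)$, and a short case distinction finishes the argument: if $a\in B_i$ one cancels a leading $a$ on both sides of $\pi_{B_i}(u)=\pi_{B_i}(v)$, and if $a\notin B_i$ the letter $a$ simply vanishes under $\pi_{B_i}$; either way $\pi_{B_i}(u')=\pi_{B_i}(v_1v_2)$ for every $i$. As $u'$ and $v_1v_2$ both have length $m-1$, the induction hypothesis gives $u'\sim v_1v_2$, hence $u=au'\sim av_1v_2\sim v$, which completes the induction and the proof.
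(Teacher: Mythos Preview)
Your proof is correct. The paper does not actually include its own proof of this theorem: the statement is imported from the literature (Diekert's monograph) and used as a known fact about trace monoids, so there is nothing to compare against. Your argument is the standard one for this projection lemma: the forward implication is immediate from the remark just preceding the theorem, and the converse is a clean induction on the length after first observing that the projections determine the Parikh image.

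One cosmetic point in the key step: you write that $\pi_{B_i}(v)$ ``begins with the letter $b\ne a$'', but all you actually know is that $\pi_{B_i}(v_1)$ is nonempty (since $b\in B_i$ survives the projection) and contains no $a$ (since $a\notin\alphabet{v_1}$). Hence $\pi_{B_i}(v)$ begins with \emph{some} letter of $\alphabet{v_1}\cap B_i$, which need not be $b$ itself but is still different from $a$. That is already enough for the contradiction with $\pi_{B_i}(u)$ starting in $a$, so the argument goes through unchanged; you may want to adjust the wording.
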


It follows that the trace monoid $\traceMon$ is isomorphic to the
submonoid of $\prod_{(a,b)\in D}\{a,b\}^*$ generated by all tuples
$(\pi_{\{a,b\}}(c))_{(a,b)\in D}$ for $c\in A$. As a consequence, the
trace monoid is cancellative, i.e., $s\cdot t\cdot u=s\cdot t'\cdot u$
implies $t=t'$ for any traces $s,t,t',u\in\traceMon$. We also get that
$[a] \cdot s=[b]\cdot t$ with $(a,b)\in D$ implies $a=b$ and therefore
$s=t$ (for any $a,b\in A$ and $s,t\in\traceMon$).

For a comprehensive survey of trace theory see \cite{DieR95}.

\subparagraph{Automata and Word Languages} An
\emph{$\varepsilon$-NFA} or \emph{nondeterministic finite automaton
  with $\varepsilon$-transitions} is a tuple $\autA=(Q,A,I,T,F)$ where
$Q$ is a finite set of \emph{states}, $A$ is an alphabet,
$I,F\subseteq Q$ are the sets of \emph{initial} and \emph{final}
states, respectively, and
\[
  T\subseteq Q\times (A\cup\{\varepsilon\}) \times Q
\]
is the set of \emph{transitions}. Its size $\|\autA\|$ is defined to be
$|Q|+|A|$. The $\varepsilon$-NFA $\autA$ is an \emph{NFA} if
$T\subseteq Q\times A\times Q$; it is a \emph{deterministic finite
  automaton} or \emph{DFA} if, in addition, $I=\{\iota\}$ is a
singleton and, for any $(p,a)\in Q\times A$, there is a unique state
$q\in Q$ with $(p,a,q)\in T$.

Let $\autA=(Q,A,I,T,F)$ be an $\varepsilon$-NFA. A \emph{path} is a
sequence
\[
  (p_0,a_1,p_1)(p_1,a_2,p_2)\cdots (p_{n-1},a_n,p_n)
\]
of matching transitions (i.e., elements of $T$). Such a path is usually denoted
\[
  p_0 \xrightarrow{a_1} p_1
  \xrightarrow{a_2} p_2
  \cdots
  \xrightarrow{a_n} p_n
\]
or, if the intermediate states are of no importance,
\[
  p_0 \xrightarrow{a_1a_2\cdots a_n} p_n\,.
\]
This path is \emph{accepting} if it connects an initial state with a
final state, i.e., if $p_0\in I$ and $p_n\in F$. It accepts the word
$w=a_1a_2\cdots a_n$ (note that $a_i\in A\cup\{\varepsilon\}$ such
that this word $w$ can have length properly smaller than $n$). We
denote by $L(\autA)$ the set of words $w$ accepted by $\autA$. A
language $L\subseteq A^*$ is \emph{regular} if it is accepted by some
NFA, i.e., if there is some NFA $\autA$ with $L=L(\autA)$.

A foundational result in the theory of finite automata states that
$\varepsilon$-NFA, NFA, and DFA are equally expressive. Even more,
$\varepsilon$-NFA can be transformed into equivalent NFA in
polynomial time while the transformation of an NFA into an equivalent
DFA requires exponential time.

A language $L\subseteq A^*$ is \emph{rational} if it can be
constructed from finite languages using the operations union,
multiplication, and Kleene star. By Kleene's theorem \cite{Kle56}, a
language is regular if, and only if, it is rational.

\subparagraph{Transducer and Word Relations}

A \emph{transducer} is a quintuple $\autT=(Q,A,I,T,F)$ where $Q$, $A$,
$I$, and $F$ are as for $\varepsilon$-NFAs, and
\[
  T\subseteq Q \times (A^*\times A^*) \times Q
  \text{ with }
  ((p,(u,v),q)\in T\ \Rightarrow\ |uv|\le1)
\]
is a set of transitions that are labeled by pairs $(a,b)$. Note that
transitions are labeled by a pair consisting of a letter and the empty
word or of two empty words. The size $\|\autT\|$ of the transducer
$\autT$ is defined to be $|Q|+|A|$ which is fine since the number of
transitions is polynomial in this size measure. A \emph{path} is a
sequence
\[
  (p_0,(a_1,b_1),p_1)(p_1,(a_2,b_2),p_2)\cdots (p_{n-1}(a_n,b_n),p_n)
\]
of matching transitions. As in the case of $\varepsilon$-NFAs, we
usually denote it by
\[
  p_0\xrightarrow{(a_1,b_1)}p_1
  \xrightarrow{(a_2,b_2)}p_2
  \cdots\xrightarrow{(a_n,b_n)}p_n
  \text{ or }
  p_0\xrightarrow{(a_1a_2\cdots a_n,b_1b_2\cdots b_n)}p_n
\]
(note that the definition of a transducer requires $|a_ib_i|\le 1$,
hence the words $a_1a_2\cdots a_n$ and $b_1b_2\cdots b_n$ can have
different length, the sum of these lengths is at most $n$). This path
is \emph{accepting} if $p_0\in I$ and $p_n\in F$. A pair of words
$(u,v)\in A^*\times A^*$ is accepted by $\autT$ if there is a path
\[
  I\ni p\xrightarrow{(u,v)}p_n\in F\,.
\]
By $R(\autT)\subseteq A^*\times A^*$, we denote the set of pairs
$(u,v)$ that are accepted by $\autT$.

A word relation $R\subseteq A^*\times A^*$ is \emph{rational} if it
can be constructed from finite relations using the operations union,
multiplication, and Kleene star. A foundational result on word
relations (\cite{ElgM65}, cf.\ \cite[Thm.~III.6.1]{Ber79}) states that a
word relation $R$ is rational if, and only if, it is accepted by some
transducer, i.e., there is a transducer $\autT$ with $R(\autT)=R$.

\subparagraph{Rational and Recognizable Trace Languages}

Fix some dependence alphabet $\structD=(A,D)$.  For a word language
$L\subseteq A^*$, we denote by $[L]$ the set of traces $[u]$ induced
by words from $L$, i.e., $[L]=\{[u]\mid u\in L\}\subseteq\traceMon$.

Now let $\calL\subseteq\traceMon$. The set $\calL$ is
\emph{recognizable} if the word language
$\{u\in A^*\mid [u]\in \calL\}$ is regular; it is \emph{rational} if
there exists a regular word language $L\subseteq A^*$ with
$\calL=[L]$. It follows that every recognizable trace
language is rational; the converse implication is known to fail
(consider, e.g., the trace language $\{[ab]^n\mid n\in\bbN\}$ with
$(a,b)\notin D$ that is rational, but not recognizable).

Since rational trace languages are images of rational word languages,
we obtain that $\calL\subseteq\traceMon$ is rational if, and only if,
it can be constructed from finite trace languages using the operations
union, multiplication, and Kleene star.

By the very definition, every NFA $\autA$ over the alphabet $A$
represents a rational trace language
$[L(\autA)]=\{[u]\mid u\in L(\autA)\}$. A word language
$L\subseteq A^*$ is \emph{closed} if $u\sim v\in L$ implies $u\in L$;
we call an NFA \emph{closed} if its language $L(\autA)$ is closed. In
this case, the trace language $[L(\autA)]$ is even recognizable. Note
that $\{u\in A^*\mid [u]\in \calL\}$ is closed for any trace language
$\calL$. Hence every recognizable trace language $\calL$ can be
represented by some closed NFA $\autA$. 

Even more, every recognizable trace language $\calL$ is represented by
some DFA that satisfies the following diamond property for any
$(a,b)\in I$ and $p\in Q$ (see \cref{fig:diamond-DFA} for a
vizualisation that should also explain the name ``diamond property''):
\begin{enumerate}[label=(D)]
\item For each $(p,a,q),(q,b,r)\in T$, there is a $q'\in Q$
  with $(p,b,q'),(q',a,r)\in T$.\label{def:diamond1-DFA}
\end{enumerate}

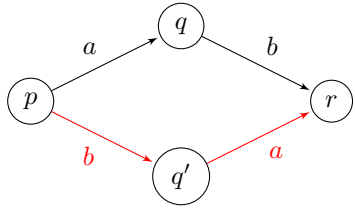
\begin{figure}[h]
  \begin{center}
    \begin{tikzpicture}[nfa]
      \node[state] (p) at (0,0) {$p$};
      \node[state] (q1) at (2,1) {$q$};
      \node[state] (q2) at (2,-1) {$q'$};
      \node[state] (r) at (4,0) {$r$};
      
      \draw (p) edge node {$a$} (q1)
                edge[color=red] node[below left] {$b$} (q2)
            (q1) edge node {$b$} (r)
            (q2) edge[color=red] node[below right] {$a$} (r);
    \end{tikzpicture}
  \end{center}
  \caption{Visualization of the diamond property
    \ref{def:diamond1-DFA} of an NFA. It states that whenever we find
    the black transitions with $a\parallel b$, we also find a
    state $q'$ with the red transitions. }
  \label{fig:diamond-DFA}
\end{figure}

\subparagraph{Rational Trace Relations}

Fix some dependence alphabet $\structD=(A,D)$.  For a word relation
$R\subseteq A^*\times A^*$, we denote by $[R]$ the set of pairs
$([u],[v])$ of traces  for  word pairs  from $R$, i.e.,
$[R]=\{([u],[v])\mid (u,v)\in R\}$.

A trace relation $\calR\subseteq\traceMon\times\traceMon$ is
\emph{rational} if there exists a rational word relation
$R\subseteq A^*\times A^*$ with
$\calR=\{([u],[v])\mid (u,v)\in R\}=[R]$. Hence, rational trace
relations can be represented by transducers.

Since the mapping $A^*\to\traceMon\colon u\mapsto[u]$ is a monoid
homomorphism, a trace relation $\calR$ is rational if, and only if, it
can be obtained from finite relations using the operations union,
multiplication, and iteration.

\subparagraph{Left and Right Application}
Let $X$ be a set, $L\subseteq X$ be a subset of $X$, and
$R\subseteq X\times X$ be a binary relation on $X$. Then we set
\[
  L^R=\{y\in X\mid \exists x\in L\colon(x,y)\in R\}\text{ and }
  {}^RL=\{x\in X\mid \exists y\in L\colon(x,y)\in R\}\,.
\]
If $R$ is (the graph of) a function $f\colon X\to X$, then
$L^R$ is the image of $L$ under $f$, i.e., $L^R=\{f(x)\mid x\in L\}$,
and ${}^RL$ is the preimage of $L$ under $f$, i.e.,
${}^RL=\{x\in X\mid f(x)\in L\}$. Often, authors write $L\,R$ for
$L^R$ and $R\,L$ for ${}^RL$; we prefer the above notation as it stresses
the different roles played by the set $L$ and the relation $R$.

\begin{example}
  Let $\structD=(A,D)$ be a dependence alphabet and $u\in A^*$. Then
  $D(u)=\{b\in A\mid \exists a\in\alphabet{u}\colon(a,b)\in
  D\}=\alphabet{u}^D={}^D\alphabet{u}$---nevertheless, we will stick
  with the notation $D(u)$ in this case.
\end{example}

The mapping $2^X\to2^X\colon L\mapsto L^R$ is the
\emph{right-application} of $R$ while the mapping
$2^X\to 2^X\colon L\mapsto{}^RL$ is the \emph{left-application} of
$R$.  Note that for $\inv{R}=\{(y,x)\mid(x,y)\in R\}$, we have
${}^RL=L^{\inv{R}}$ and, since $\inv{(\inv{R})}=R$, also
${}^{\inv{R}}L=L^R$.

\subparagraph{Convention} In this paper, we will regularly consider
subsets of and binary relations on $A^*$ and $\traceMon$, resp. We
hope to simplify understanding by using the following conventions:
\begin{itemize}
\item Subsets of $A^*$ are denoted by plain capital letters $K$ and
  $L$; binary relations on $A^*$ are similarly denoted $R$, $R_1$, and
  $R_2$.
\item Subsets of $\traceMon$ are denoted by curly letters $\calK$ and
  $\calL$; binary relations on $\traceMon$ are similarly denoted
  $\calR$, $\calR_1$, and $\calR_2$.
\end{itemize}

\section{Trace-Pushdown Systems and Problem Statement}

Recall that a \emph{pushdown system} \cite{Cau88,BouEM97,FinWW97} is a pair
$(Q,\Delta)$ where $Q$ is a finite set of \emph{states} and
$\Delta\subseteq Q\times A\times A^*\times Q$ is a finite set of
\emph{transitions}. Here, $(p,a,w,q)\in\Delta$ describes that the system can
move from state $p$ to state $q$ while replacing the letter $a$ at the
top of its pushdown by the word $w$. In particular, a pushdown holds,
in every configuration, a word $w\in A^*$ that can be accessed at the
prefix, only.

In this paper, we consider pushdowns that hold a trace
$[w]\in\traceMon$ which can, as before, be accessed at the prefix,
only. To this aim, we define the \emph{trace semantics} of a pushdown
system as follows. Fix a dependence alphabet $\structD=(A,D)$ and let
$\autP=(Q,\Delta)$ be a pushdown system. The set of
configurations $\Conf{\autP}$ of $\autP$ is $Q\times\traceMon$. For
two configurations $(p,[u]),(q,[v])\in\Conf{\autP}$, we set
$(p,[u])\vdash_\autP(q,[v])$ if there is a transition
$(p,a,w,q)\in\Delta$ and a word $x\in A^*$ such that $[u]=[ax]$ and
$[v]=[wx]$. Note that $[u]=[ax]$ is equivalent to saying
$[u]=[a]\cdot[x]$ and similarly $[v]=[wx]$ is equivalent to
$[v]=[w]\cdot[x]$. Hence $(p,s)\vdash_\autP(q,t)$ for $p,q\in Q$ and
$s,t\in\traceMon$ if there is a transition $(p,a,w,q)\in\Delta$ such
that the trace $t$ results from the trace $s$ by replacing the prefix
$[a]$ by $[w]$.

The reflexive and transitive closure of the one-step relation
$\vdash_\autP$ is the \emph{reachability relation}
$\vdash^*_\autP$. We write $\vdash$ and $\vdash^*$ instead of
$\vdash_\autP$ and $\vdash^*_\autP$ whenever the situation is clear.

The pushdown systems with trace semantics form a special case of
valence automata over graph monoids \cite{Zet21}. Zetzsche aimed at
properties of the dependence alphabet $\structD$ that ensure
decidability of the reachability relation. He obtained the following
two properties that are necessary and sufficient, respectively.

\begin{theorem}[Zetzsche~\relax{\cite{Zet21}}]
  Let $\structD=(A,D)$ be a dependence alphabet and $I=A^2\setminus D$
  the associated independence relation.
  \begin{itemize}
  \item If $(A,I)$ contains the cycle or the path on four vertices
    (i.e., $C_4$ or $P_4$) as an induced subgraph, then there exists a
    pushdown system with an undecidable reachability relation (in the
    trace semantics).
  \item If the independence relation $I$ is transitive, then the
    reachability relation (in the trace semantics) of every pushdown
    system is decidable.
  \end{itemize}
\end{theorem}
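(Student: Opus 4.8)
The statement splits into two halves with essentially nothing in common, and I would attack them by entirely different routes: an explicit reduction for the undecidability part, and a structural reduction to known‑decidable models for the decidability part.

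\emph{Undecidability.} Suppose first that $a,b,c,d\in A$ induce a $C_4$ in $(A,I)$, i.e.\ $a\parallel b$, $b\parallel c$, $c\parallel d$, $d\parallel a$ while $(a,c),(b,d)\in D$. Then $D$ restricted to $\{a,b,c,d\}$ is the disjoint union of the two edges $\{a,c\}$ and $\{b,d\}$, so by the projection characterisation above the submonoid of $\traceMon$ generated by $[a],[b],[c],[d]$ is isomorphic to the direct product $\{a,c\}^\ast\times\{b,d\}^\ast$ of two free monoids. Now take a pushdown system $\autP$ every transition $(p,x,w,q)$ of which has $x\in\{a,b,c,d\}$ and $w\in\{a,b,c,d\}^\ast$. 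In the trace semantics its stack content is a pair $(u,v)\in\{a,c\}^\ast\times\{b,d\}^\ast$, a transition may fire only when the relevant component starts with the appropriate letter, and it then pops that letter and prepends a word to each of the two components; in other words $\autP$ behaves exactly as a two‑stack pushdown automaton. Since two‑stack pushdown automata simulate Turing machines, one can choose $\autP$ so that $\vdash^\ast$ encodes the halting problem and is therefore undecidable, and as $\autP$ never mentions the remaining letters of $A$, this gives undecidability over the given $\structD$. If instead $a,b,c,d$ induce a $P_4$ in $(A,I)$, these four letters no longer generate a direct product, and one has to encode a two‑counter machine by hand: counter values are represented by buried occurrences of chosen letters, the independences $a\parallel b$, $b\parallel c$, $c\parallel d$ let one shuffle letters past one another, and the dependences enforce the LIFO discipline needed for zero‑tests. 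Carrying out this $P_4$ encoding (essentially Zetzsche's) is the main obstacle on the undecidability side.

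\emph{Decidability.} Being irreflexive, symmetric and transitive, $I$ is a disjoint union of cliques, so transitivity means $A=A_1\sqcup\cdots\sqcup A_k$ with two distinct letters independent precisely when they share a block $A_i$. Hence $\traceMon\cong\bbN^{n_1}\ast\cdots\ast\bbN^{n_k}$ (free product of free commutative monoids, $n_i=|A_i|$), because the trace monoid of a join of dependence graphs is the free product of the component trace monoids. A trace‑pushdown system over this monoid is thus a pushdown system whose stack holds a LIFO‑ordered sequence of ``counter blocks'' -- each a Parikh vector over a single $A_i$, consecutive blocks drawn from different $A_i$ -- where a transition may decrement the top counter block and prepend finitely many further blocks, and every block below the top is frozen until the blocks above it have been consumed. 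I would derive decidability of $\vdash^\ast$ for this model either by invoking Zetzsche's general decidability theorems for valence automata over the corresponding graph monoids \cite{Zet21}, or, to stay self‑contained, by a decomposition argument: the LIFO discipline lets one cut any run witnessing $\vdash^\ast$ into segments, each operating within a single counter block, which one then recombines using reachability in ordinary pushdown systems together with reachability in vector addition systems (the latter decidable by Mayr and Kosaraju). Showing that the finitely many ways in which successive counter blocks hand over control are effectively computable -- so that this decomposition is sound and complete -- is the main obstacle on the decidability side.
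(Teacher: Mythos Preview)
The paper does not prove this theorem; it is quoted from Zetzsche~\cite{Zet21} as background and motivation, with no proof given. So there is no ``paper's own proof'' to compare your proposal against.

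On the substance of your sketch: for the $C_4$ case your observation that the submonoid generated by $a,b,c,d$ is $\{a,c\}^*\times\{b,d\}^*$ is correct, and the paper in fact carries out essentially this reduction in the example immediately following the theorem statement (encoding a two-counter Minsky machine over $\{a,b,c,d\}$ with $D=\{a,b\}^2\cup\{c,d\}^2$, which is the $C_4$ case in your labelling up to renaming). Your two-stack phrasing is an equivalent route. For the $P_4$ case you explicitly defer to ``Zetzsche's encoding''; that is honest, but it is a real gap, since for $P_4$ there is no product decomposition of the trace monoid to exploit and the encoding is genuinely more delicate.

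On the decidability side, your structural analysis is right: reading ``$I$ transitive'' as $(A,I)$ being a disjoint union of cliques yields $\traceMon\cong\bbN^{n_1}*\cdots*\bbN^{n_k}$, and a trace-pushdown system over this monoid is indeed a stack of commutative counter blocks. But your two proposed finishes both fall short. Citing \cite{Zet21} is circular, since that is the very result being stated. The decomposition idea (combine pushdown reachability for the block structure with VAS reachability inside each block) is the correct intuition, but the sentence about ``finitely many ways in which successive counter blocks hand over control'' hides all the work: the summary of what a block-level computation can do to the counters is an infinite object (a VAS reachability relation), and showing that only effectively computable, finitely presentable data need to be passed between levels is precisely the technical content of Zetzsche's proof.
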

Note that the dependence alphabet $\structD=(A,D)$ with $A=\{a,b,c\}$,
$(a,b)\in D$, and $(a,c),(b,c)\in I$ is not covered by any of these
two conditions.

Differently from Zetzsche, we aim at properties of the transition
structure of the pushdown system that ensure decidability. These
properties are captured by the notion of a trace-pushdown system
defined as follows.

\begin{definition}\label{def:cpds}
  Let $\structD=(A,D)$ be a dependence alphabet. A
  \emph{trace-pushdown system} (or \emph{tPDS}, for short) is a pushdown
  system with trace semantics $\autP=(Q,\Delta)$ over $\structD$ such
  that the following hold:
  \begin{enumerate}[label=(P\arabic*)]
  \item for each $(p,a,w,q)\in\Delta$ we have
    $D(w)\subseteq D(a)$,\label{def:tPDS1}
  \item[\mylabel{def:diamond1'}{(P2')}]
    for each $(p,a,v,q),(q,b,w,r)\in\Delta$ with
    $av\parallel bw$, there is a state $q'\in Q$ with
    $(p,b,w,q'),(q',a,v,r)\in\Delta$. 
  \end{enumerate}

  The \emph{size} of $\autP=(Q,\Delta)$ is
  $\|\autP\|:=|Q|+|A|+k\cdot|\Delta|$ where $k-1$ is the maximal
  length of a word occurring in any transition of $\autP$ (i.e.,
  $\Delta\subseteq Q\times A\times A^{<k}\times Q$).
\end{definition}

\begin{figure}[h]
  \begin{center}
    \begin{tikzpicture}[nfa]
      \node[state] (p) at (0,0) {$p$};
      \node[state] (q1) at (2,1) {$q$};
      \node[state] (q2) at (2,-1) {$q'$};
      \node[state] (r) at (4,0) {$r$};
      
      \draw (p) edge node {$a\mid v$} (q1)
                edge[color=red] node[below left] {$b\mid w$} (q2)
            (q1) edge node {$b\mid w$} (r)
            (q2) edge[color=red] node[below right] {$a\mid v$} (r);
    \end{tikzpicture}
  \end{center}
  \caption{Visualization of the diamond property \ref{def:diamond1'} of
    a tPDS. It states that whenever we find the black
    transitions with $av\parallel bw$, we also find a state $q'$ with
    the red transitions. }
\end{figure}
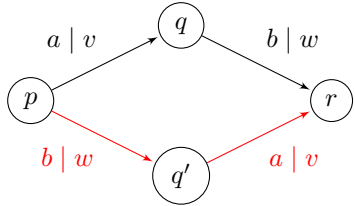

\begin{example}\label{ex:tPDS}
  Consider the dependence alphabet $\structD=(A,D)$ with
  $A=\{a,b,c\}$, $(a,c),(b,c)\in D$, and $(a,b)\notin D$. Let $\autP$
  be the pushdown system with $Q=\{p\}$ and
  $\Delta=\{(p,c,ca,p),(p,c,cab,p)\}$. It can easily be checked that
  $\autP$ is a trace-pushdown system since it satisfies
  \ref{def:tPDS1} and \ref{def:diamond1'}.

  Note that $cab\,ab\sim ca^2 b^2$ since $(a,b)\notin D$. Therefore,
  we have
  \begin{align*}
    (p,[c])&\vdash (p,[cab]) \vdash (p,[cab\,ab])=(p,[ca^2 b^2])\\
    &\vdash(p,[cab\,a^2 b^2])=(p,[ca^3b^3]) \vdash^*(p,[ca^n b^n])
  \end{align*}
  for all $n\ge3$.

  The configurations reachable from $(p,[c])$ together with the
  one-step relation between them are depicted in
  Fig.~\ref{fig:ex:tPDS}. Note that this configuration graph is
  isomorphic to the infinite grid. Hence its monadic second-order
  theory is undecidable and, consequently, there is no pushdown system
  with an isomorphic configuration graph.\qed

  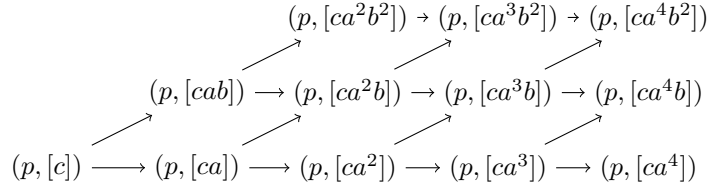
\begin{figure}[h]
    \begin{center}
      \begin{tikzpicture}
        \node (00) at (0,0) {$(p,[c])$};
        \node (10) at (2,0) {$(p,[ca])$};
        \node (20) at (4,0) {$(p,[ca^2])$};
        \node (30) at (6,0) {$(p,[ca^3])$};
        \node (40) at (8,0) {$(p,[ca^4])$};

        \node (11) at (2,1) {$(p,[cab])$};
        \node (21) at (4,1) {$(p,[ca^2b])$};
        \node (31) at (6,1) {$(p,[ca^3b])$};
        \node (41) at (8,1) {$(p,[ca^4b])$};

        \node (22) at (4,2) {$(p,[ca^2b^2])$};
        \node (32) at (6,2) {$(p,[ca^3b^2])$};
        \node (42) at (8,2) {$(p,[ca^4b^2])$};
        
        \graph{ (00) -> (10),
                (10) -> (20),
                (20) -> (30),
                (30) -> (40),
                (11) -> (21),
                (21) -> (31),
                (31) -> (41),
                (22) -> (32),
                (32) -> (42),
                (00) -> (11),
                (11) -> (22),
                (10) -> (21),
                (21) -> (32),
                (20) -> (31),
                (31) -> (42),
                (30) -> (41)};
      \end{tikzpicture}
    \end{center}
    \caption{Configuration graph of the tPDS from Example \ref{ex:tPDS}. }
    \label{fig:ex:tPDS}
  \end{figure}
\end{example}

\begin{example}
  Consider the dependence alphabet $\structD=(A,D)$ with alphabet
  $A=\{a,b,c,d\}$ and dependence relation $D=\{a,b\}^2\cup\{c,d\}^2$.
  Then $(A,I)$ is the
  graph $C_4$. Hence, by Zetzsche's result, there is a pushdown system
  $\autP$ whose reachability relation (in trace semantics) is
  undecidable. In this example, we provide such systems $\autP_1$ and
  $\autP_2$ that satisfy \ref{def:tPDS1} and \ref{def:diamond1'},
  respectively (but not both these conditions).

  To obtain the undecidability, we use Minsky's two-counter
  machines. Here, a two-counter machine is a mapping
  \[
    \autM\colon[n]\to\bigl(\{\INC_1,\INC_2\}\times\{0,\dots,n\}\bigr)
    \cup \bigl(\{\DEC_1,\DEC_2\}\times\{0,1,\dots,n\}^2\bigr)
  \]
  for some $n\in\bbN$. A configuration is a triple $(p,a,b)$ with
  $p\in\{0,\dots,n\}$ and $a,b\in\bbN$ and defines, for every
  configuration $(p,a,b)$ with $p\neq0$ a successor configuration
  $(q,c,d)$ as follows:
  \begin{itemize}
  \item If $\autM(p)=(\INC_1,r)$, then $(q,c,d)=(r,a+1,b)$.
  \item If $\autM(p)=(\INC_2,r)$, then $(q,c,d)=(r,a,b+1)$.
  \item If $\autM(p)=(\DEC_1,r,s)$ and $a>0$, then $(q,c,d)=(r,a-1,b)$.\\
    If $\autM(p)=(\DEC_1,r,s)$ and $a=0$, then $(q,c,d)=(s,a,b)$.
  \item If $\autM(p)=(\DEC_2,r,s)$ and $b>0$, then $(q,c,d)=(r,a,b-1)$.\\
    If $\autM(p)=(\DEC_2,r,s)$ and $b=0$, then $(q,c,d)=(s,a,b)$.
  \end{itemize}
  A natural number $k\in\bbN$ is accepted by $\autM$ if the two-counter
  machine $\autM$ can reach the configuration $(0,0,0)$ when started in
  configuration $(1,k,0)$. By \cite{Min61}, there exists a two-counter
  machine $\autM$ such that the set of accepted natural numbers is
  undecidable. In the following, fix such a two-counter machine $\autM$.
  \begin{enumerate}
  \item States of the PDS $\autP_1$ are ``line numbers'' of $\autM$, i.e.,
    $Q=\{0,1,\dots,n\}$. The configuration $(p,k,\ell)$ of the
    two-counter machine $\autM$ will be encoded by the configuration
    $(p,[a^kb\,c^\ell d])$ of the pushdown system $\autP_1$ with trace
    semantics. We will use that this configuration equals
    $(p,[c^\ell d\,a^kb])$. Therefore, let the set $\Delta$
    of transitions be the least set satisfying the following for all
    $p\in[n]$:
    \begin{itemize}
    \item If $\autM(p)=(\INC_1,r)$, then $(p,a,aa,r),(p,b,ab,r)\in\Delta$.
    \item If $\autM(p)=(\INC_2,r)$, then $(p,c,cc,r),(p,d,cd,r)\in\Delta$.
    \item If $\autM(p)=(\DEC_1,r,s)$, then $(p,a,\varepsilon,r),(p,b,b,s)\in\Delta$.
    \item If $\autM(p)=(\DEC_2,r,s)$, then $(p,c,\varepsilon,r),(p,d,d,s)\in\Delta$.
    \end{itemize}
    Let $\autP_1=(Q,\Delta)$. Since $D(a)=D(b)=\{a,b\}$ and
    $D(c)=D(d)=\{c,d\}$, this pushdown system satisfies
    \ref{def:tPDS1}. We consider its configurations that are reachable
    from the configuration $(1,[a^k b\,d])$ for some $k\in\bbN$. They
    all have the form $(p,[a^\ell b\,c^m d])$ for some
    $\ell,m\in\bbN$. Furthermore,
    $(1,[a^k b\,d])\vdash^*(p,[a^\ell b\,c^m d])$ iff the two-counter
    machine $\autM$ can reach the configuration $(p,\ell,m)$ from
    $(1,k,0)$. It follows that $(1,[a^k b\,d])\vdash^*(0,[b\,d])$ iff
    $k$ is accepted by $\autM$. Since this is undecidable, the
    reachability relation of the PDS with trace semantics $\autP_1$ is
    undecidable as well.
  \item We now construct another pushdown system $\autP_2$ with
    undecidable reachability problem. As in the above construction,
    all ``line numbers'' $p\in\{0,1,\dots,n\}$ are states and the
    two-counter machine's configuration $(p,k,\ell)$ is encoded by the
    configuration $(p,[a^kb\,c^\ell d])$ of the pushdown system with
    trace semantics. But depending on $\autM(p)$ for $p\in[n]$, we will
    add additional states (see below). Then the set $\Delta$ of
    transitions is defined as follows:
    \begin{itemize}
    \item If $\autM(p)=(\INC_1,r)$, then
      $(p,a,ac,p'),(p,b,bc,p'), (p',c,a,r)\in\Delta$.
    \item If $\autM(p)=(\INC_2,r)$, then
      $(p,c,ca,p'),(p,d,da,p'),(p',a,c,r)\in\Delta$.
    \item If $\autM(p)=(\DEC_1,r,s)$, then we have the following
      transitions (where the new states are not named):
      \begin{center}
        \begin{tikzpicture}[nfa]
          \node[state] (p) at (0,0) {$p$};
          \node[state] (p') at (2,0) {};
          \node[state] (q1) at (4,1.5) {};
          \node[state] (q2) at (4,0) {};
          \node[state] (q3) at (4,-1.5) {};
          \node[state] (r) at (6,1.5) {$r$};
          \node[state] (s) at (6,-1.5) {$s$};

          \draw (p) edge node {$a\mid ac$} node[below] {$b\mid bc$} (p')
                (p') edge node[above left] {$a\mid \varepsilon$} (q1)
                (p') edge node {$c\mid \varepsilon$} (q2)
                (p') edge node[below left] {$b\mid b$} (q3)
                (q1) edge node {$c\mid \varepsilon$} (r)
                (q2) edge node[below right] {$a\mid \varepsilon$} (r)
                (q2) edge node[above right] {$b\mid b$} (s)
                (q3) edge node {$c\mid \varepsilon$} (s);
        \end{tikzpicture}
      \end{center}
    \item If $\autM(p)=(\DEC_2,r,s)$, then we have the following
      transitions (where the new states are not named):
      \begin{center}
        \begin{tikzpicture}[nfa]
          \node[state] (p) at (0,0) {$p$};
          \node[state] (p') at (2,0) {};
          \node[state] (q1) at (4,1.5) {};
          \node[state] (q2) at (4,0) {};
          \node[state] (q3) at (4,-1.5) {};
          \node[state] (r) at (6,1.5) {$r$};
          \node[state] (s) at (6,-1.5) {$s$};

          \draw (p) edge node {$c\mid ca$} node[below] {$d\mid da$} (p')
                (p') edge node[above left] {$c\mid \varepsilon$} (q1)
                (p') edge node {$a\mid \varepsilon$} (q2)
                (p') edge node[below left] {$d\mid d$} (q3)
                (q1) edge node {$a\mid \varepsilon$} (r)
                (q2) edge node[below right] {$c\mid \varepsilon$} (r)
                (q2) edge node[above right] {$d\mid d$} (s)
                (q3) edge node {$a\mid \varepsilon$} (s);
        \end{tikzpicture}
      \end{center}
    \end{itemize}
    Let $\autP_2$ denote the result of this construction. Then,
    clearly, $\autP_2$ is a pushdown system. Note that any transition
    starting in a ``line number'' $p$ writes a word $w$ with
    $D(w)=A$. It follows that the PDS $\autP_2$ satisfies
    \ref{def:diamond1'}. We consider its configurations of the form
    $(p,t)$ with $p\in\{0,1,\dots,n\}$ that are reachable from the
    configuration $(1,[a^k b\,d])$ for some $k\in\bbN$. They all have
    the form $(p,[a^\ell b\,c^m d])$ for some
    $\ell,m\in\bbN$. Furthermore,
    $(1,[a^k b\,d])\vdash^*(p,[a^\ell b\,c^m d])$ iff the two-counter
    machine $\autM$ can reach the configuration $(p,\ell,m)$ from
    $(1,k,0)$. It follows that $(1,[a^k b\,d])\vdash^*(0,[b\,d])$ iff
    $k$ is accepted by $\autM$. Since this is undecidable, the
    reachability relation of the PDS with trace semantics $\autP_2$ is
    undecidable as well.\qed
  \end{enumerate}
\end{example}

The example above indicates that the conditions \ref{def:tPDS1} and
\ref{def:diamond1'} are necessary to get the decidability of the
reachability relation.

Next, we simplify the definition of trace-pushdown systems showing
that, in \ref{def:diamond1'}, it suffices to require $a\parallel b$
instead of $av\parallel bw$.

\begin{lemma}\label{lem:(P2)-suffices}
  Let $\structD=(A,D)$ be a dependence alphabet and $\autP=(Q,\Delta)$
  a pushdown system satisfying \ref{def:tPDS1}. Then
  \ref{def:diamond1'} holds if, and only if,
  \begin{enumerate}[label=(P2)]
  \item for each $(p,a,v,q),(q,b,w,r)\in\Delta$ with $a\parallel b$,
    there is a state $q'\in Q$ with
    $(q,b,w,q'),(q',a,v,r)\in\Delta$.\label{def:diamond1}
  \end{enumerate}
\end{lemma}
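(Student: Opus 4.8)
The plan is to prove the two implications separately; only one direction uses \ref{def:tPDS1}.

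For \ref{def:diamond1} $\Rightarrow$ \ref{def:diamond1'} I would argue directly. Suppose $(p,a,v,q),(q,b,w,r)\in\Delta$ with $av\parallel bw$. Since $a\in\alphabet{av}$ and $b\in\alphabet{bw}$, the hypothesis $av\parallel bw$ in particular gives $(a,b)\in I$, that is $a\parallel b$. Hence \ref{def:diamond1} applies to this very pair and supplies the state $q'$ required by \ref{def:diamond1'}. No use of \ref{def:tPDS1} is needed here.

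The substantial direction is \ref{def:diamond1'} $\Rightarrow$ \ref{def:diamond1}. Given $(p,a,v,q),(q,b,w,r)\in\Delta$ with $a\parallel b$, the idea is to \emph{strengthen} the parallelism to $av\parallel bw$ and then apply \ref{def:diamond1'} directly. So the heart of the proof is the following claim: \emph{if $\autP$ satisfies \ref{def:tPDS1}, $(p,a,v,q),(q,b,w,r)\in\Delta$ and $a\parallel b$, then $av\parallel bw$.} To prove it, I would use that $a\parallel b$ means $b\notin D(a)$ and $a\notin D(b)$, and that \ref{def:tPDS1} applied to the two transitions gives $D(c)\subseteq D(v)\subseteq D(a)$ for every $c\in\alphabet{v}$ and $D(d)\subseteq D(w)\subseteq D(b)$ for every $d\in\alphabet{w}$ (in particular $c\in D(a)$ and $d\in D(b)$). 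Then one checks $(c,d)\in I$ for all $c\in\alphabet{av}=\{a\}\cup\alphabet{v}$ and $d\in\alphabet{bw}=\{b\}\cup\alphabet{w}$ by a short case distinction. The main case is $c\in\alphabet{v}$, $d\in\alphabet{w}$: from $D(c)\subseteq D(a)$ and $b\notin D(a)$ we get $b\notin D(c)$, hence $c\notin D(b)$ by symmetry of $D$, and since $D(d)\subseteq D(b)$ this gives $c\notin D(d)$, i.e.\ $(c,d)\notin D$; moreover $c\neq d$, since $c=d$ would give $c\in D(a)$ (hence $a\in D(c)$) together with $D(c)\subseteq D(b)$, whence $a\in D(b)$, contradicting $a\parallel b$. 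The three remaining cases ($c=a$, $d=b$; $c=a$, $d\in\alphabet{w}$; $c\in\alphabet{v}$, $d=b$) are analogous and simpler. Thus $\alphabet{av}\times\alphabet{bw}\subseteq I$, i.e.\ $av\parallel bw$, and \ref{def:diamond1'} now produces the state $q'$ demanded by \ref{def:diamond1}.

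I do not expect a genuine obstacle. The only point that needs care is the bookkeeping inside the claim: one must read \ref{def:tPDS1} in the sharp form ``$D(c)\subseteq D(a)$ for every letter $c$ of $v$'' (not merely $\alphabet{v}\subseteq D(a)$), so that independence of letters of $v$ from $b$ propagates, via $D(b)$, to independence from letters of $w$; and one must not overlook the degenerate possibility that a letter occurs in both $v$ and $w$. Beyond that, everything is a routine manipulation of the dependence relation.
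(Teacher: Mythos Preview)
Your proposal is correct and follows the same strategy as the paper: in the nontrivial direction you upgrade $a\parallel b$ to $av\parallel bw$ using \ref{def:tPDS1} and then invoke \ref{def:diamond1'}. The paper's execution is a bit more compact---it argues by contradiction with a single chain $c\in D(d)\subseteq D(bw)=D(b)$, hence $b\in D(c)\subseteq D(av)=D(a)$, contradicting $a\parallel b$---so your four-way case split is not needed, and your separate check that $c\neq d$ is redundant (reflexivity of $D$ already forces $c\neq d$ once $(c,d)\notin D$).
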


\begin{proof}
  First, suppose \ref{def:diamond1} holds. To show
  \ref{def:diamond1'}, let $(p,a,v,q),(q,b,w,r)\in\Delta$ with
  $av\parallel bw$. Then, in particular, $a\parallel b$. Hence, by
  \ref{def:diamond1}, there is a state $q'$ as required by
  \ref{def:diamond1'}.

  Conversely, suppose \ref{def:diamond1'} holds and let
  $(p,a,v,q),(q,b,w,r)\in\Delta$ with $a\parallel b$. We want to show
  $av\parallel bw$. Towards a contradiction, let $(c,d)\in D$ where
  $c$ is some letter from $av$ and $d$ some letter from $bw$. Then we
  have $c\in D(d)\subseteq D(bw)=D(b)\cup D(w)=D(b)$ since, by
  \ref{def:tPDS1}, $D(w)\subseteq D(b)$. Consequently, $(c,b)\in D$
  and therefore $b\in D(c)\subseteq D(av)=D(a)\cup D(v)=D(a)$ since,
  by \ref{def:tPDS1}, $D(v)\subseteq D(a)$. Consequently,
  $(a,b)\in D$, contradicting our assumption $a\parallel b$. Thus,
  indeed, $av\parallel bw$. Since $\autP$ also satisfies
  \ref{def:diamond1'}, we obtain some state $q'$ with
  $(p,b,w,q'),(q',a,v,r)\in\Delta$ as required. Hence, indeed, $\autP$
  satisfies \ref{def:diamond1}.
\end{proof}

Let $\autP=(Q,\Delta)$ be a trace-pushdown system and let
$C,D\subseteq\Conf{\autP}$ be two sets of configurations. We write
$C\vdash^*D$ if there are $c\in C$ and $d\in D$ with $c\vdash^*d$,
i.e., if some configuration from $D$ is reachable from some
configuration from $C$. If $C=\{c\}$ ($D=\{d\}$, resp.) is a
singleton, we also write $c\vdash^*D$ ($C\vdash^*d$, resp.). We also
use similar notations for the one-step relation $\vdash$.

We consider the following decision problem: given a trace-pushdown
system and two sets of configurations $C$ and $D$, does
$C\vdash^* D$ hold?  To solve this problem, it is instructive to
first look at its solution for pushdown systems.

So suppose $\autP=(Q,\Delta)$ is a pushdown system and $C,D$ two sets
of configurations. In order to decide whether $C\vdash^* D$, it
suffices to be able to decide whether, for two states $p$ and $q$ and
two languages $K$ and $L$, we have
\begin{equation}
  \label{eq:reachability}
  \{p\}\times K\vdash^*\{q\}\times L\,.
\end{equation}
Let $R=\{(u,v)\mid (p,u)\vdash^*(q,v)\}$. Then \eqref{eq:reachability}
holds iff $K^R\cap L\neq\emptyset$ (which is equivalent to
$K\cap {}^RL\neq\emptyset$).  Caucal \cite{Cau88} gave an algorithm
that, from the transition relation $\Delta$ and two states $p$ and $q$
constructs finitely many regular languages $U_i,V_i,W_i$ such that
$R=\bigcup_{1\le i\le n}(U_i\times V_i)\cdot\{(w,w)\mid w\in W_i\}$,
i.e., he proved that $R$ is effectively prefix-recognizable. From
these regular languages, one can construct a transducer $\autT$
accepting the relation $R$. If $K$ is regular, then an NFA for $K^R$
can be computed from an NFA for $K$ and the transducer $\autT$, i.e.,
$K^R$ is effectively regular. Hence, if also $L$ is regular, one can
decide whether $K^R\cap L\neq\emptyset$, i.e., whether
\eqref{eq:reachability} holds.  Alternatively, if $K$ and $L$ are
regular, then $K\cap {}^RL$ is effectively regular such that one can
decide the emptiness of this language.

Now, suppose $\autP=(Q,\Delta)$ is a trace-pushdown system. Then, as
above, we have to decide whether
\[
  \{p\}\times\calK\vdash^* \{q\}\times\calL
\]
for states $p$ and $q$ and trace languages $\calK$ and $\calL$. Let
$\calR=\{([u],[v])\mid (p,[u])\vdash^*(q,[v])\}$. One can, and we
actually will, show that this relation is a rational trace relation
(but not prefix-recognizable as in the above situation). The problem
occurs in the next step: there are two generalizations (rationality
and recognizability) of regularity to trace languages. But neither of
them serves our purpose since, by Lemma~\ref{lem:nonPreservation},
\begin{itemize}
\item there is a rational trace language $\calL$ and a rational
  trace relation $\calR$ such that ${}^\calR\calL$ is not rational
  and
\item there is a recognizable trace language $\calK$ and a rational
  trace relation $\calR$ such that $\calK^\calR$ is not recognizable.
\end{itemize}

The solution will be to prove that $\calR$ is not just rational, but
even ``lc-rational''. Since this is a new notion, the following
section will introduce and study these lc-rational
relations. Afterwards, we will show that the relation $\calR$ is
indeed lc-rational which will allow to complete the above program.

\section{LC-Rational Trace Relations}
\label{sec:lc-rational}
Above, we explained that we lack a class of rational trace relations
$\calR$ whose left- and right-application preserves rationality and/or
recognizablity. It turns out that the more fundamental property is the
composition of relations, i.e., this section aims at a class
$\bbC_{\traceMon}$ of rational trace relations that is closed under
composition. Recall that a trace relation $\calR$ is rational if, and
only if, there exists a rational word relation $R$ with
$\calR=[R]$. We will therefore first define a class $\bbC_{A^*}$
of rational word relations $R$ that is closed under composition and
satisfies
\begin{equation}
  [R_1\circ R_2]=[R_1]\circ [R_2]
  \label{eq:circ-and-eta}
\end{equation}
for any relations $R_1$ and $R_2$ in $\bbC_{A^*}$ (setting
$\bbC_{\traceMon}=\{[R]\mid R\in\mathbb C_{A^*}\}$ will then
ensure that $\bbC_{\traceMon}$ is closed under composition,
cf.~\cref{def:lcRatTraceRel}).

But first, we show that Eq.~\eqref{eq:circ-and-eta} does not hold for
arbitrary rational word relations; the example also demonstrates that
the composition of two rational trace relations need not be rational.

\begin{example}\label{ex:noComposition}
  Suppose there are $a,b,c,d\in A$ with $(a,b)\in D$ and
  $c\parallel d$. Consider the rational word relations
  \begin{align*}
    R_1&=\{(abab,cdcd)\}\cdot\bigl\{(ab,cd)\}^*=\bigl\{\bigl((ab)^n,(cd)^n\bigr)\bigm| n\ge2\bigr\}\text{ and}\\
    R_2&=\{(c,a)\}^*\,\{(d,b)\}^*=\{(c^m d^n,a^m b^n)\mid m,n\ge0\}\,.
  \end{align*}

  Since $c\parallel d$, we get $(cd)^n\sim c^n d^n$. Consequently
  \begin{align*}
    \calR_1:=[R_1] &=\{([ab]^n,[c^n d^n])\mid n\ge2\}\text{ and}\\
    \calR_2:=[R_2] &=\{([c^m d^n],[a^m b^n])\mid m,n\ge0\}\,.
  \end{align*}
  Hence $\calR:=\calR_1\circ\calR_2=\{([ab]^n,[a^n b^n])\mid n\ge2\}$.
  Thus, we have
  $[R_1\circ R_2]=[\emptyset]=\emptyset\neq \calR=
  \calR_1\circ\calR_2=[R_1]\circ[R_2]$, i.e.,
  Eq.~\eqref{eq:circ-and-eta} does not hold for all rational word
  relations.

  To show that the trace relation $\calR=\calR_1\circ\calR_2$ is not
  rational, let $R\subseteq A^*\times A^*$ be a word relation with
  $[R]=\calR$. Since $(a,b)\in D$, any equivalence class $[v]$ is a
  singleton for $v\in\{a,b\}^*$. Hence $[R]=\calR$ implies
  $R=\{((ab)^n,a^n b^n)\mid n\ge2\}$.  Consider the homomorphism
  $\eta\colon A^*\times A^*\to A^*\colon (u,v)\mapsto v$. Then the
  language $\eta(R)$ equals $\{a^n b^n\mid n\ge2\}$ which is not
  regular (i.e., not rational). Since the class of rational sets is
  closed under homomorphic images, it follows that $R$ and therefore
  the composition $\calR=\calR_1\circ\calR_2$ of two rational trace
  relations is not necessarily rational.
\end{example}

\subsection{LC-Rational Word Relations}

It is well-known (cf.\ \cite{Ber79}) that left- and right-application
of rational word relations to regular languages yield regular
languages, that the composition of rational word relations is
rational, and that the inverse of a rational word relation is
rational. The following theorem summarizes the algorithmic aspects of
these facts.

\begin{theorem}\label{thm:ratWordRel}{\ }
  \begin{enumerate}[label=(R\arabic*)]
  \item From an NFA $\autA$ and a transducer $\autT$, one can compute
    in polynomial time NFAs $\autA_1$ and $\autA_2$ such that
    $L(\autA_1)=L(\autA)^{R(\autT)}$ and
    $L(\autA_2)={}^{R(\autT)}L(\autA)$ (cf.~\cite[Cor.~III.4.2]{Ber79}).
  \item From transducers $\autT_i$ for $1\le i\le n$, one can compute
    a transducer $\autT$ with
    \begin{align*}
      R(\autT)&=R(\autT_1)\circ R(\autT_2)\circ\cdots\circ R(\autT_n)\\
              &=\left\{(u,v)\in A^*\times A^*\,\middle|\,
              \begin{matrix}
                \exists u_i\in A^*\colon u=u_0,u_n=v\\
                (u_{i-1},u_i)\in R(\autT_i)\text{ for all }1\le i\le n
              \end{matrix}
              \right\}\,.
    \end{align*}
    This computation can be carried out in time $O(t^n)$ where
    $t=\sum_{1\le i\le n}\|\autT_i\|$ is the total size of the
    transducers $\autT_i$ (cf.~\cite[Thm.~III.4.4]{Ber79}).
  \item From a transducer $\autT$, one can compute in polynomial time
    a transducer $\autT'$ such that
    $R(\autT')=\{(v,u)\mid (u,v)\in R(\autT)\}$.
  \end{enumerate}
\end{theorem}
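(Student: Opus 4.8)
The three items are essentially classical facts about transducers restated with explicit complexity bounds, so the plan is to recall the standard product-automaton constructions and carefully track their sizes. For (R1), given an NFA $\autA=(Q_1,A,I_1,T_1,F_1)$ and a transducer $\autT=(Q_2,A,I_2,T_2,F_2)$, I would build a product automaton on state set $Q_1\times Q_2$ whose transitions synchronise the first coordinate of $\autT$'s label with a move of $\autA$ (and let $\autA$ idle when $\autT$ reads $\varepsilon$ on its first component), emitting the second component of $\autT$'s label. The resulting machine is an $\varepsilon$-NFA accepting $L(\autA)^{R(\autT)}$; applying the polynomial-time $\varepsilon$-elimination mentioned in the preliminaries yields the NFA $\autA_1$. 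For $\autA_2$, I would do the symmetric construction synchronising the \emph{second} component of $\autT$ with $\autA$ and emitting the first component — equivalently, first apply (R3) to $\autT$ and then reuse the construction for $\autA_1$. In both cases the number of states is $|Q_1|\cdot|Q_2|$ and the number of transitions is polynomial in $\|\autA\|+\|\autT\|$, so the whole computation (including $\varepsilon$-removal) is polynomial.

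For (R2), I would first treat the binary case $n=2$: given $\autT_1=(P,A,I_1,T_1,F_1)$ and $\autT_2=(Q,A,I_2,T_2,F_2)$, form the product on $P\times Q$ that, to read a transition $(u,v)$ of $\autT_1$, lets $\autT_2$ consume $v$ on its input side while this product emits on its output side whatever $\autT_2$ produces; one must allow $\autT_2$ to make several $\varepsilon$-input moves between two moves of $\autT_1$, so intermediate states and $\varepsilon$-transitions are needed, followed again by $\varepsilon$-elimination. This yields a transducer for $R(\autT_1)\circ R(\autT_2)$ of size polynomial in $\|\autT_1\|+\|\autT_2\|$. Iterating $n-1$ times gives a transducer for the $n$-fold composition; since each composition step can multiply the state count by the size of the next transducer, after $n$ steps the size is bounded by $\big(\sum_i\|\autT_i\|\big)^{O(n)}=O(t^n)$, matching the claimed bound. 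Here one must be a little careful to state the bound so that it genuinely dominates the iterated product; associating the compositions left-to-right and bounding each intermediate transducer by $t$ times the previous size gives the clean $O(t^n)$ estimate.

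For (R3), I would simply swap the two components of every transition label of $\autT$: define $\autT'=(Q,A,I,T',F)$ with $T'=\{(p,(v,u),q)\mid (p,(u,v),q)\in T\}$. The length constraint $|uv|\le 1$ is symmetric in the two components, so $\autT'$ is again a valid transducer, it has the same size as $\autT$, and by construction a path $p\xrightarrow{(u,v)}q$ in $\autT$ becomes a path $p\xrightarrow{(v,u)}q$ in $\autT'$; hence $R(\autT')=\{(v,u)\mid(u,v)\in R(\autT)\}$ and the construction is trivially polynomial (indeed linear). The only real obstacle in the whole theorem is bookkeeping in (R2): making sure the $\varepsilon$-input moves of the ``downstream'' transducer are handled correctly in the product (they must be allowed to fire without the upstream machine moving), and then phrasing the size analysis of the $n$-fold iteration so that the bound is honestly $O(t^n)$ rather than something like $t^{2^n}$; choosing a left-associated iteration and a careful inductive size bound resolves this.
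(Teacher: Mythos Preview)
Your treatment of (R1) and (R3) is essentially identical to the paper's: a product of the NFA with the transducer, synchronising on the appropriate tape and emitting the other, and a simple label swap for the inverse. The paper also constructs an $\varepsilon$-NFA in (R1) and leaves the $\varepsilon$-removal implicit, so there is no substantive difference there.

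In (R2) you take a genuinely different route. The paper builds the composite transducer in a single shot as an $n$-way product: the state set is $\prod_i Q_i$, and a transition either moves one component $\autT_i$ alone on an $(\varepsilon,\varepsilon)$-step, passes a letter from the output of $\autT_i$ to the input of $\autT_{i+1}$ (again labelled $(\varepsilon,\varepsilon)$ in the product), or performs an input move in $\autT_1$ or an output move in $\autT_n$. This gives $|Q|=\prod_i|Q_i|\le t^n$ directly and makes the time bound $O(t^n)$ immediate without any inductive bookkeeping. Your approach of iterating the binary composition $n-1$ times is also correct and reaches the same bound, but only because the binary product has size bounded by the \emph{product} of the input sizes, so the intermediate sizes grow as $t^2,t^3,\dots,t^n$ and the total work telescopes to $O(t^n)$; you correctly flag that a sloppier analysis could yield $t^{2^n}$. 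One small point: since the paper's transducers already admit $(\varepsilon,\varepsilon)$-labelled transitions, no $\varepsilon$-elimination step is needed in the composite transducer, so that part of your sketch is superfluous (and would risk spoiling the size bound if carried out naively). The advantage of the paper's direct $n$-ary construction is that the bound is visible by inspection; the advantage of yours is that it reduces everything to the familiar binary case.
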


\begin{proof}
  We only sketch the construction of the automata and transducers
  claimed to exist.
  \begin{enumerate}[label=(R\arabic*)]
  \item Here, we construct an $\varepsilon$-NFA $\autA_1$ (the
    construction of the $\varepsilon$-NFA $\autA_2$ is analogous). The
    idea is to run the NFA $\autA$ and the transducer $\autT$ in
    parallel; the NFA checks whether the input is accepted and the
    transducer, at the same time, produces its output. So let
    $\autA=(Q,A,I,T,F)$ and $\autT=(Q',A,I',T',F')$. Then
    $\autA_1=(Q\times Q',A,I\times I',T_1,F\times F')$ with
    $((p,p'),b,(q,q'))\in T_1$ iff
    \begin{itemize}
    \item there exists $a\in A$ with $(p,a,q)\in T$ and
      $(p',(a,b),q')\in T'$ or
    \item $p=q$ and $(p',(\varepsilon,b),q')\in T'$.
    \end{itemize}
    Hence we have $(p,p')\xrightarrow{\ v\ }(q,q')$ in
    $\autA_1$ iff there is a word $u\in A^*$ with
    $p\xrightarrow{\ u\ }q$ in $\autA$ and
    $p'\xrightarrow{(u,v)}q'$ in $\autT$.
  \item Let $\autT_i=(Q_i,A,I_i,T_i,F_i)$ for $1\le i\le n$ be
    transducers. We construct a transducer $\autT=(Q,A,I,T,F)$ with
    $Q=\prod_{1\le i\le n}Q_i$, $I=\prod_{1\le i\le n}I_i$, and
    $F=\prod_{1\le i\le n}F_i$. To define the set of transitions, let
    $\overline{p}=(p_i)_{1\le i\le n}$ and
    $\overline{q}=(q_i)_{1\le i\le n}$ be two states from $Q$ and let
    $a,b\in A\cup\{\varepsilon\}$. Then
    $(\overline{p},(a,b),\overline{q})\in T$ if, and only if, one of
    the following hold:
    \begin{itemize}
    \item $a=b=\varepsilon$ and there exist $1\le i<n$ and $c\in A$
      with $(p_i,(\varepsilon,c),q_i)\in T_i$ and
      $(p_{i+1},(c,\varepsilon),q_{i+1})\in T_{i+1}$ and $p_j=q_j$ for
      $j\notin\{i,i+1\}$
    \item $a=b=\varepsilon$ and there exists $1\le i\le n$ with
      $(p_i,(\varepsilon,\varepsilon),q_i)\in T_i$ and $p_j=q_j$ for
      $j\neq i$
    \item $a=\varepsilon$, $(p_n,(\varepsilon,b),q_n)\in T_n$ and
      $p_j=q_j$ for $j<n$
    \item $b=\varepsilon$, $(p_1,(a,\varepsilon),q_1)\in T_1$ and
      $p_j=q_j$ for $j>1$
    \end{itemize}
    By induction on the length of the words $u$ and $v$, one then gets
    $\overline{p}\xrightarrow{(u,v)}\overline{q}$ in $\autT$ if, and
    only if, there are words $u_i$ with $u=u_0$,
    $p_i\xrightarrow{(u_{i-1},u_i)}q_i$ in $\autT_i$ for all
    $1\le i\le n$, and $u_n=v$ implying that $\autT$ has the desired
    semantics. The size of $\autT$ can be estimated by
    \[
      \|\autT\|= \prod_{1\le i\le n} |Q_i|+|A|
      \le \prod_{1\le i\le n} (|Q_i|+|A|)
      = \prod_{1\le i\le n}\|\autT_i\|
      \le t^n
    \]
    and it is easily seen that $\autT$ can be constructed in time
    $O(t^n)$.
  \item Here, we simple switch input and output: a transition
    $(p,(a,b),q)$ is replaced by the transition $(p,(b,a),q)$ which
    can be done in polynomial time.\qedhere
  \end{enumerate}
\end{proof}

Now suppose $\calR\subseteq\traceMon^2$ is a rational trace relation,
i.e., $\calR=[R(\autT)]$ for some transducer $\autT$. Then, by (R3)
above, also $\inv{\calR}=[\inv{R(\autT)}]$ is an efficiently rational
trace relation. Thus, we have an analogue of (R3) in the trace monoid,
similar analogues of (R1) and (R2) fail. Regarding (R1), there are two
possible formulations in the trace monoid, but
\cref{lem:nonPreservation} will demonstrate that the application of
rational relations does neither preserve the rationality nor the
recognizability of a trace language. Regarding (R2),
\cref{ex:noComposition} above demonstrates that the class of
rational trace relations is not closed under composition.

Note that Eq.~\eqref{eq:circ-and-eta} fails in \cref{ex:noComposition}
since there, $R_1\circ R_2=\emptyset$ and
$\calR_1\circ\calR_2\neq\emptyset$. The reason is that there are words
$t,u,u',v'\in A^*$ such that $(t,u)\in R_1$, $(u',v')\in R_2$, and
$u\sim u'$ distinct such that $([t],[v'])\in\calR_1\circ\calR_2$, but
$(t,v')\notin R_1\circ R_2$.  The following definition circumvents
this problem.

\begin{definition}
  A relation $R\subseteq A^*\times A^*$ is \emph{left-closed}
  if $\mathord{\sim}\circ R\subseteq R\circ\mathord{\sim}$, i.e., 
  \[
  \bigl(\exists u'\in A^*\colon u\sim u'\relation{R} v'\bigr)\ \Longrightarrow\
  \bigl(\exists v\in A^*\colon u\relation{R}v\sim v'\bigr)
  \]
  holds for all $u,v'\in A^*$.  The relation $R$ is \emph{lc-rational}
  if it is left-closed and rational. We call a transducer $\autT$
  \emph{left-closed} or an \emph{lc-transducer} if the relation
  $R(\autT)$ is left-closed.
\end{definition}

Note that the definition of a left-closed transducer is based on the
relation accepted by the transducer. 

A very simple example for an lc-rational relation is the identity relation
$\Id_{A^*}=\{(u,u)\mid u\in A^*\}$: if
$u\sim u'\relation{\Id_{A^*}}v'$, then $u'=v'$. Setting $v=u$,
we obtain $u\relation{\Id_{A^*}}v=u\sim u'=v'$. Since this relation
is clearly rational, it is indeed lc-rational. Other examples are
$A^*\times\{\varepsilon\}$ and $\{\varepsilon\}\times A^*$.

\begin{example}\label{ex:superword}
  A word $u\in A^*$ is a \emph{subword} of $v\in A^*$ if
  $u=u_1 u_2\cdots u_n$ and $v=u_1 v_1 u_2 v_2\cdots u_n v_n$ for some
  $n\in\monoN$ and $u_1,u_2,\dots,u_1,v_1,v_2,\dots,v_n\in A^*$. In this
  case, we write $u\preceq v$. The subword-relation is rational since
  $\mathord{\preceq}=\{(a,a),(\varepsilon,a)\mid a\in A\}^*$.

  Suppose $a\parallel c$, $a\nparallel b$, and $b\nparallel c$.  Then
  $ca\sim ac\preceq abc$, but there is no superword of $ca$ that is
  equivalent to~$abc$. Hence the subword-relation is not left-closed.
  
  We now show that the inverse relation $\succeq$ (the superword
  relation) is left-closed. Suppose $u=x\,ab\,y$ with
  $a\parallel b$ and $x\,ba\,y=u'\succeq v'$. If $v'=\varepsilon$, we get
  $u\succeq\varepsilon\sim v'$. So suppose $v'\neq\varepsilon$. Then there
  are $n\in\monoN$, $u_1\in A^*$,
  $u_2,\dots,u_n,v_1,v_2,\dots,v_n\in A^+$, and $u_{n+1}\in A^*$ such
  that $v_1v_2\cdots v_n=v'$ and
  $u_1v_1u_2v_2\cdots u_nv_nu_{n+1}=u'=x\,ba\,y$. This gives two
  decompositions of the word $u'$ into the blocks $u_i$ and $v_i$ on
  the one hand, and into the factors $x$, $ba$, and $y$ on the other
  hand. The factor $ba$ from the second factorization can be covered
  by some factor $u_i$, by some factor $v_i$, or it belongs to some
  consecutive factors $u_i$ and $v_i$ or $v_i$ and $u_{i+1}$ of the
  former factorization---in any of these cases, we construct a word
  $v$ with $u\succeq v\sim v'$:
  \begin{itemize}
    \item Suppose $x=u_1v_1\cdots u_{i-1}v_{i-1}x'$, $u_i=x'bay'$, and
    $y=y'v_i u_{i+1}v_{i+1}\cdots u_{n+1}$. Then 
    \begin{align*}
      u &= u_1v_1u_2\cdots u_{i-1}v_{i-1}\, x'ab y'\,v_i u_{i+1}v_{i+1}\cdot u_{n+1}\\
      &\succeq v_1v_2\cdots v_n=v'\,.
    \end{align*}
    Thus, setting $v:=v'$ yields $u\succeq v\sim v'$.
    \item Suppose $x=u_1v_1\cdots u_{i-1}v_{i-1}u_i x'$, $v_i=x'bay'$, and
    $y=y' u_{i+1}v_{i+1}\cdots u_{n+1}$. Then 
    \begin{align*}
      u &=u_1v_1\cdots u_{i-1}v_{i-1}u_i\,x'aby'\,u_{i+1}v_{i+1}\cdots v_n u_{n+1}\\
      &\succeq v_1\cdots v_{i-1}\,x'aby'\,v_{i+1}\cdots v_n \\
      &\sim v_1\cdots v_{i-1}\,v_i\,v_{i+1}\cdots v_n = v'\,.
    \end{align*}
    Thus, setting $v:=v_1\cdots v_{i-1}\,x'aby'\,v_{i+1}\cdots v_n$ yields
    $u\succeq v\sim v'$.
    \item Suppose $x=u_1v_1\cdots u_{i-1}v_{i-1}x'$, $u_i=x'b$,
    $v_i=ay'$, and $y=y'u_{i+1}v_{i+1}\cdots u_{n+1}$. Then
    \begin{align*}
      u&=u_1v_1\cdots u_{i-1}v_{i-1}\,x'b\,
      ay'\,u_{i+1}v_{i+1}\cdots u_{n+1}\\
      &\succeq v_1\cdots v_{i-1}\,ay'\,v_{i+1}\cdots v_n=v'\,.  
    \end{align*}
    Thus, setting $v:=v'$ yields $u\succeq v\sim v'$.
    \item Finally, suppose $x=u_1v_1\cdots u_ix'$, $v_i=x'b$,
    $u_{i+1}=ay'$, and $y=y' v_{i+1}u_{i+2}v_{i+2}\cdots u_{n+1}$. Then
    \begin{align*}
      u&=u_1v_1\cdots u_i\,x'b\,ay'\,v_{i+1}u_{i+2}v_{i+2}\cdots u_{n+1}\\
      &\succeq v_1\cdots v_{i-1}\,x'b\,v_{i+1}\cdots v_{n+1} = v\,.
    \end{align*}
    Thus, setting $v:=v'$ yields $u\succeq v\sim v'$.
  \end{itemize}
  
  Since $\sim$ is the
  least equivalence relation identifying $xaby$ with $xbay$ for
  $a\parallel b$, this proves that the superword-relation $\succeq$ is
  left-closed.
\end{example}

We next show that the class of lc-rational word relations has the
desired properties: it is closed under composition and the
homomorphism $[.]$ commutes with composition.

\begin{proposition}\label{prop:leftClosednessAndComposition}
  Let $R_1,R_2\subseteq A^*\times A^*$.
  \begin{enumerate}[label=(\roman*)]
  \item If $R_2$ is left-closed, then Eq.~\eqref{eq:circ-and-eta}
    holds, i.e., $[R_1\circ R_2]=[R_1]\circ[R_2]$.
  \item If $R_1$ and $R_2$ are lc-rational, then $R_1\circ R_2$ is
    lc-rational. More precisely, from left-closed transducers
    $\autT_i$ for $1\le i\le n$, one can compute a left-closed
    transducer $\autT$ with
    \[
      [R(\autT)]=[R(\autT_1)]\circ [R(\autT_2)]\circ\cdots[R(\autT_n)]
    \]
    in time $O(t^n)$ where $t$ is the total size of the transducers
    $\autT_i$.
  \end{enumerate}
\end{proposition}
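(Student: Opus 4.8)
For part (i), the plan is to prove the two inclusions $[R_1\circ R_2]\subseteq[R_1]\circ[R_2]$ and $[R_1]\circ[R_2]\subseteq[R_1\circ R_2]$ separately. The first inclusion is trivial and needs no hypothesis: if $([u],[w])\in[R_1\circ R_2]$, there are representatives $u,w$ and a word $v$ with $(u,v)\in R_1$ and $(v,w)\in R_2$, so $([u],[v])\in[R_1]$ and $([v],[w])\in[R_2]$. For the reverse inclusion, suppose $([u],[w])\in[R_1]\circ[R_2]$; then there is a trace $[v]$ with $([u],[v])\in[R_1]$ and $([v],[w])\in[R_2]$, which means there are words $u_0\sim u$, $v_0$, $v_1\sim v_0$, $w_0\sim w$ with $(u_0,v_0)\in R_1$ and $(v_1,w_0)\in R_2$. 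Now $v_0\sim v_1$, so $v_0\mathrel{\sim}v_1\mathrel{R_2}w_0$; applying left-closedness of $R_2$ (in the form $\mathord{\sim}\circ R_2\subseteq R_2\circ\mathord{\sim}$) yields a word $w_1$ with $v_0\mathrel{R_2}w_1\sim w_0$. Then $(u_0,w_1)\in R_1\circ R_2$ and $[u_0]=[u]$, $[w_1]=[w_0]=[w]$, so $([u],[w])\in[R_1\circ R_2]$.

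For part (ii), the plan is to reduce everything to the binary case $n=2$ and then iterate, matching the $O(t^n)$ bound via the construction in Theorem~\ref{thm:ratWordRel}(R2). First one shows that the composition of two left-closed relations is again left-closed: this is a direct diagram chase, since $\mathord{\sim}\circ(R_1\circ R_2)=(\mathord{\sim}\circ R_1)\circ R_2\subseteq(R_1\circ\mathord{\sim})\circ R_2=R_1\circ(\mathord{\sim}\circ R_2)\subseteq R_1\circ(R_2\circ\mathord{\sim})=(R_1\circ R_2)\circ\mathord{\sim}$, using left-closedness of $R_1$ and then of $R_2$. Rationality of $R_1\circ R_2$ is standard (Theorem~\ref{thm:ratWordRel}(R2)). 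Hence $R_1\circ R_2$ is lc-rational, and by induction so is $R(\autT_1)\circ\cdots\circ R(\autT_n)$. For the effective statement, one invokes Theorem~\ref{thm:ratWordRel}(R2) directly to obtain in time $O(t^n)$ a transducer $\autT$ with $R(\autT)=R(\autT_1)\circ\cdots\circ R(\autT_n)$; by the just-proved closure property this transducer is automatically left-closed, and by part~(i), applied inductively (each $R(\autT_i)$ for $i\ge 2$ being left-closed, so every right-hand segment $R(\autT_i)\circ\cdots\circ R(\autT_n)$ is left-closed), we get $[R(\autT)]=[R(\autT_1)]\circ\cdots\circ[R(\autT_n)]$.

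The only subtle point — and thus the main thing to get right — is the bookkeeping in part~(i)'s hard inclusion: one must be careful that $[R_1]$ and $[R_2]$ are defined via \emph{arbitrary} representatives on both coordinates, so the intermediate trace $[v]$ is witnessed by two \emph{a priori different} words $v_0,v_1$ with $v_0\sim v_1$, and it is precisely the $\sim$-gap between them that left-closedness of $R_2$ is designed to bridge. Everything else (the two diagram chases, the appeal to (R2)) is routine. I would present part~(i) first in full, then derive closure-under-composition of left-closedness as a one-line lemma inside the proof of part~(ii), and finally assemble the effective statement.
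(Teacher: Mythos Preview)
Your proposal is correct and follows essentially the same approach as the paper: both inclusions in part~(i) are handled exactly as you describe (the paper uses $v,v'$ where you write $v_0,v_1$), and part~(ii) uses the identical one-line diagram chase $\mathord{\sim}\circ R_1\circ R_2\subseteq R_1\circ\mathord{\sim}\circ R_2\subseteq R_1\circ R_2\circ\mathord{\sim}$ followed by an appeal to Theorem~\ref{thm:ratWordRel}(R2) and part~(i). Your explicit remark about the inductive application of~(i) to get the $n$-ary trace-level equation is slightly more detailed than the paper's terse ``follows using Theorem~\ref{thm:ratWordRel}(R2) and statement~(i)'', but the argument is the same.
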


\begin{proof}
  To demonstrate the first claim, let $R_2$ be lc-rational.  For the
  inclusion $[R_1\circ R_2] \subseteq [R_1]\circ[R_2]$, let
  $(u,w)\in R_1\circ R_2$. Then there exists $v\in A^*$ with
  $u\relation{R_1}v\relation{R_2}w$ and therefore
  $[u] \relation{[R_1]} [v] \relation{[R_2]} [w]$ implying
  $([u],[w])\in [R_1]\circ [R_2]$.
  
  For the converse inclusion, let
  $(x,z)\in[R_1]\circ[R_2]$. There is some trace $y$ with
  $x\relation{[R_1]}y\relation{[R_2]}z$. Hence there are words
  $u,v,v',w$ with
  \begin{itemize}
    \item $x=[u]$, $y=[v]$, and $(u,v)\in R_1$ and
    \item $y=[v']$, $z=[w]$, and $(v',w)\in R_2$.
  \end{itemize}
  Hence we have $u\relation{R_1} v\sim v'\relation{R_2}w$.
  Since $R_2$ is left-closed, there exists a word $w'\in A^*$
  such that $u\relation{R_1} v\relation{R_2}w'\sim w$.  Hence we have
  $(x,z)=([u],[w])=([u],[w'])\in [R_1\circ R_2]$.
  This finishes the verification of the first claim.
  
  Now, assume both relations $R_1$ and $R_2$ to be left-closed such
  that $\mathord{\sim}\circ R_i\subseteq R_i\circ\mathord{\sim}$ holds
  for all $i\in[2]$. Consequently, we get
  $\mathord{\sim}\circ R_1\circ R_2 \subseteq
  R_1\circ\mathord{\sim}\circ R_2 \subseteq R_1\circ
  R_2\circ\mathord{\sim}$.  Hence, indeed, $R_1\circ R_2$ is
  left-closed such that the second claim follows using
  Theorem~\ref{thm:ratWordRel}(R2) and statement (i).
\end{proof}

The following proposition characterizes the lc-rational word relations of
the form $K\times L$ for languages $K,L\subseteq A^*$. This
characterization should also explain the name ``left-closed''. 

\begin{proposition}\label{prop:wordsDirectProducts}
  Let $K,L\subseteq A^*$ be nonempty.
  \begin{enumerate}[label=(\roman*)]
  \item Then $K\times L$ is rational if, and only if, $K$ and $L$ both
    are regular. More precisely, we have the following.
      \begin{itemize}
      \item From a transducer $\autT$, one can compute in polynomial
        time NFAs $\autA_1$ and $\autA_2$ with
        $L(\autA_1)=\{u\mid \exists v\colon(u,v)\in R(\autT)\}$ and
        $L(\autA_2)=\{v\mid \exists u\colon(u,v)\in R(\autT)\}$.
      \item Conversely, from two NFAs $\autA_1$ and $\autA_2$, one can construct
        in polynomial time a transducer $\autT$ with
        $R(\autT)=L(\autA_1)\times L(\autA_2)$.
      \end{itemize}
    \item $K\times L$ is left-closed if, and only if, $K$ is
      closed. More precisely, we have the following.
      \begin{itemize}
      \item If $R\subseteq A^*\times A^*$ is left-closed, then
        $\{u\in A^*\mid\exists v\colon(u,v)\in R\}$ is closed.
      \item If $K$ is closed, then $K\times L$ is left-closed.
      \end{itemize}
  \end{enumerate}
  Consequently, $K\times L$ is lc-rational if, and only if, $K$ and
  $L$ both are regular and $K$ is closed.
\end{proposition}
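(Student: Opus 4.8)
The plan is to prove \cref{prop:wordsDirectProducts} by establishing the two displayed sub-items in part~(i) and the two in part~(ii); the final ``consequently'' is then just a conjunction of the ``if and only if'' characterizations, observing that lc-rationality means rational \emph{and} left-closed.

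For part~(i), the forward direction is handled by the first bullet: given a transducer $\autT=(Q,A,I,T,F)$, the projection of $R(\autT)$ onto its first component is obtained by a standard $\varepsilon$-NFA construction that keeps the state set $Q$, the same initial and final states, and replaces each transition $(p,(a,b),q)$ by $(p,a,q)$ (so $b$ is simply forgotten); the second projection is symmetric. Since a transducer's transitions are labelled by pairs from $A^*\times A^*$ of total length at most $1$, these are genuine $\varepsilon$-NFAs, and by the foundational $\varepsilon$-NFA-to-NFA result they can be turned into NFAs in polynomial time. This shows $K$ and $L$ are regular whenever $K\times L$ is rational. For the converse bullet, from NFAs $\autA_1,\autA_2$ one builds a transducer recognizing $L(\autA_1)\times L(\autA_2)$ by concatenation: run $\autA_1$ on the first track while outputting $\varepsilon$ on the second, then, via an $\varepsilon$-transition from each final state of $\autA_1$ to each initial state of $\autA_2$, run $\autA_2$ on the second track while reading $\varepsilon$ on the first; concretely, transitions $(p,(a,\varepsilon),q)$ for $(p,a,q)\in T_1$, transitions $(p,(\varepsilon,b),q)$ for $(p,b,q)\in T_2$, and $\varepsilon$-bridging transitions $(f,(\varepsilon,\varepsilon),\iota)$ for $f\in F_1$, $\iota\in I_2$. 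One checks the accepted relation is exactly $L(\autA_1)\cdot(\{\varepsilon\}\times L(\autA_2))$ restricted appropriately, i.e.\ $L(\autA_1)\times L(\autA_2)$ (this uses that both languages are nonempty so that the product is nonempty in each coordinate iff each factor is). All of this is routine; I'd present it briefly.

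For part~(ii), the second bullet (sufficiency) is direct: suppose $K$ is closed and $u\sim u'\relation{K\times L}v'$. Then $u'\in K$ and $v'\in L$; since $K$ is closed and $u\sim u'\in K$, we get $u\in K$, so picking $v:=v'$ gives $u\relation{K\times L}v\sim v'$. Hence $K\times L$ is left-closed. The first bullet (necessity) is the part that needs a little care, and it is the only genuinely delicate point: given that $R$ is left-closed and nonempty in its projections, we must show $K':=\{u\mid\exists v\colon (u,v)\in R\}$ is closed. Take $u\sim u'$ with $u'\in K'$, so there is $v'$ with $u'\relation{R}v'$. Then $u\sim u'\relation{R}v'$, and left-closedness (applied with this $u$ and this $v'$) yields some $v$ with $u\relation{R}v$, hence $u\in K'$. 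So $K'$ is closed. Note this argument works for \emph{any} left-closed relation, not only direct products, which is exactly why the bullet is phrased that way; the direct-product case is then the special instance $R=K\times L$ (with $K'=K$ since $L\neq\emptyset$).

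The main obstacle --- really the only place one has to think rather than turn a crank --- is making sure the nonemptiness hypotheses are used correctly: in part~(i) one needs $K,L\neq\emptyset$ so that ``$K\times L$ rational'' forces \emph{both} factors regular (the projections of $R(\autT)$ equal $K$ and $L$ respectively only when neither is empty), and in part~(ii) one needs $L\neq\emptyset$ so that the first projection of $K\times L$ is all of $K$ rather than $\emptyset$. Everything else is a straightforward assembly of the product-transducer construction from the proof of \cref{thm:ratWordRel} together with the two-line closure arguments above. I would therefore spend most of the written proof on the explicit transducer/NFA constructions for the ``more precisely'' clauses and keep the closure arguments terse.
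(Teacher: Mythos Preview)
Your proposal is correct and essentially matches the paper for part~(ii): the two closure arguments you give are exactly those in the paper's proof.

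For part~(i) you take a slightly different, more elementary route. For the projections, you build an $\varepsilon$-NFA directly by erasing one track of each transducer transition; the paper instead fixes an NFA for $A^*$ and invokes \cref{thm:ratWordRel}(R1) to obtain the projections as ${}^{R(\autT)}(A^*)$ and $(A^*)^{R(\autT)}$. For the converse, you concatenate the two single-track transducers via $\varepsilon$-bridges; the paper builds the same two single-track transducers $\autT_1,\autT_2$ but then observes $L(\autA_1)\times L(\autA_2)=R(\autT_1)\circ R(\autT_2)$ and invokes \cref{thm:ratWordRel}(R2). Both approaches are valid and of comparable length; yours is self-contained, while the paper's reuses the machinery already set up in \cref{thm:ratWordRel}. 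Your identification of where nonemptiness is needed is accurate. One small wording slip: your displayed ``$L(\autA_1)\cdot(\{\varepsilon\}\times L(\autA_2))$'' should read $(L(\autA_1)\times\{\varepsilon\})\cdot(\{\varepsilon\}\times L(\autA_2))$, but the intended meaning is clear.
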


\begin{proof}
  Fix an NFA $\autA$ accepting $A^*$. Now let $\autT=(Q,A,I,T,F)$ be a
  transducer. Then, by Theorem~\ref{thm:ratWordRel}(R1), one can
  construct in polynomial time NFAs $\autA_1$ and $\autA_2$ with
  $L(\autA_1)={}^{R(\autT)}L(\autA)={}^{R(\autT)}(A^*)=\{u\mid \exists
  v\colon (u,v)\in R(\autT)\}$ and, similarly,
  $L(\autA_2)=L(\autA)^{R(\autT)}=(A^*)^{R(\autT)}=\{v\mid \exists
  u\colon (u,v)\in R(\autT)\}$.

  Regarding the second claim, let $\autA_i=(Q_i,A,I_i,T_i,F_i)$ be
  NFAs. Construct transducers $\autT_i=(Q_i,A,I_i,T_i',F_i)$ by
  setting $(p,(a,b),q)\in T_1'$ iff $(p,a,q)\in T_1$ and
  $b=\varepsilon$ as well as $(p,(a,b),q)\in T_2'$ iff
  $(p,b,q)\in T_2$ and $a=\varepsilon$. Then
  $R(\autT_1)=L(\autA_1)\times\{\varepsilon\}$ and
  $R(\autT_2)=\{\varepsilon\}\times L(\autA_2)$. Furthermore, the
  transducers $\autT_i$ can be constructed in polynomial time. The
  second claim now follows from Theorem~\ref{thm:ratWordRel}(R2) since
  $L(\autA_1)\times L(\autA_2)=R(\autT_1)\circ R(\autT_2)$.

  Suppose $R\subseteq A^*\times A^*$ to be left-closed and let
  $K=\{u\mid\exists v\colon(u,v)\in R\}$. To show that $K$ is closed,
  let $u\sim u'\in K$. By the definition of $K$, there exists
  $v'\in A^*$ with $(u',v')\in R$, i.e., $u\sim
  u'\relation{R}v'$. Since $R$ is assumed to be left-closed, there
  exists $v\in A^*$ with $u\relation{R}v\sim v'$. This implies in
  particular $u\in K$. Hence, $K$ is closed.
  
  Conversely, suppose $K$ to be closed and let
  $u\sim u'\relation{(K\times L)}v'$. Then $u\sim u'\in K$ implying
  $u\in K$ such that (with $v=v'$) we get
  $u\relation{(K\times L)}v\sim v'$, i.e., $K\times L$ is
  left-closed.
\end{proof}

\subsection{LC-Rational Trace Relations}

Recall that a trace relation $\calR\subseteq\traceMon\times\traceMon$ is
rational if there exists a rational word relation
$R\subseteq A^*\times A^*$ with $[R]=\calR$. Similarly, we now lift the
concept of lc-rational relations from words to traces.

\begin{definition}\label{def:lcRatTraceRel}
  A trace relation $\calR\subseteq \traceMon\times \traceMon$ is
  \emph{lc-rational} if there exists some lc-rational word relation
  $R\subseteq A^*\times A^*$ with $\calR=[R]$.
\end{definition}

Simple examples are $\traceMon\times\{[\varepsilon]\}$ and
$\{[\varepsilon]\}\times\traceMon$ since $A^*\times\{\varepsilon\}$ and
$\{\varepsilon\}\times A^*$ are lc-rational word relations.

\begin{example}
  Another, more involved example, is the supertrace-relation
  \cite{Kus20}: $x\in\traceMon$ is a \emph{supertrace} of $y\in\traceMon$ if
  $x=x_1 y_1\cdots x_n y_n x_{n+1}$ and $y=y_1y_2\cdots y_n$ for 
  some $n\in\monoN$ and
  $x_1,x_2,\dots,x_{n+1},y_1,y_2,\dots,y_n\in\traceMon$. In this case,
  we write $x\sqsupseteq y$. It is easily checked that
  $x\sqsupseteq y$ if, and only if, there are words $u$ and $v$ such
  that $x=[u]$, $y=[v]$, and $u\succeq v$, i.e.,
  $\mathord{\sqsupseteq}=[\mathord{\succeq}]$. Since the
  superword-relation $\succeq$ is lc-rational by
  \cref{ex:superword}, we obtain that the supertrace-relation
  $\sqsupseteq$ is lc-rational.
\end{example}

Also the identity relation
$\Id_{\traceMon}=\{(x,x)\mid x\in\traceMon\}$ is lc-rational since
$\Id_{A^*}$ is an lc-rational word relation.

By \cref{ex:noComposition}, the composition of rational trace
relations is, in general, not rational. The following proposition
demonstrates that the composition is rational provided the second
relation is lc-rational (differently, if the first relation is
lc-rational, the composition is not necessarily rational as
\cref{ex:noComposition} demonstrates as well). If both relations are
lc-rational, then the composition is so as well.

\begin{proposition}\label{prop:composition}
  Let $\calR_1,\calR_2\subseteq\traceMon\times\traceMon$ be rational
  trace relations.
  \begin{enumerate}[label=(\roman*)]
  \item If $\calR_2$ is lc-rational, then $\calR_1\circ\calR_2$ is
    rational. More precisely, from a transducer $\autT_1$ and a
    left-closed transducer $\autT_2$, one can compute in polynomial
    time a transducer $\autT$ such that
    $[R(\autT)]=[R(\autT_1)]\circ [R(\autT_2)]$.
  \item If $\calR_1$ and $\calR_2$ both are lc-rational, then
    $\calR_1\circ\calR_2$ is even lc-rational. More precisely, from
    left-closed transducers $\autT_i$ for $1\le i\le n$, one can
    compute a left-closed transducer $\autT$ with
    $[R(\autT)]=[R(\autT_1)]\circ [R(\autT_2)]\circ\cdots \circ
    [R(\autT_n)]$ in time $O(t^n)$ where $t$ is the total size of the
    transducers $\autT_i$.
  \end{enumerate}
\end{proposition}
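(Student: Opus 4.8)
The plan is to reduce everything to the word-level results already established, chiefly Proposition~\ref{prop:leftClosednessAndComposition} and Theorem~\ref{thm:ratWordRel}. For claim (i), I start from a transducer $\autT_1$ for $\calR_1$ and a left-closed transducer $\autT_2$ for $\calR_2$, so that $\calR_i = [R(\autT_i)]$ and $R(\autT_2)$ is left-closed. By Theorem~\ref{thm:ratWordRel}(R2), I compute in polynomial time (note $n=2$ is fixed, so $O(t^2)$ is polynomial) a transducer $\autT$ with $R(\autT) = R(\autT_1)\circ R(\autT_2)$. Then Proposition~\ref{prop:leftClosednessAndComposition}(i), applied with $R_1 = R(\autT_1)$ and $R_2 = R(\autT_2)$ left-closed, gives $[R(\autT)] = [R(\autT_1)\circ R(\autT_2)] = [R(\autT_1)]\circ[R(\autT_2)] = \calR_1\circ\calR_2$. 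Hence $\calR_1\circ\calR_2$ is rational and $\autT$ witnesses it, proving (i).

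For claim (ii), I take left-closed transducers $\autT_i$ for $1\le i\le n$, so each $R(\autT_i)$ is left-closed and rational, i.e.\ lc-rational. Proposition~\ref{prop:leftClosednessAndComposition}(ii) directly yields a left-closed transducer $\autT$, computable in time $O(t^n)$, with
\[
  [R(\autT)] = [R(\autT_1)]\circ[R(\autT_2)]\circ\cdots\circ[R(\autT_n)]
  = \calR_1\circ\calR_2\circ\cdots\circ\calR_n\,.
\]
Since $\autT$ is left-closed, $[R(\autT)]$ is lc-rational by Definition~\ref{def:lcRatTraceRel}, establishing (ii). In both parts the equality of the trace-level composition with the word-level composition under $[\cdot]$ is exactly the content of Eq.~\eqref{eq:circ-and-eta}, which Proposition~\ref{prop:leftClosednessAndComposition}(i) guarantees precisely because the relevant relations are left-closed.

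The only genuine subtlety — and the point I would be careful to spell out — is that claim (i) requires the second relation $\calR_2$ to be not merely rational but lc-rational: the hypothesis as stated gives ``$\calR_1,\calR_2$ rational'' and then additionally assumes $\calR_2$ lc-rational, so we really may pick $\autT_2$ left-closed. Without left-closedness of $R(\autT_2)$, Eq.~\eqref{eq:circ-and-eta} can fail (Example~\ref{ex:noComposition}), and the composition need not even be rational; so the construction of $\autT$ via Theorem~\ref{thm:ratWordRel}(R2) would still produce a transducer, but $[R(\autT)]$ would be a strict subset of $\calR_1\circ\calR_2$ rather than equal to it. Everything else is a routine bookkeeping of the sizes and running times inherited from Theorem~\ref{thm:ratWordRel}(R2) and Proposition~\ref{prop:leftClosednessAndComposition}.
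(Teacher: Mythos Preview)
Your proof is correct and follows essentially the same approach as the paper: both parts are reduced to the word-level results by invoking Theorem~\ref{thm:ratWordRel}(R2) to construct the composed transducer and Proposition~\ref{prop:leftClosednessAndComposition} to transfer the equality $[R_1\circ R_2]=[R_1]\circ[R_2]$ and the left-closedness to the trace level. Your additional remark about why left-closedness of $\autT_2$ is essential (via Example~\ref{ex:noComposition}) is a nice clarification not spelled out in the paper's proof.
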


\begin{proof}
  Let $\autT_1$ be a transducer and $\autT_2$ a left-closed
  transducer.  By \cref{thm:ratWordRel}(R2), we can compute in
  polynomial time a transducer $\autT$ with
  $R(\autT)=R(\autT_1)\circ R(\autT_2)$. Since the relation
  $R(\autT_2)$ is left-closed,
  \cref{prop:leftClosednessAndComposition}(i) implies
  \[
    [R(\autT)]=\bigl[R(\autT_1)\circ R(\autT_2)\bigr]
    =[R(\autT_1)]\circ [R(\autT_2)]\,.
  \]

  Next consider $n$ left-closed transducers $\autT_i$. By
  \cref{thm:ratWordRel}(R2), one can compute a transducer $\autT$ in
  the given time bound that accepts the composition of the relations
  $R(\autT_i)$. By \cref{prop:leftClosednessAndComposition}(ii), this
  composition is lc-rational (implying that $\autT$ is a left-closed
  transducer) and, by \cref{prop:leftClosednessAndComposition}(i),
  satisfies
  \[
    [R(\autT)]=
    \bigl[R(\autT_1)\circ R(\autT_2)\circ\cdots\circ R(\autT_n)\bigr]=
    [R(\autT_1)]\circ [R(\autT_2)]\circ\cdots\circ [R(\autT_n)]\,.
  \]
\end{proof}

Next, we want to characterize the lc-rational relations among the
direct products $\calK\times\calL$ of sets of traces $\calK$ and
$\calL$ (we have done so for word relations in
\cref{prop:wordsDirectProducts}). 

\begin{proposition}\label{prop:directProductTraces}
  Let $\calK,\calL\subseteq\traceMon$ be nonempty.
  \begin{enumerate}[label=(\roman*)]
  \item $\calK\times\calL$ is rational if, and only if, $\calK$ and
    $\calL$ both are rational.
  \item $\calK\times\calL$ is lc-rational if, and only if, $\calK$ is
    recognizable and $\calL$ is rational.
  \end{enumerate}
  More precisely, we have the following.
  \begin{enumerate}[label=(\alph*)]
  \item From a transducer $\autT$, one can compute in polynomial time
    NFAs $\autA_1$ and $\autA_2$ with
    $[L(\autA_1)]=\{[u]\mid\exists v\colon([u],[v])\in[R(\autT)]\}$
    and
    $[L(\autA_2)]=\{[v]\mid\exists
    u\colon([u],[v])\in[R(\autT)]\}$. If $\autT$ is a left-closed
    transducer, then $\autA_1$ is a closed NFA (implying that
    $[L(\autA_1)]$ is recognizable).
  \item Conversely, from two NFAs $\autA_1$ and $\autA_2$, one can
    construct in polynomial time a transducer $\autT$ with
    $[R(\autT)]=[L(\autA_1)]\times[L(\autA_2)]$. If $\autA_1$ is
    closed, then $\autT$ is left-closed.
  \end{enumerate}
\end{proposition}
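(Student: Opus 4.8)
The proposition has two named items (i) and (ii), and two "more precisely" constructions (a) and (b); the cleanest route is to prove (a) and (b) first and then read off (i) and (ii) as corollaries, exactly as in the proof of \cref{prop:wordsDirectProducts}. The plan is to reuse the word-level machinery of \cref{thm:ratWordRel} and \cref{prop:wordsDirectProducts} and then transport everything across the homomorphism $[.]\colon A^*\to\traceMon$.

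For part (a): given a transducer $\autT$, apply \cref{prop:wordsDirectProducts}(i) (equivalently \cref{thm:ratWordRel}(R1) with an NFA for $A^*$) to obtain in polynomial time NFAs $\autA_1,\autA_2$ with $L(\autA_1)=\{u\mid\exists v\colon(u,v)\in R(\autT)\}$ and $L(\autA_2)=\{v\mid\exists u\colon(u,v)\in R(\autT)\}$. Then I would check that applying $[.]$ commutes with the projection, i.e.\ $[L(\autA_1)]=\{[u]\mid\exists v\colon([u],[v])\in[R(\autT)]\}$ and symmetrically for $\autA_2$; this is immediate from the definition $[R(\autT)]=\{([u],[v])\mid(u,v)\in R(\autT)\}$. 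For the closedness claim, if $\autT$ is a left-closed transducer then $R(\autT)$ is a left-closed word relation, so by \cref{prop:wordsDirectProducts}(ii) (the first bullet) the language $\{u\mid\exists v\colon(u,v)\in R(\autT)\}=L(\autA_1)$ is closed; hence $\autA_1$ is a closed NFA and $[L(\autA_1)]$ is recognizable by the discussion in the preliminaries.

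For part (b): given NFAs $\autA_1,\autA_2$, use \cref{prop:wordsDirectProducts}(i) (the second bullet) to build in polynomial time a transducer $\autT$ with $R(\autT)=L(\autA_1)\times L(\autA_2)$. Then $[R(\autT)]=[L(\autA_1)\times L(\autA_2)]=[L(\autA_1)]\times[L(\autA_2)]$, where the last equality holds because $([u],[v])\in[L(\autA_1)\times L(\autA_2)]$ iff there are $u'\sim u$, $v'\sim v$ with $u'\in L(\autA_1)$, $v'\in L(\autA_2)$, which is exactly $[u]\in[L(\autA_1)]$ and $[v]\in[L(\autA_2)]$. If $\autA_1$ is closed, then $L(\autA_1)$ is a closed word language, so $L(\autA_1)\times L(\autA_2)=R(\autT)$ is left-closed by \cref{prop:wordsDirectProducts}(ii) (the second bullet), i.e.\ $\autT$ is a left-closed transducer.

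Finally I would derive (i) and (ii). For (i): if $\calK\times\calL$ is rational, pick a transducer $\autT$ with $[R(\autT)]=\calK\times\calL$; part (a) yields NFAs with $[L(\autA_1)]=\calK$ and $[L(\autA_2)]=\calL$ (using $\calK,\calL\neq\emptyset$ so that the projections of $\calK\times\calL$ are $\calK$ and $\calL$), so both are rational. Conversely, rational $\calK,\calL$ are represented by NFAs, and part (b) produces a transducer for $\calK\times\calL$. For (ii): if $\calK\times\calL$ is lc-rational, there is a left-closed transducer $\autT$ with $[R(\autT)]=\calK\times\calL$, and part (a) gives a closed $\autA_1$ with $[L(\autA_1)]=\calK$, so $\calK$ is recognizable, while $\calL$ is rational by (i); conversely, represent $\calK$ by a closed NFA and $\calL$ by an arbitrary NFA and invoke part (b) to get a left-closed transducer for $\calK\times\calL$. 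The only mild subtlety—and the one place I expect to spend a sentence of care rather than hand-wave—is the equality $[L(\autA_1)\times L(\autA_2)]=[L(\autA_1)]\times[L(\autA_2)]$ and the matching fact that the projections of $\calK\times\calL$ recover $\calK$ and $\calL$ (needing nonemptiness); everything else is a direct transfer of \cref{prop:wordsDirectProducts} through $[.]$.
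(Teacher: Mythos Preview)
Your proposal is correct and follows essentially the same approach as the paper: both prove (a) and (b) by lifting \cref{prop:wordsDirectProducts} through the homomorphism $[.]$, using \cref{prop:wordsDirectProducts}(i) for the NFA/transducer constructions and \cref{prop:wordsDirectProducts}(ii) for the closedness transfer, and the paper likewise verifies the equality $[L(\autA_1)\times L(\autA_2)]=[L(\autA_1)]\times[L(\autA_2)]$ by unfolding the definition of $[R]$. The only cosmetic difference is that the paper spells out the chain of equivalences $[u]\in\calK\iff\exists u',v'\colon u\sim u'\relation{R(\autT)}v'\iff[u]\in[L(\autA_1)]$ where you say ``immediate from the definition''; this is indeed straightforward.
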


\begin{proof}
  Let $\autT$ be a transducer and let
  $\calK=\{[u]\mid\exists[v]\colon([u],[v])\in[R(\autT)]\}$. By
  \cref{prop:wordsDirectProducts}(i), we can construct in polynomial
  time an NFA $\autA_1$ that accepts
  $K:=\{u\mid \exists v\colon(u,v)\in R(\autT)\}$.  To show
  $[L(\autA_1)]=\calK$, let $u\in A^*$ be some word. Then we have the
  following
  \begin{align*}
    [u]\in\calK &\iff \exists v\in A^*\colon([u],[v])\in [R(\autT)]\\
                &\iff \exists u',v'\colon u\sim u'\relation{R(\autT)}v'\\
                &\iff\exists u'\colon u\sim u'\in K\\
    &\iff [u]\in[K]=[L(\autA_1)]\,.
  \end{align*}
  Hence, indeed,
  $[L(\autA_1)]=\{[u]\mid \exists
  [v]\colon([u],[v])\in[R(\autT)]\}$. This finishes the proof of 
  claim~(a) regarding $\autA_1$, the NFA $\autA_2$ can be obtained
  symmetrically.

  Now let $\autT$ be a left-closed transducer. Then $R(\autT)$ is a
  left-closed relation. Hence, by
  Prop.~\ref{prop:wordsDirectProducts}(ii), the language
  $\{u\mid \exists v\colon(u,v)\in R(\autT)\}$ is closed. But this
  language equals $L(\autA_1)$, i.e., $\autA_1$ is a closed NFA.
  This finishes the proof of claim (a).
  
  To also prove (b), let $\autA_1$ and $\autA_2$ be NFAs. Then, by
  \cref{prop:wordsDirectProducts}(i), one can construct in polynomial
  time a transducer $\autT$ with
  $R(\autT)=L(\autA_1)\times L(\autA_2)$. By the very definition of
  $[R]$ for a word relation $R$, we get
  \[
    [L(\autA_1)]\times[L(\autA_2)]=
    \bigl[L(\autA_1)\times L(\autA_2)\bigr]=[R(\autT)]\,.
  \]
  Now suppose $\autA_1$ to be closed. To show that the transducer
  $\autT$ is left-closed, we have to prove that the relation
  $R(\autT)$ is left-closed. But this is the case by
  Prop.~\ref{prop:wordsDirectProducts}(ii) since
  $R(\autT)=L(\autA_1)\times L(\autA_2)$ and $L(\autA_1)$ is closed.
\end{proof}


Proposition~\ref{prop:composition}(i) ensures that the composition of
a rational and an lc-rational trace relation is rational, again. This
holds in particular if the first relation is the inverse of an
lc-rational relation. We now demonstrate that all rational trace
relations arise in this way (provided there are at least two dependent
letters).

\begin{proposition}\label{prop:ratRelsplit}
  Suppose there are $a,b\in A$ distinct with $(a,b)\in D$. Let
  $\calR\subseteq\traceMon^2$ be a rational trace relation.  There exist
  lc-rational trace relations $\calR_1$ and $\calR_2$ such that
  $\calR=\inv{\calR_1}\circ\calR_2$.
\end{proposition}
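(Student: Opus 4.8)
The plan is to route $\calR$ through an auxiliary ``bookkeeping'' coordinate that lives in the free submonoid $\{a,b\}^*$ of $\traceMon$; the key point is that $\{a,b\}$ is a clique of $\structD$, which makes this coordinate rigid. First the \emph{rigidity observation}: since $(a,b)\in D$ and $D$ is reflexive and symmetric, we have $\{a,b\}^2\subseteq D$, so (as recalled in \cref{sec:prelim}) the projection $\pi_{\{a,b\}}$ is $\sim$-invariant; since moreover $u\sim u'$ forces $\alphabet u=\alphabet{u'}$, any $u'\in\{a,b\}^*$ with $u\sim u'$ satisfies $u=\pi_{\{a,b\}}(u)=\pi_{\{a,b\}}(u')=u'$. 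Two consequences: any word relation $R\subseteq A^*\times A^*$ all of whose left components lie in $\{a,b\}^*$ is left-closed (from $u\sim u'\mathrel{R}v'$ one gets $u=u'$, so $v=v'$ works), hence lc-rational as soon as it is rational; and for $p,p'\in\{a,b\}^*$, $[p]=[p']$ implies $p=p'$.

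\emph{Construction.} Fix a transducer $\autT$ with $\calR=[R(\autT)]$. Let $C$ be an alphabet in bijection with the transition set of $\autT$, let $L\subseteq C^*$ be the (regular) language of labels of accepting runs of $\autT$, and let $g,h\colon C^*\to A^*$ be the homomorphisms reading off, respectively, the first and the second track of a run, so that $R(\autT)=\{(g(w),h(w))\mid w\in L\}$. Pick any injective homomorphism $e\colon C^*\to\{a,b\}^*$, for instance $e(c_i)=a^ib$ after enumerating $C=\{c_1,\dots,c_m\}$. Set
\[
  R_1=\{(e(w),g(w))\mid w\in L\}\ \text{ and }\ R_2=\{(e(w),h(w))\mid w\in L\}\,,
\]
and let $\calR_1=[R_1]$, $\calR_2=[R_2]$. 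Each $R_i$ is the image of the rational set $L$ under a monoid homomorphism $C^*\to A^*\times A^*$, hence rational; and the left components of $R_i$ all lie in $\{a,b\}^*$, so $R_i$ is left-closed by the rigidity observation. Therefore $\calR_1$ and $\calR_2$ are lc-rational trace relations.

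\emph{The identity $\inv{\calR_1}\circ\calR_2=\calR$.} For ``$\supseteq$'': if $(x,z)\in\calR$, write $x=[g(w)]$ and $z=[h(w)]$ with $w\in L$; then $y:=[e(w)]$ satisfies $(y,x)\in\calR_1$ and $(y,z)\in\calR_2$, hence $(x,z)\in\inv{\calR_1}\circ\calR_2$. For ``$\subseteq$'': given $(x,z)\in\inv{\calR_1}\circ\calR_2$, choose $y$ with $(y,x)\in\calR_1$ and $(y,z)\in\calR_2$; unfolding the definitions of $[R_1]$ and $[R_2]$ gives $w,w'\in L$ with $[e(w)]=y=[e(w')]$, $x=[g(w)]$ and $z=[h(w')]$. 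By the rigidity observation, $[e(w)]=[e(w')]$ yields $e(w)=e(w')$, and injectivity of $e$ gives $w=w'$; hence $(x,z)=([g(w)],[h(w)])\in\calR$.

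\emph{Main obstacle.} There is no genuinely hard step; the only thing one must get exactly right is the rigidity of $\{a,b\}^*$ inside $\traceMon$, which is simultaneously what makes $R_1$ and $R_2$ left-closed and what makes the bookkeeping run $w$ (hence the middle trace $y$) uniquely recoverable. For an efficient version, note that $e$ is fixed and a transducer for each $R_i$ is obtained from $\autT$ by replacing every transition with a short chain of transitions that emits the block $e(\cdot)$ on the left track while re-emitting the original first (resp.\ second) letter of $\autT$ on the right track; this is polynomial.
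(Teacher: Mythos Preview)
Your proof is correct and follows essentially the same approach as the paper: encode the underlying Nivat data through an injective homomorphism into the rigid submonoid $\{a,b\}^*$, so that the left components of $R_1$ and $R_2$ are $\sim$-trivial (yielding left-closedness) and the encoded run is uniquely recoverable (yielding the identity $\inv{\calR_1}\circ\calR_2=\calR$). The only cosmetic difference is that the paper cites Nivat's theorem and first proves $R=\inv{R_1}\circ R_2$ at the word level before lifting to traces via \cref{prop:leftClosednessAndComposition}(i), whereas you re-derive the Nivat decomposition from the transducer's transition alphabet and verify the trace-level identity directly.
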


\begin{proof}
  There exists a rational relation $R\subseteq A^*\times A^*$ with
  $\calR=[R]$. By Nivat's theorem \cite{Niv68}
  (cf.~\cite[Theorem~III.3.2]{Ber79}), there exist an alphabet $B$,
  homomorphisms $f,g\colon B^*\to A^*$, and a regular language
  $K\subseteq B^*$ such that
  \[
    R=\{(f(u),g(u))\mid u\in K\}\,.
  \]
  Suppose $B=\{c_1,c_2,\dots,c_n\}$. Let $h\colon B^*\to A^*$
  be the homomorphism defined by $h(c_i)=a^i b$.
  
  Now consider the relations
  \[
    R_1=\{(h(u),f(u))\mid u\in K\}\text{ and }
    R_2=\{(h(u),g(u))\mid u\in K\}\,.
  \]
  We first show that these relations are lc-rational (by symmetry, we
  only consider the relation~$R_1$). From Nivat's theorem, we obtain
  that $R_1$ is rational. To show that it is left-closed, let
  $v,v',w'\in A^*$ with $v\sim v'\relation{R_1}w'$.
  
  Since $R_1\subseteq\{a,b\}^*\times A^*$, we obtain
  $v'\in\{a,b\}^*$. Since $(a,b)\in D$, this implies $v=v'$. Hence,
  setting $w:=w'$, we obtain $v\relation{R_1}w\sim w'$. Hence,
  indeed, the relations $R_1$ and $R_2$ are lc-rational.
  
  Next, we show $R=\inv{R_1}\circ R_2$. For the inclusion
  ``$\subseteq$'', let $(v,w)\in R$. Then there exists $u\in K$ with
  $v=f(u)$ and $w=g(u)$. Hence we obtain
  \[
    v=f(u)\relation{\inv{R_1}}h(u)\relation{R_2}g(u)=w
  \]
  and therefore $(v,w)\in \inv{R_1}\circ R_2$. For the converse
  inclusion, suppose $(v,w)\in \inv{R_1}\circ R_2$. Then there exists
  some word $x$ with $v\relation{\inv{R_1}}x\relation{R_2}w$. By the
  definition of the relations $R_1$ and $R_2$, there are words
  $u_1,u_2\in K\subseteq B^*$ such that
  \[
    v=f(u_1)\,, x=h(u_1)\text{ and }x=h(u_2)\,, w=g(u_2)\,.
  \]
  Since the homomorphism $h$ is injective, we get $u_1=u_2$ and therefore
  \[
    (v,w)=(f(u_1),g(u_2))=(f(u_1),g(u_1))\in R\,.
  \]
  Thus, indeed, $R=\inv{R_1}\circ R_2$.
  
  Finally, let $\calR_1=[R_1]$ and $\calR_2=[R_2]$. Note that
  $\inv{R_1}$ is rational and satisfies
  $[\inv{R_1}]=\inv{\calR_1}$. From
  \cref{prop:leftClosednessAndComposition}(i), we
  obtain that
  \[
    [\inv{R_1}\circ R_2]=[\inv{R_1}]\circ[R_2]
    =\inv{\calR_1}\circ\calR_2
  \]
  since $R_2$ is lc-rational. Hence, we obtain
  \[
    \calR =[R]
    =[\inv{R_1}\circ R_2]
    =\inv{\calR_1}\circ\calR_2\,.\qedhere
  \]
\end{proof}

\subsection{Products of lc-rational trace relations}
\label{sec:products}

The (componentwise) product of two rational trace relations is
rational again: if $\calR_1,\calR_2\subseteq\traceMon^2$ are rational,
then there are transducers $\autT_1$ and $\autT_2$ with
$\calR_i=[R(\autT_i)]$. From these transducers, one can construct a
transducer $\autT$ with $R(\autT)=R(\autT_1)\cdot R(\autT_2)$. It
follows that
$\calR_1\cdot\calR_2=[R(\autT_1)]\cdot [R(\autT_2)]=[R(\autT_1)\cdot
R(\autT_2)]=[R(\autT)]$ is rational.
The following lemma demonstrates that this does not hold for lc-rational
relations.

\begin{lemma}\label{lem:noProduct}
  There exist lc-rational relations $\calR$ and $\calR'$ such that
  $\calR\cdot\calR'$ is not lc-rational.
\end{lemma}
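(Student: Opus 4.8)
The plan is to exhibit concrete lc-rational relations $\calR,\calR'$ whose product is not even rational, which a fortiori shows it is not lc-rational. The natural place to look is a dependence alphabet containing both a dependent pair and an independent pair, so fix $a,b,c,d\in A$ with $(a,b)\in D$ and $c\parallel d$ (the same hypothesis as in \cref{ex:noComposition}). The key idea is that multiplication of two trace languages/relations can hide a diagonal constraint just like composition did: if $\calR$ feeds words over $\{a,b\}$ on the left and words over $\{c,d\}$ on the right, and $\calR'$ does something that, after taking traces, forces a correlation between the number of $c$'s and $d$'s, the product can produce a non-rational language in one coordinate.

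\textbf{Construction.} Concretely, I would take
\[
  \calR=\bigl\{\bigl([a^n b],[(cd)^n]\bigr)\bigm| n\ge 0\bigr\}
  =\Id_{\{[a^nb]\mid n\ge0\}}\ \text{paired with}\ \{[cd]\}^*\ \text{via a transducer},
\]
i.e.\ $\calR=[R]$ where $R=\{(a,c)\}^*\{(a,c)\}^*\cdots$—more cleanly, $R$ is the rational word relation $\{(a^nb,(cd)^n)\mid n\ge0\}$, which can be written as $\{(a,cd)\}^*\cdot\{(b,\varepsilon)\}$. To see $R$ is left-closed: its left projection lies in $\{a,b\}^*$, and since $(a,b)\in D$ every $\sim$-class of a word in $\{a,b\}^*$ is a singleton, so the premise $u\sim u'\relation R v'$ forces $u=u'$ and we may take $v=v'$; thus $R$ is lc-rational (this is exactly the argument used in the proof of \cref{prop:ratRelsplit}). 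For the second factor take $\calR'=[R']$ with $R'=\{(b,\varepsilon)\}\cdot\{(\varepsilon,c)\}^*=\{(b,c^m)\mid m\ge0\}$, or even simpler $\calR'=\{[\varepsilon]\}\times\{[c]\}^*$, which is lc-rational by \cref{prop:directProductTraces}(ii) (since $\{[\varepsilon]\}$ is recognizable). Then $\calR\cdot\calR'=\{([a^nb],[(cd)^n c^m])\mid n,m\ge0\}=\{([a^nb],[c^{n+m}d^n])\mid n,m\ge0\}$, and since $(a,b)\in D$ the left components are pairwise non-equivalent singletons, so any word relation $R''$ with $[R'']=\calR\cdot\calR'$ satisfies $R''=\{(a^nb,w)\mid n\ge0, w\in\{c,d\}^*, |w|_d=n, |w|_c\ge n\}$; projecting to the second coordinate (rational sets are closed under homomorphic image) would give a context-free-but-not-regular-style language—more precisely the homomorphism collapsing $a,b$ shows the set of right components is $\{w\in\{c,d\}^*\mid |w|_c\ge|w|_d\}$-style and the correlation with the left side is not rational. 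I may need to fine-tune the exact relations so the obstruction is clean; the cleanest witness is probably mimicking \cref{ex:noComposition} directly: arrange things so that a suitable homomorphic projection of $R''$ equals $\{a^nb^n\mid n\ge0\}$ or $\{c^nd^n\mid n\}$, which is not regular.

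\textbf{Key steps, in order.} (1) Fix $a,b\in A$ with $(a,b)\in D$ and $c,d\in A$ with $c\parallel d$. (2) Define $\calR$ and $\calR'$ explicitly as images $[R],[R']$ of rational word relations, choosing $R$ with left projection inside $\{a,b\}^*$ so left-closedness is automatic, and choosing $\calR'$ of the form $\{[\varepsilon]\}\times\text{(rational)}$ so \cref{prop:directProductTraces} gives lc-rationality directly. (3) Compute $\calR\cdot\calR'$ by hand using $(cd)^n\sim c^nd^n$. (4) Argue non-rationality: since $(a,b)\in D$, the left components are distinct singletons, so $\calR\cdot\calR'=[R'']$ forces $R''$ to a specific word relation; apply a homomorphism to isolate a non-regular language (e.g.\ $\{c^nd^n\}$) as a rational image, contradiction. (5) Conclude: not rational $\Rightarrow$ not lc-rational.

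\textbf{Main obstacle.} The only delicate point is getting the diagonal constraint to survive multiplication rather than being smeared out: when we multiply traces, a $d$ from the left factor can slide past a $c$ from the right factor only if they are independent, so I must ensure the letters that need to be ``counted'' lie on the correct side of the product and are independent exactly where sliding should be allowed and dependent where it must not be. Once $R$ is chosen with left projection in $\{a,b\}^*$ (forcing $R''$ essentially unique because $(a,b)\in D$ makes those classes singletons), the non-rationality argument is the same homomorphic-image argument as in \cref{ex:noComposition}, so the real work is purely in pinning down the example; no genuinely new technique is needed.
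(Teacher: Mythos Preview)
Your plan has a fundamental flaw: you aim to show that $\calR\cdot\calR'$ is not even rational, but this is impossible. As noted at the very beginning of \cref{sec:products}, the componentwise product of two rational trace relations is \emph{always} rational (because $[R_1]\cdot[R_2]=[R_1\cdot R_2]$ and rational word relations are closed under product). So any argument that concludes ``$\calR\cdot\calR'$ is not rational'' must be wrong; the task is to show it is rational yet not \emph{lc}-rational.

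Your concrete example confirms this. With $\calR=[\{(a,cd)\}^*\{(b,\varepsilon)\}]$ and $\calR'=\{[\varepsilon]\}\times[c]^*$, one computes $\calR\cdot\calR'=\{([a^nb],[c^{n+m}d^n])\mid n,m\ge0\}$. But taking $R''=\{(a,cd)\}^*\cdot(\{b\}\times c^*)=\{(a^nb,(cd)^nc^m)\mid n,m\ge0\}$ gives a rational word relation with $[R'']=\calR\cdot\calR'$ (since $(cd)^nc^m\sim c^{n+m}d^n$), and $R''$ is even left-closed because its left projection lies in $\{a,b\}^*$ with $(a,b)\in D$. Thus your $\calR\cdot\calR'$ is in fact lc-rational, and the step ``$R''$ is forced to be a specific word relation'' fails precisely because the right components live over the \emph{independent} pair $c,d$, so there are many representatives to choose from.

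The paper's proof is indirect and short: it takes $\calR=\{([c],[a])\}^*$ and $\calR'=\{([d],[b])\}^*$ (both visibly lc-rational, since their left projections are $c^*$ and $d^*$), observes that $\calR\cdot\calR'=\calR_2$ from \cref{ex:noComposition}, and then argues by contraposition of \cref{prop:composition}(i): if $\calR_2$ were lc-rational, then $\calR_1\circ\calR_2$ would be rational, contradicting \cref{ex:noComposition}. The key idea you are missing is that non-lc-rationality is most easily certified not by analysing all possible word representatives, but by exhibiting a rational $\calR_1$ whose composition with the candidate fails to be rational.
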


\begin{proof}
  Consider the rational trace relations $\calR_1$ and $\calR_2$ from
  \cref{ex:noComposition}. Note that $\calR_2$ is the product of the
  lc-rational trace relations $\calR=\{([c],[a])\}^*$ and
  $\calR'=\{([d],[b])\}^*$ and recall that $\calR_1\circ\calR_2$ is not
  rational. Hence, by \cref{prop:composition}(i),
  $\calR_2=\calR\cdot \calR'$ cannot be lc-rational.
\end{proof}

We now come to two special cases of relations $\calR_1$ that ensure the
lc-rationality of $\calR_1\cdot\calR_2$:

\begin{lemma}\label{lem:productEpsilon1}
  Let $\calK\subseteq\traceMon$ be recognizable. Then the relation
  \[
    \calR=(\calK\times\{[\varepsilon]\})\cdot\Id_{\traceMon}
    =\{(xy,y)\mid x\in\calK,y\in\traceMon\}
  \]
  is lc-rational.

  More precisely, from a dependence alphabet $\structD=(A,D)$ and a
  closed NFA $\autA=(Q,A,I,T,F)$, one can compute a left-closed
  transducer $\autT$ with
  $[R(\autT)]=([L(\autA)]\times\{[\varepsilon]\})\cdot\Id_{\traceMon}$;
  this computation can be carried out in time polynomial in the size
  $\|\autA\|$ of $\autA$ and the set twin index $\settwinindex(\structD)$ of
  $\structD$.
\end{lemma}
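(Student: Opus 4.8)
The plan is to build the transducer $\autT$ by running a modified copy of the closed NFA $\autA$ "in the output position'' while guessing, at each step, whether we are still inside the $\calK$-prefix $x$ or already inside the suffix $y$. Concretely, $\autT$ has two phases. In the first phase it simulates $\autA$: on a transition $(p,a,q)\in T$ it reads $a$ in the first component and writes nothing, moving from $p$ to $q$; this accounts for the prefix $x\in L(\autA)$. At a nondeterministically chosen moment, having reached some final state $f\in F$, it switches to the second phase, entering a single fresh "copy'' state $\star$; from $\star$, for every letter $a\in A$ it has a loop $(\star,(a,a),\star)$, which copies the suffix $y$ verbatim. The only final state of $\autT$ is $\star$ (and, to also accept pairs with $x=\varepsilon$... actually we should let $\star$ be reachable from any $f\in F$ by an $\varepsilon$-transition, and if $I\cap F\neq\emptyset$ then $\star$ is reachable initially, so the empty prefix is handled). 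Then $R(\autT)=\{(uy',y')\mid u\in L(\autA),\ y'\in A^*\}=L(\autA)\times\{\varepsilon\}\cdot \Id_{A^*}$, hence $[R(\autT)]=([L(\autA)]\times\{[\varepsilon]\})\cdot\Id_{\traceMon}=\calR$, giving rationality of $\calR$.

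The substantive point is \emph{left-closedness} of $R(\autT)$, i.e.\ $\sim\circ R(\autT)\subseteq R(\autT)\circ\sim$. Suppose $v\sim v'\mathrel{R(\autT)} w'$, so $v'=uy'$ with $u\in L(\autA)$, $y'\in A^*$, $w'=y'$. I want $v\mathrel{R(\autT)}w\sim w'$ for some $w$, i.e.\ I must factor $v=u_1 y_1$ with $u_1\in L(\autA)$ and $y_1\sim y'$. The naive idea — take the same split point — fails, because a transposition $\cdots ab\cdots \leadsto \cdots ba\cdots$ with $a\parallel b$ might straddle the boundary between $u$ and $y'$ in $v'$. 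This is exactly where the hypotheses on $\autA$ and on $\structD$ enter, and it is the main obstacle. The fix is to enlarge the accepting prefixes of $\autT$ so that the set $K:=\{u\mid \exists y'\colon (u,y')\mathrel{R(\autT)}\text{-prefix}\}$ is \emph{closed}; by Proposition~\ref{prop:wordsDirectProducts}(ii) (applied with $L=A^*$, since $R(\autT)=K\times\{\varepsilon\}\cdot\Id_{A^*}$ and left-closedness of such a relation is equivalent to $K$ being closed) left-closedness then follows. So the real task is: design $\autT$ so that $K=L(\autA)\cdot A^*$ — i.e., $K$ is the \emph{downward-irrelevant} thing $\{u a_1\cdots a_k\mid u\in L(\autA)\}$ — is closed as a \emph{word} language, which it need not be even though $L(\autA)$ is closed (e.g.\ $u\in L(\autA)$, $a\parallel b$ with $a\in\alphabet{u}$: then $u b\sim \cdots$ can move $b$ to the left of an occurrence of $a$, producing a word not of the form $L(\autA)\cdot A^*$).

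Thus the construction of $K$ must be smarter: $K$ should be the \emph{$\sim$-closure} of $L(\autA)\cdot A^*$, and we must show (1) this closure is regular, with an NFA of the claimed size, and (2) $[K\cdot A^*]=[L(\autA)\cdot A^*]$, i.e.\ passing to the closure doesn't change the induced trace relation. For (2): if $w\sim u y'$ with $u\in L(\autA)$, then since $L(\autA)$ is closed and recognizability lets us "commute'' independent letters freely, one shows $[w]=[u']\cdot[\text{rest}]$ for a suitable $u'\in L(\autA)$ — here one uses Levi's Lemma / the projection characterization (the displayed Theorem of Diekert) to split $[w]$ along the "frontier'' of the trace $[u]$; the key fact is that $[L(\autA)]$ is recognizable, hence \emph{prefix-closed-definable} in the sense that membership of a trace-prefix in $[L(\autA)]$ depends only on a finite amount of information. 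For (1) and the size bound: the NFA for $K$ is obtained by taking $\autA$, adding from each $f\in F$ an $\varepsilon$-transition to a "suffix gadget'', where the suffix gadget is a single state with all self-loops $a\in A$ — but to make $K$ closed we instead track, in the suffix part, the set $B\subseteq A$ of letters read so far modulo twins, or rather the set $D(B)$ of letters they depend on; since there are only $\settwinindex(\structD)$ possible values of $D(B)$, this adds a factor $\settwinindex(\structD)$ to the state count, yielding the claimed polynomial bound in $\|\autA\|$ and $\settwinindex(\structD)$. The suffix gadget, in state $D(B)$, on reading $a$ with $a\notin D(B)$ may "retroactively'' re-enter the $\autA$-simulation (this is what makes $K$ closed: a letter independent of everything read so far in the suffix could have been part of the prefix), and on reading $a\in D(B)$ stays in the suffix; I then verify by a $\sim$-induction that the resulting language is exactly the $\sim$-closure of $L(\autA)\cdot A^*$ and apply Proposition~\ref{prop:wordsDirectProducts}(ii). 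Everything in sight — the product automaton, the twin-class bookkeeping, the conversion $\varepsilon$-NFA $\to$ NFA $\to$ transducer — is polynomial, giving the stated complexity.
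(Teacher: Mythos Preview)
Your reduction to Proposition~\ref{prop:wordsDirectProducts}(ii) is invalid, and this is a genuine gap. That proposition concerns relations of the form $K\times L$, but $(K\times\{\varepsilon\})\cdot\Id_{A^*}=\{(uy,y)\mid u\in K,\ y\in A^*\}$ is \emph{not} a direct product, so its left-closedness is \emph{not} equivalent to $K$ being closed. Concretely: take $A\supseteq\{a,c\}$ with $a\parallel c$ and $K=\{a\}$, which is trivially closed. Then $(ac,c)\in R$, and $ca\sim ac$; but $ca$ has no factorisation $ca=u_1y_1$ with $u_1\in K=\{a\}$, so there is no $w$ with $ca\relation{R}w\sim c$. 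Hence $R$ is not left-closed even though $K$ is closed. Your plan of ``enlarging $K$ to its $\sim$-closure and then invoking Prop.~\ref{prop:wordsDirectProducts}(ii)'' therefore cannot work: no choice of $K$ makes a relation of the shape $\{(uy,y)\mid u\in K\}$ left-closed unless the dependence alphabet is trivial.

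The paper's proof avoids this by \emph{not} insisting on the concatenative shape $uy\mapsto y$. Instead it takes $R$ to consist of all pairs
\[
  (u_1v_1u_2v_2\cdots u_nv_n,\ v_1v_2\cdots v_n)
\]
with $u_1\cdots u_n\in L(\autA)$ and $v_1\cdots v_i\parallel u_{i+1}$ for all $i$; i.e., the $L(\autA)$-word and the copied suffix are \emph{interleaved}, subject to the obvious independence constraint. The transducer for this carries a state $(q,D(B),\cdot)$ with $q$ the current $\autA$-state and $D(B)$ the dependence-closure of the $v$-letters seen so far (this is where the factor $\settwinindex(\structD)$ comes from). One then checks $[R]=\calR$ directly, and proves left-closedness by Levi's Lemma: given $w\sim u_1v_1\cdots u_nv_n$, Levi yields a new interleaved decomposition $w=u_1'v_1'\cdots u_m'v_m'$ with $u_1'\cdots u_m'\sim u_1\cdots u_n\in L(\autA)$ (hence in $L(\autA)$, since $\autA$ is closed) and $v_1'\cdots v_m'\sim v_1\cdots v_n$, which is exactly what is needed. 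Your ``suffix gadget tracking $D(B)$ with re-entry into $\autA$'' is groping toward this interleaved transducer, but once you allow re-entry the accepted relation is no longer of the form you assume, and the left-closedness argument must be the direct Levi-based one, not a reduction to closedness of a language.
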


\begin{proof}
  Let $R$ denote the set of pairs of words
  \[
    (u_1v_1u_2v_2\cdots u_nv_n,v_1v_2\cdots v_n)
  \]
  with $n\in\monoN$ and $u_1,u_2,\dots,u_n,v_1,\dots,v_n\in A^*$ such
  that
  \begin{enumerate}[label=(\roman*)]
    \item $u_1u_2\cdots u_n\in L(\autA)$,\label{prf:productEpsilon1:i1}
    \item $v_1v_2\cdots v_i\parallel u_{i+1}$ for all $i\in[n-1]$.
      \label{prf:productEpsilon1:i3}
  \end{enumerate}
  We construct a transducer $\autT$ with $R(\autT)=R$, prove
  $[R]=([L(\autA)]\times\{[\varepsilon]\})\cdot\Id_{\traceMon}$, and
  show that $R$ is left-closed (implying that $\autT$ is a left-closed
  transducer).
  
  States of the transducer $\autT$ are triples $(q,D(B),a)$ of a state
  $q\in Q$, a set of letters $B\subseteq A$, and
  $a\in A\cup\{\varepsilon\}$. A state $(q,D(B),a)$ is initial if
  $q\in I$, $D(B)=\emptyset$ (i.e., $B=\emptyset$), and
  $a=\varepsilon$. It is accepting if $q\in F$ and
  $a\in\varepsilon$. The transducer has two types of transitions (with
  $a\in A$):
  \begin{itemize}
  \item There are transitions from $(p,D(B),\varepsilon)$ via
    $(p,D(B),a)$ to $(q,D(C),\varepsilon)$ labeled $(a,\varepsilon)$
    and $(\varepsilon,a)$, respectively iff $p=q$ and
    $D(C)=D(B)\cup D(a)$
  \item There is a transition from $(p,D(B),\varepsilon)$ to
    $(q,D(C),\varepsilon)$ labeled $(a,\varepsilon)$ iff
    $(p,a,q)\in T$ is a transition of the automaton $\autA$,
    $a\notin D(B)$, and $D(B)=D(C)$.
  \end{itemize}
  Clearly, this transducer can be computed in time polynomial in
  $\|\autA\|\cdot\settwinindex(\structD)$.

  Let $p\in I$, $q\in Q$, $B\subseteq A$, and $v,w\in A^*$. Then the
  transducer $\autT$ has a path labeled $(w,v)$ from
  $(p,\emptyset,\varepsilon)$ to $(q,D(B),\varepsilon)$ iff the
  following hold:
  \begin{itemize}
  \item $D(B)=D(\alphabet{v})$ is the set of letters dependent from
    some letter of $v$.
  \item $w$ results from $v$ by injecting some letters (using
    transitions of the second type $(a,\varepsilon)$) that are
    independent from all letters of $v$ read so far.
  \item The sequence $u$ of injected letters leads from $p$ to $q$ in
    the automaton $\autA$.
  \end{itemize}
  Consequently, $(w,v)$ labels a path from some initial to some
  accepting state iff $(w,v)\in R$. Hence, indeed, $R=R(\autT)$ is
  rational.\medskip
  
  Next, we verify
  $[R]=([L(\autA)]\times\{[\varepsilon]\})\cdot\Id_{\traceMon}$. First,
  suppose $(w,v)=(u_1v_1\cdots u_nv_n,v_1\cdots v_n)\in R$ with the
  properties from above. From \ref{prf:productEpsilon1:i3}, we
  obtain $u_1v_1\cdots u_nv_n \sim u_1u_2\cdots u_n\,v_1v_2\cdots v_n$
  and therefore $([w],[v])=([u_1\cdots u_n]\cdot [v],[v])$ which
  belongs to $\calR$ since $u_1\cdots u_n\in K$ by
  \ref{prf:productEpsilon1:i1}. Thus, $[R]\subseteq\calR$.
  Conversely, let $([uv],[v])\in\calR$, i.e., $u\in K$ and $v\in
  A^*$. With $n=1$, $u_1=u$, and $v_1=v$, we get
  $(uv,v)=(u_1v_1,v_1)\in R$ and therefore $\calR\subseteq[R]$. Thus,
  indeed, $[R]=\calR$.\medskip
  
  It remains to be shown that $R$ is left-closed. So let $n\in\monoN$
  be a number and $u_1,\dots,u_n,v_1,\dots,v_n\in A^*$ words satisfying
  \ref{prf:productEpsilon1:i1}-\ref{prf:productEpsilon1:i3} from
  above and let $w\in A^*$ such that
  \[
    w\sim u_1v_1u_2v_2\cdots u_nv_n\relation{R}v_1\cdots v_n\,.
  \]
  With $u=u_1\cdots u_n$ and $v=v_1\cdots v_n$,
  \ref{prf:productEpsilon1:i3} implies
  \[
    w\sim u_1v_1u_2v_2\cdots u_nv_n \sim
    u_1\cdots u_n\,v_1\cdots v_n = uv\,.
  \]
  Application of Levi's Lemma for traces \cite[p.~74]{DieR95} to the
  equivalence $w\sim uv$ yields $m\in\monoN$ and words
  $u'_1,\dots,u_m'$ and $v'_1,\dots,v'_m$ such that
  \begin{enumerate}[label=(\arabic*)]
    \item $w=u_1'v_1'\,u_2'v_2'\cdots u_m'v_m'$,
    \item $u\sim u_1' u_2'\cdots u_m'=:u'$,
    \item $v\sim v_1' v_2'\cdots v_m'=:v'$, and
    \item $v_1'v_2'\cdots v_i'\parallel u_{i+1}'$ for all $i\in[m-1]$.
  \end{enumerate}
  Note that $u'\sim u\in L(\autA)$ implies $u'\in L(\autA)$ since the
  NFA $\autA$ is closed. Hence we get $w\relation{R}v'\sim
  v$. Thus, indeed, the relation $R$ is left-closed.
\end{proof}

\begin{lemma}\label{lem:productEpsilon2}
  Let $\calL\subseteq\traceMon$ be rational. Then the relation
  \[
    \calR=(\{[\varepsilon]\times\calL\})\cdot\Id_{\traceMon}
    =\{(y,xy)\mid x\in\calL,y\in\traceMon\}
  \]
  is lc-rational.

  More precisely, from a dependence alphabet $\structD=(A,D)$ and an
  NFA $\autB=(Q,A,I,T,F)$, one can compute a left-closed transducer
  $\autT$ with
  $[R(\autT)]=(\{[\varepsilon]\}\times[L(\autB)]\})\cdot\Id_{\traceMon}$;
  this computation can be carried out in time polynomial in the size
  $\|\autB\|$ of the NFA $\autB$.
\end{lemma}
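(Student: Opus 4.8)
The plan is to construct a left-closed transducer $\autT$ directly, via the obvious strategy ``first write out a word from $L(\autB)$, then copy the input verbatim'', and then to verify the three properties that make this work: $R(\autT)$ is rational (automatic, since $\autT$ is a transducer), $[R(\autT)]=\calR$, and $R(\autT)$ is left-closed. The underlying asymmetry with Lemma~\ref{lem:productEpsilon1} to exploit is that here the distinguished factor is produced on the \emph{output} tape rather than consumed interleaved in the input, which is why no closedness hypothesis on $\autB$ and no $\settwinindex(\structD)$ factor will be needed.

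Concretely, I would aim for the word relation
\[
  R=\{(w,uw)\mid u\in L(\autB),\ w\in A^*\}\,,
\]
and build $\autT$ from $\autB=(Q,A,I,T,F)$ by adding fresh states $*$ and $*_a$ (for $a\in A$): each transition $(p,a,q)\in T$ of $\autB$ becomes a transition $(p,(\varepsilon,a),q)$ of $\autT$ (so $\autT$ first guesses, writing it on the output tape while reading nothing, a word $u\in L(\autB)$); each final state $p\in F$ gets a transition $(p,(\varepsilon,\varepsilon),*)$; and for every $a\in A$ we add $(*,(a,\varepsilon),*_a)$ and $(*_a,(\varepsilon,a),*)$, which together copy one input letter. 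The initial states of $\autT$ are those of $I$, and $*$ is its only final state. Since $*$ cannot be left once reached and $*_a$ leads back only to $*$, a path from $I$ to $*$ must first stay inside $Q$, spelling on the output a word $u\in L(\autB)$ while reading nothing, then move to $*$ and copy the remaining input $w$; hence $R(\autT)=R$, and $\autT$ is clearly computable in time polynomial in $\|\autB\|$. (The case $[\varepsilon]\in\calL$, i.e.\ $\varepsilon\in L(\autB)$, i.e.\ $I\cap F\neq\emptyset$, is covered by the $\varepsilon$-transition from $I\cap F$ to $*$.) The equality $[R]=\calR$ is then immediate: as $w$ ranges over $A^*$ the trace $[w]$ ranges over all of $\traceMon$, as $u$ ranges over $L(\autB)$ the trace $[u]$ ranges over $\calL$, and $[uw]=[u]\cdot[w]$; hence $[R]=\{(y,xy)\mid x\in\calL,\ y\in\traceMon\}=\calR$.

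The only step needing attention is the left-closedness of $R$, and this is where I expect the ``main obstacle'' to lie --- but it turns out to be harmless here for the reason noted above. Indeed, suppose $w\sim w'\relation{R}v'$; then $v'=uw'$ for some $u\in L(\autB)$, and putting $v:=uw$ gives $(w,v)\in R$ together with $v=uw\sim uw'=v'$, since $\sim$ is a monoid congruence and $w\sim w'$. So $w\relation{R}v\sim v'$, which shows $R$ is left-closed; thus $\autT$ is an lc-transducer with $[R(\autT)]=\calR$, and $\calR$ is lc-rational. In short, the anticipated difficulty --- left-closedness --- collapses to a one-line congruence argument, and the remaining verifications are routine bookkeeping about the transducer's paths.
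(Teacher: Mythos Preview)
Your proof is correct and takes essentially the same approach as the paper: both construct a transducer realizing the word relation $R=\{(w,uw)\mid u\in L(\autB),\ w\in A^*\}$ by concatenating a ``write $u$'' phase with an identity-copying phase, and both verify left-closedness via the one-line congruence argument $w\sim w'\Rightarrow uw\sim uw'$. Your presentation is in fact slightly more explicit about the transducer construction (the states $*_a$ respect the $|uv|\le 1$ labeling constraint that the paper glosses over when invoking a transducer for $\Id_{A^*}$).
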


\begin{proof}
  From $\autB$, one can compute in polynomial time a transducer
  $\autT_1$ with $R(\autT_1)=\{\varepsilon\}\times L(\autB)$ by
  \cref{prop:wordsDirectProducts}(i). Let $\autT_2$ be a transducer
  for $\Id_{A^*}$. From the transducers $\autT_1$ and $\autT_2$, one
  can compute a transducer $\autT$ with
  $R(\autT)=(\{\varepsilon\}\times L(\autB))\cdot\Id_{A^*}$ in
  polynomial time (just add $(\varepsilon,\varepsilon)$-transitions
  from any accepting state of $\autT_1$ to any initial state of
  $\autT_2$).
  
  We first show that $R(\autT)$ is even left-closed. So let
  $u\in L(\autB)$ and $v\sim v'$ be arbitrary words such that
  $v\sim v'\relation{R(\autT)}uv'$. Then we have
  $v\relation{R(\autT)}uv\sim uv'$. Hence, indeed, $R(\autT)$ is
  left-closed implying that $\autT$ is left-closed.
  
  Next we show
  $(\{[\varepsilon]\}\times
  [L(\autB)])\cdot\Id_{\traceMon}=[R(\autT)]$. For the inclusion
  ``$\supseteq$'', let $(v,uv)\in R(\autT)$, i.e., $u\in L(\autB)$ and
  $v\in A^*$. Then $([v],[uv])=([\varepsilon],[u])\cdot([v],[v])$ is
  contained in the left-hand side, i.e, we showed the inclusion
  ``$\supseteq$''. Conversely, let
  $(y,xy)=([\varepsilon],x)\cdot(y,y)$ belong to the left-hand side,
  i.e., $x\in[L(\autB)]$ and $y\in\traceMon$. From $x\in[L(\autB)]$, we
  obtain a word $u\in L(\autB)$ with $[u]=x$. Further, there is a word
  $v\in A^*$ with $[v]=y$. It follows that $(v,uv)\in R(\autT)$ and therefore
  $(y,xy)=([v],[uv])\in[R(\autT)]$.
\end{proof}

Now the following sufficient condition for the lc-rationality of
$\calR_1\cdot\calR_2$ follows:

\begin{theorem}\label{thm:product}
  Let $\calK\subseteq\traceMon$ be recognizable,
  $\calL\subseteq\traceMon$ rational, and $\calR\subseteq\traceMon^2$
  lc-rational. Then $(\calK\times\calL)\cdot\calR$ is lc-rational.

  More precisely, from a dependence alphabet $\structD=(A,D)$, a
  closed NFA $\autA$, an NFA $\autB$, and a left-closed transducer
  $\autT$, one can compute a left-closed transducer $\autT'$ with
  $[R(\autT')]=([L(\autA)]\times [L(\autB)])\cdot [R(\autT)]$; this
  computation can be carried out in time polynomial in
  $\|\autA\|+\|\autB\|+\|\autT\|+\settwinindex(\structD)$, i.e., in the size of
  the NFAs,  the transducer, and the set twin index $\settwinindex(\structD)$
  of the dependence alphabet~$\structD$.
\end{theorem}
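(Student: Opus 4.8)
The idea is to express $(\calK\times\calL)\cdot\calR$ as a composition of three lc-rational trace relations and then to invoke \cref{prop:composition}(ii). Concretely, put
\[
  \calR_1=(\calK\times\{[\varepsilon]\})\cdot\Id_{\traceMon}=\{(xy,y)\mid x\in\calK,\,y\in\traceMon\}
\]
and
\[
  \calR_2=(\{[\varepsilon]\}\times\calL)\cdot\Id_{\traceMon}=\{(y,zy)\mid z\in\calL,\,y\in\traceMon\}\,.
\]
Since $\calK$ is recognizable, $\calR_1$ is lc-rational by \cref{lem:productEpsilon1}; since $\calL$ is rational, $\calR_2$ is lc-rational by \cref{lem:productEpsilon2}; and $\calR$ is lc-rational by assumption.

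The first step is to check the set-theoretic identity $\calR_1\circ\calR\circ\calR_2=(\calK\times\calL)\cdot\calR$. Recalling that composition is read left-to-right (in accordance with the convention of \cref{thm:ratWordRel}(R2)), a pair $(p,q)$ lies in $\calR_1\circ\calR\circ\calR_2$ iff there are intermediate traces $y,c$ with $p=xy$ for some $x\in\calK$ (this is $\calR_1$, which strips a $\calK$-prefix off the first component), $(y,c)\in\calR$ (which acts on the remaining trace), and $q=zc$ for some $z\in\calL$ (this is $\calR_2$, which prepends an $\calL$-prefix to the second component). Thus $(p,q)\in\calR_1\circ\calR\circ\calR_2$ iff $p=x\cdot y$, $q=z\cdot c$ with $x\in\calK$, $z\in\calL$, and $(y,c)\in\calR$, which is exactly the componentwise product $(\calK\times\calL)\cdot\calR$. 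Applying \cref{prop:composition}(ii) to the three lc-rational relations $\calR_1,\calR,\calR_2$ then shows that $(\calK\times\calL)\cdot\calR$ is lc-rational.

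For the effective statement I would proceed as follows. From the closed NFA $\autA$ compute, by \cref{lem:productEpsilon1} with $\calK=[L(\autA)]$, a left-closed transducer $\autT_1$ with $[R(\autT_1)]=\calR_1$ in time polynomial in $\|\autA\|\cdot\settwinindex(\structD)$; from the NFA $\autB$ compute, by \cref{lem:productEpsilon2} with $\calL=[L(\autB)]$, a left-closed transducer $\autT_2$ with $[R(\autT_2)]=\calR_2$ in time polynomial in $\|\autB\|$; and feed $\autT_1$, the given transducer $\autT$, and $\autT_2$ (in that order) into \cref{prop:composition}(ii) to obtain a left-closed transducer $\autT'$ with
\[
  [R(\autT')]=[R(\autT_1)]\circ[R(\autT)]\circ[R(\autT_2)]=\calR_1\circ\calR\circ\calR_2=([L(\autA)]\times[L(\autB)])\cdot[R(\autT)]\,.
\]
Since the number of composed transducers is the fixed constant $3$, the $O(t^n)$ bound of \cref{prop:composition}(ii) is polynomial in the total size $t=\|\autT_1\|+\|\autT\|+\|\autT_2\|$, which in turn is polynomial in $\|\autA\|+\|\autB\|+\|\autT\|+\settwinindex(\structD)$.

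I do not expect a serious obstacle here: all of the genuine work---the transducer constructions together with the verification of left-closedness via Levi's Lemma for traces---has already been carried out in \cref{lem:productEpsilon1,lem:productEpsilon2}, and closure of lc-rationality under composition is \cref{prop:composition}(ii). The only point demanding a little care is the bookkeeping of which auxiliary relation acts on which component and in which order (so that the left-to-right composition really reproduces the componentwise product), and keeping the running time polynomial by noting that exactly three transducers are being composed.
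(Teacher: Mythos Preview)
Your proposal is correct and essentially identical to the paper's own proof: both express $(\calK\times\calL)\cdot\calR$ as the composition $\calR_1\circ\calR\circ\calR_2$ with $\calR_1=(\calK\times\{[\varepsilon]\})\cdot\Id_{\traceMon}$ and $\calR_2=(\{[\varepsilon]\}\times\calL)\cdot\Id_{\traceMon}$, invoke \cref{lem:productEpsilon1,lem:productEpsilon2} for the lc-rationality of these auxiliary relations, and then apply \cref{prop:composition}(ii) to the three-fold composition. Your write-up is in fact slightly cleaner than the paper's, which abbreviates $\calR_1$ and $\calR_2$ as $\calK\times\{[\varepsilon]\}$ and $\{[\varepsilon]\}\times\calL$ while still using them as the prefix-stripping and prefix-prepending relations you spell out explicitly.
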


\begin{proof}
  In the following, let $\calK=[L(\autA)]$, $\calL=[L(\autB)]$, and
  $\calR=[R(\autT)]$.

  By \cref{lem:productEpsilon1}, one can compute in time polynomial in
  the size of the closed NFA $\autA$ and the set twin index of
  $\structD$ a left-closed transducer $\autT_1$ for the lc-rational
  relation $\calR_1=\calK\times\{[\varepsilon]\}$.
  \cref{lem:productEpsilon2} allows to construct in time polynomial in
  the size of the NFA $\autB$ a left-closed transducer $\autT_2$ for
  the lc-rational relation
  $\calR_2=\{[\varepsilon]\}\times\calL$. From
  \cref{prop:composition}(ii), we obtain a left-closed transducer
  $\autT'$ for the relation $\calR_1\circ\calR\circ\calR_2$ in time
  polynomial in the size of the transducers $\autT_1$, $\autT$, and
  $\autT_2$. Hence, in summary, the construction of $\autT'$ can be
  carried out in the given time bound.

  It remains to be shown that the relation $[R(\autT')]$ equals
  $(\calK\times\calL)\cdot\calR$, i.e., that
  $\calR_1\circ\calR\circ\calR_2=(\calK\times\calL)\cdot\calR$ holds.
  
  Note that $(x,z)\in\calR_1\circ\calR\circ\calR_2$ iff there are
  $y_1,y_2\in\traceMon$ with $(x,y_1)\in \calR_1$, $(y_1,y_2)\in\calR$, and
  $(y_2,z)\in\calR_2$. But $(x,y_1)\in\calR_1$ is equivalent to the existence of
  $k\in\calK$ with $x=k\cdot y_1$. Similarly, $(y_2,z)\in\calR_2$ iff there is
  $\ell\in\calL$ with $z=\ell\cdot y_2$. In summary, we have
  $(x,z)\in\calR_1\circ\calR\circ\calR_2$ iff there exist $k\in\calK$,
  $(y_1,y_2)\in\calR$, and $\ell\in\calL$ with
  $(x,z)=(k\, y_1,\ell\, y_2)$. But this holds iff
  $(x,z)\in(\calK\times\calL)\cdot\calR$.
\end{proof}

\subsection{Preservation of language properties}

Recall that rational word relations preserve the regularity of
languages under left- and right-application. Since rationality and
recognizability are different notions in the trace monoid, this leads
to two possible generalisations; later, \cref{lem:nonPreservation}
will show that none of them holds for rational trace relations (but,
by Theorem~\ref{thm:transformRational}, the right-application of a
rational trace relation transforms a recognizable trace language into
a rational one).

Now restrict attention to lc-rational trace relations $\calR$. Since
$\inv{\calR}$ need not be lc-rational, we now get four
possible preservation results: we could consider rationality or
recognizability as well as left- or right-application. The
following theorem shows that two of them hold,
\cref{lem:nonPreservation} proves that the other two fail.

\begin{theorem}\label{thm:preservation}
  Let $\calR\subseteq\traceMon\times\traceMon$ be an lc-rational trace
  relation.
  \begin{enumerate}[label=(\roman*)]
  \item If $\calK\subseteq\traceMon$ is recognizable, then also
    $^\calR\calK$ is recognizable.

    More precisely, from a dependence alphabet $\structD$, a
    left-closed transducer $\autT$, and a closed NFA $\autA$, one can
    compute in polynomial time a closed NFA $\autB$ with
    $[L(\autB)]={^{[R(\autT)]} [L(\autA)]}$.
  \item If $\calL\subseteq\traceMon$ is rational, then also $\calL^\calR$ is
    rational.

    More precisely, from a dependence alphabet $\structD$, a
    left-closed transducer $\autT$, and an NFA $\autA$, one can
    compute in polynomial time an NFA $\autB$ with
    $[L(\autB)]=[L(\autA)]^{[R(\autT)]}$.
  \end{enumerate}
\end{theorem}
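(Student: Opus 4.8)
The plan is to reduce both statements to the analogous facts about word relations and closed NFAs, using the correspondence between trace languages and closed word languages together with the left-closedness of $\calR$. First I would fix a left-closed transducer $\autT$ with $[R(\autT)]=\calR$ and write $R=R(\autT)$; by definition $R$ is left-closed and rational. For item (ii), let $\calL=[L(\autA)]$ for an NFA $\autA$; I claim that $\calL^\calR=[L(\autA)^R]$. The inclusion $[L(\autA)^R]\subseteq\calL^\calR$ is immediate from $[R]=\calR$. For the converse, suppose $([u],[v])\in[R]$ with $u$ not necessarily in $L(\autA)$ but $[u]\in\calL$, i.e.\ there is $u'\sim u$ with $u'\in L(\autA)$; then $u'\sim u\relation{R}v$, and left-closedness of $R$ gives $v'\sim v$ with $u'\relation{R}v'$, so $v'\in L(\autA)^R$ and $[v]=[v']\in[L(\autA)^R]$. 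Now apply Theorem~\ref{thm:ratWordRel}(R1) to compute in polynomial time an NFA $\autB$ with $L(\autB)=L(\autA)^{R(\autT)}$; then $[L(\autB)]=\calL^\calR$ as required. Note $\autB$ need not be closed, which is why only rationality (not recognizability) is claimed here.

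For item (i), let $\calK=[L(\autA)]$ for a \emph{closed} NFA $\autA$; recall ${}^\calR\calK=\calK^{\inv\calR}$, but $\inv\calR$ need not be lc-rational, so I cannot simply invoke (ii). Instead I work directly: by Theorem~\ref{thm:ratWordRel}(R1) compute an NFA $\autB$ with $L(\autB)={}^{R(\autT)}L(\autA)=\{u\mid\exists v\in L(\autA)\colon (u,v)\in R(\autT)\}$. The main point is that $\autB$ is closed, i.e.\ $L(\autB)$ is a closed word language. To see this, take $u\sim u'\in L(\autB)$; then there is $v'\in L(\autA)$ with $u'\relation{R}v'$, hence $u\sim u'\relation{R}v'$, and left-closedness of $R$ produces $v\sim v'$ with $u\relation{R}v$; since $\autA$ is closed, $v\sim v'\in L(\autA)$ gives $v\in L(\autA)$, whence $u\in L(\autB)$. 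So $\autB$ is closed and $[L(\autB)]$ is recognizable. It remains to check $[L(\autB)]={}^\calR\calK$: the inclusion ``$\subseteq$'' is clear since $(u,v)\in R$ and $v\in L(\autA)$ give $([u],[v])\in\calR$ with $[v]\in\calK$; for ``$\supseteq$'', if $([u],[x])\in\calR$ with $[x]\in\calK$, pick $u,v$ with $[u]=[u]$, $[v]=[x]$, $(u,v)\in R$ — more carefully, pick any representatives $u_0,v_0$ with $u_0\sim u$, $v_0\sim v$ and $(u_0,v_0)\in R$; since $\calK$ is represented by the closed NFA $\autA$ and $[v_0]=[x]\in\calK$ we get $v_0\in L(\autA)$, so $u_0\in L(\autB)$ and $[u]=[u_0]\in[L(\autB)]$.

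The constructions are polynomial because Theorem~\ref{thm:ratWordRel}(R1) is polynomial and no product construction (which would cost $O(t^n)$) is invoked. The only subtle point — and the one I expect to require the most care in the write-up — is verifying closedness of $L(\autB)$ in part (i): this is exactly where left-closedness of $R(\autT)$ is used together with closedness of the input NFA $\autA$, and it is the reason the result for left-application yields \emph{recognizability} whereas right-application yields only \emph{rationality}. Everything else is a routine unwinding of the definitions of $[R]$, ${}^RL$, $L^R$, and closedness, combined with the fact (from Section~\ref{sec:prelim}) that a recognizable trace language is precisely one represented by a closed NFA.
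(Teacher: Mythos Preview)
Your proof is correct, but it takes a different route from the paper's. The paper proceeds compositionally through the machinery of lc-rational relations: for part~(i), it encodes $\calK$ as the lc-rational relation $\calK\times\{[\varepsilon]\}$ via \cref{prop:directProductTraces}(b), composes with $\calR$ via \cref{prop:composition}(ii) to obtain the lc-rational relation ${}^{\calR}\calK\times\{[\varepsilon]\}$, and then extracts the first projection as a closed NFA via \cref{prop:directProductTraces}(a). Part~(ii) is handled analogously. Your argument bypasses all of this and works directly at the word level: you apply \cref{thm:ratWordRel}(R1) once to obtain ${}^{R(\autT)}L(\autA)$ (resp.\ $L(\autA)^{R(\autT)}$), and then verify by hand---using left-closedness of $R(\autT)$ and, for~(i), closedness of $L(\autA)$---that the resulting word language has the right trace image and, in case~(i), is itself closed.

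Your approach is more elementary and self-contained: it does not rely on \cref{prop:composition} or \cref{prop:directProductTraces}, only on the basic \cref{thm:ratWordRel}(R1) and the definitions. The paper's approach, by contrast, exhibits the result as a formal consequence of the closure properties of lc-rational relations already established, which fits the narrative of the section (developing the theory of lc-rational relations as a toolkit). Both yield polynomial-time constructions. One small write-up point: in your inclusion ``$\supseteq$'' for part~(i), the phrase ``pick $u,v$ with $[u]=[u]$'' is garbled; your ``more carefully'' clause fixes it, so just use that version from the start.
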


\begin{proof}
  First, let $\autT$ be a left-closed transducer and $\autA$ a closed
  NFA. By \cref{prop:directProductTraces}(b), we can construct in
  polynomial time a left-closed transducer $\autT_1$ with
  $[R(\autT_1)]=[L(\autA)]\times\{[\varepsilon]\}$. In a second step,
  using \cref{prop:composition}(ii), we can construct in polynomial
  time a left-closed transducer $\autT_2$ with
  $[R(\autT_2)]=[R(\autT)]\circ[R(\autT_1)]$. Finally, by
  \cref{prop:directProductTraces}(a), we can construct in polynomial
  time a closed NFA $\autB$ with
  $[L(\autB)]=\{[v]\mid\exists u\colon([u],[v])\in[R(\autT_2)]\}$.

  This finishes the proof of the first claim since
  \[
    [R(\autT_2)] = [R(\autT)]\circ [R(\autT_1)] = 
    [R(\autT)]\circ \bigl([L(\autA)]\times\{[\varepsilon]\}\bigr) =
    {^{[R(\autT)]}[L(\autA)]}\times\{[\varepsilon]\}
  \]
  and therefore
  \[
    [L(\autB)]={^{[R(\autT)]}}[L(\autA)]\,.
  \]

  The proof of the second claim is analogous.
\end{proof}

We now come to the announced non-preservation results; they hold for
lc-rational trace relations and therefore, in particular, for the
larger class of rational trace relations.

\begin{lemma}\label{lem:nonPreservation}
  Suppose there are $a,b,c,d\in A$ with $(a,b)\in D$ and
  $c\parallel d$.
  
  There exist an lc-rational relation $\calR\subseteq\traceMon^2$, a
  rational set $\calK\subseteq\traceMon$, and a recognizable set
  $\calL\subseteq\traceMon$ such that ${^\calR}\calK$ is not rational and
  $\calL^{\calR}$ is not recognizable.
\end{lemma}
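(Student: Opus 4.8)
The plan is to recycle the three relations of \cref{ex:noComposition} but to avoid the composition that exits the rational world. Recall from there the rational word relations $R_1=\{((ab)^n,(cd)^n)\mid n\ge2\}$ and $R_2=\{(c^md^n,a^mb^n)\mid m,n\ge0\}$, with trace versions $\calR_1=[R_1]$ and $\calR_2=[R_2]$. I would set $\calR:=[R_1\cup\inv{R_2}]=\calR_1\cup\inv{\calR_2}$ and take as witnesses $\calK:=[(cd)^*]$ and $\calL:=[(ab)^*]$. Here $\calK$ is rational because $(cd)^*$ is regular, and $\calL$ is recognizable because $a$ and $b$ are dependent, so every class $[(ab)^k]$ is a singleton and $\{w\mid[w]\in\calL\}=(ab)^*$ is regular. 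To see that $\calR$ is lc-rational: $R_1$ and $\inv{R_2}$ are rational, and their first components lie in $\{a,b\}^*$, a set of pairwise dependent letters; hence if $u\sim u'\relation{R}v'$ with $R\in\{R_1,\inv{R_2}\}$, then $[u']=\{u'\}$, so $u=u'$ and, with $v:=v'$, $u\relation{R}v\sim v'$; thus $R_1$ and $\inv{R_2}$ are left-closed. As rationality and left-closedness are preserved by union (note $\mathord{\sim}\circ(R\cup R')\subseteq(R\circ\mathord{\sim})\cup(R'\circ\mathord{\sim})=(R\cup R')\circ\mathord{\sim}$), the relation $R_1\cup\inv{R_2}$ is lc-rational, so $\calR$ is an lc-rational trace relation.

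Next I would show that ${}^\calR\calK$ is not rational. Since left-application distributes over union, ${}^\calR\calK={}^{\calR_1}\calK\cup{}^{\inv{\calR_2}}\calK$. Using $c\parallel d$ -- so that the class $[c^md^n]$ is determined by $(m,n)$ and $[c^nd^n]\in\calK$ for every $n$ -- a direct computation gives ${}^{\calR_1}\calK=\{[(ab)^n]\mid n\ge2\}$ and ${}^{\inv{\calR_2}}\calK=\{[a^nb^n]\mid n\ge0\}$. Every element here is a singleton trace over the pairwise dependent alphabet $\{a,b\}$, so if ${}^\calR\calK=[L]$ for some regular $L\subseteq A^*$, then necessarily $L=\{(ab)^n\mid n\ge2\}\cup\{a^nb^n\mid n\ge0\}$. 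But then $L\cap a^*b^*$ would be regular, and since no word $(ab)^n$ with $n\ge2$ lies in $a^*b^*$, this intersection equals $\{a^nb^n\mid n\ge0\}$, which is not regular -- a contradiction.

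Then I would show that $\calL^\calR$ is not recognizable. Again $\calL^\calR=\calL^{\calR_1}\cup\calL^{\inv{\calR_2}}$. One gets $\calL^{\calR_1}=\{[c^nd^n]\mid n\ge2\}$, whereas $\calL^{\inv{\calR_2}}$ contributes only the finite set $\{[\varepsilon],[cd]\}$, because $[a^mb^n]\in\calL=[(ab)^*]$ forces the word equation $a^mb^n=(ab)^k$, i.e.\ $(m,n)\in\{(0,0),(1,1)\}$. Hence $\calL^\calR=\{[c^nd^n]\mid n\ge0\}$, and the closed word language $\{w\mid[w]\in\calL^\calR\}$ equals $\{w\in\{c,d\}^*\mid|w|_c=|w|_d\}$ by $c\parallel d$, which is not regular; so $\calL^\calR$ is not recognizable.

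The main difficulty I anticipate is producing a \emph{single} lc-rational $\calR$ witnessing both failures at once, since the two phenomena pull in opposite directions. A relation like $\calR_1$, whose codomain is confined to the ``independence diagonal'' $\{[c^nd^n]\}$, destroys recognizability, but its essentially one-dimensional domain keeps left-application rational; conversely, a relation like $\inv{\calR_2}$, whose domain spreads over the dependent pair $\{a,b\}$, destroys rationality under left-application, but its codomain $\{[c^md^n]\}$ is ``all of $\bbN^2$'' and hence recognizable. Taking $\calR=\calR_1\cup\inv{\calR_2}$ reconciles this, and the only care required is to check that the cross terms introduced by the union are harmless: the surplus term in ${}^\calR\calK$ lies outside $a^*b^*$ and is stripped off by intersecting with $a^*b^*$, while the surplus term in $\calL^\calR$ is merely finite.
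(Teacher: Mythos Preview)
Your proof is correct, but it takes a more complicated route than the paper's. The paper uses a \emph{single} lc-rational relation $\calR=[R]$ with $R=\{(a,c),(b,d)\}^*$, the same $\calK=[(cd)^*]$ and $\calL=[(ab)^*]$ as you, and observes directly that ${}^\calR\calK=\{[u]\mid u\in\{a,b\}^*,\ |u|_a=|u|_b\}$ and $\calL^\calR=\{[v]\mid v\in\{c,d\}^*,\ |v|_c=|v|_d\}$; both non-preservation results then drop out with no case analysis. In particular, the difficulty you anticipated---that a single relation cannot witness both failures---does not actually arise: the letter-by-letter substitution $a\mapsto c,\ b\mapsto d$ already has a domain spanning all of $\{a,b\}^*$ (destroying rationality under left-application) and a codomain confined to the independence plane $\{c,d\}^*$ (destroying recognizability under right-application). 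Your construction $\calR=\calR_1\cup\inv{\calR_2}$ works, but the union forces you to track the cross terms and argue separately that they are harmless (regular and disjoint from $a^*b^*$ on one side, finite on the other). The paper's choice avoids this bookkeeping entirely; on the other hand, your approach has the minor expository benefit of recycling the relations from \cref{ex:noComposition} and thereby tying the counterexample back to the failure of composition.
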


(We refer, again, to Theorem~\ref{thm:transformRational} expressing
that the left- or right-application of a rational trace relation
transforms a recognizable trace language into a rational one.)

\begin{proof}
  Let $R=\{(a,c),(b,d)\}^*$. Then $R$ is rational and, since
  $(a,b)\in D$, even lc-rational. We consider the lc-rational trace
  relation $\calR=[R]$. Since $c\parallel d$, we obtain
  $([u],[v])\in\calR$ if, and only if, $u\in\{a,b\}^*$, $v\in\{c,d\}^*$,
  $|u|_a=|v|_c$, and $|u|_b=|v|_d$.
  
  Consider the regular language $K=\{cd\}^*$ and let $\calK$ denote the
  rational set $[K]$. Since $c\parallel d$, we get $[v]\in\calK$
  iff $v\in\{c,d\}^*$ and $|v|_c=|v|_d$. It follows that
  $[u]\in{^\calR}\calK$ iff $u\in\{a,b\}^*$ and $|u|_a=|u|_b$. Let
  $H\subseteq A^*$ denote the set of words $u\in\{a,b\}^*$ with
  $|u|_a=|u|_b$. Since $(a,b)\in D$, this language $H$ is the only
  language with $[H]={^\calR}\calK$. Since $H$ is not regular, it
  follows that ${^\calR}\calK$ is not rational which proves the first
  claim.
  
  Next, let $L=(ab)^*$ and $\calL=[L]$. Then $[u]\in \calL$ iff $u\in L$
  since $(a,b)\in D$. Hence $\{u\in A^*\mid [u]\in\calL\}$ is the
  regular language $L$, implying that $\calL$ is recognizable. Note
  that $\calL^\calR$ is the set of traces $[v]$ with $v\in\{c,d\}^*$ and
  $|v|_c=|v|_d$ (i.e., it equals $\calK$). Hence the language
  $\{v\in A^*\mid [v]\in\calL^\calR\}$, i.e., $\calL^\calR$ is not
  recognizable.
\end{proof}

The above lemma implies, in particular, that the right-application of
rational trace relations does neither preserve the rationality nor the
recognizability of a trace language. Proposition~\ref{prop:ratRelsplit}
allows to prove the weaker result that the right-application of a
rational relation to a recognizable set yields a rational set.

\begin{theorem}\label{thm:transformRational}
  Let $\calR\subseteq\traceMon\times\traceMon$ be rational and
  $\calL\subseteq\traceMon$ recognizable. Then $\calL^\calR$ is rational.
\end{theorem}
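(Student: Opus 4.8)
The statement to prove is Theorem~\ref{thm:transformRational}: if $\calR$ is a rational trace relation and $\calL$ is recognizable, then $\calL^\calR$ is rational. The natural route is to reduce to the preservation result we already have for lc-rational relations, namely Theorem~\ref{thm:preservation}(ii), by factoring $\calR$ through lc-rational pieces using Proposition~\ref{prop:ratRelsplit}.

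First I would dispose of the degenerate case. Proposition~\ref{prop:ratRelsplit} requires two distinct dependent letters $a,b\in A$. If no such pair exists, then $D$ is the pure identity relation, so $\traceMon\cong(\bbN^{|A|},+)$; in this commutative case rationality and recognizability behave classically and the claim is routine (alternatively one can add a fresh dummy dependent pair, compute, and project back). So assume from now on that such $a,b$ exist.

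The main argument: by Proposition~\ref{prop:ratRelsplit} there are lc-rational trace relations $\calR_1,\calR_2$ with $\calR=\inv{\calR_1}\circ\calR_2$. Then for any $\calL$,
\[
  \calL^\calR = \calL^{\inv{\calR_1}\circ\calR_2}
  = \bigl(\calL^{\inv{\calR_1}}\bigr)^{\calR_2},
\]
where the second equality is just unfolding the definitions of composition and right-application. Now $\calL^{\inv{\calR_1}} = {}^{\calR_1}\calL$ by the identity ${}^RL = L^{\inv R}$ noted in the preliminaries; since $\calR_1$ is lc-rational and $\calL$ is recognizable, Theorem~\ref{thm:preservation}(i) tells us that ${}^{\calR_1}\calL$ is recognizable, hence in particular rational. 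Finally apply Theorem~\ref{thm:preservation}(ii) with the lc-rational relation $\calR_2$ and the rational set $\calL^{\inv{\calR_1}}$: this yields that $\bigl(\calL^{\inv{\calR_1}}\bigr)^{\calR_2}$ is rational. Chaining these, $\calL^\calR$ is rational.

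The only genuinely delicate point is the bookkeeping about which side is ``left'': Proposition~\ref{prop:ratRelsplit} produces $\calR=\inv{\calR_1}\circ\calR_2$ with $\calR_1,\calR_2$ lc-rational, so in the composition it is the \emph{first} factor that appears inverted, and Theorem~\ref{thm:preservation}(i) is exactly tailored to that (left-application of an lc-rational relation preserves recognizability, i.e.\ right-application of its inverse does). So the factorization direction matches the preservation theorems without any mismatch; one just has to be careful to use part~(i) for the inverted factor and part~(ii) for the non-inverted one. Everything here is effective provided Proposition~\ref{prop:ratRelsplit}'s construction is (it is, via Nivat's theorem), though the theorem as stated only claims rationality, not a complexity bound, so no running-time analysis is needed.
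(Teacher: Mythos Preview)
Your proof is correct and follows the same route as the paper: factor $\calR=\inv{\calR_1}\circ\calR_2$ via Proposition~\ref{prop:ratRelsplit}, then apply Theorem~\ref{thm:preservation}(i) to get ${}^{\calR_1}\calL$ recognizable (hence rational) and Theorem~\ref{thm:preservation}(ii) to finish. You are in fact slightly more careful than the paper, which invokes Proposition~\ref{prop:ratRelsplit} without explicitly addressing its hypothesis that two distinct dependent letters exist; your remark on the degenerate case is a reasonable addition, though the ``add a dummy dependent pair and project back'' sketch would need a line or two more to be fully convincing.
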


\begin{proof}
  By \cref{prop:ratRelsplit}, there are lc-rational relations
  $\calR_1$ and $\calR_2$ such that
  $\calR=\inv{\calR_1}\circ\calR_2$. Hence
  $\calL^\calR=\calL^{\inv{\calR_1}\circ\calR_2} =
  (\calL^{\inv{\calR_1}})^{\calR_2} = (^{\calR_1}\calL)^{\calR_2}$.
  From \cref{thm:preservation}, we know that
  $\calK:={^{\calR_1}\calL}$ is recognizable and therefore in
  particular rational. Hence, again using \cref{thm:preservation},
  $\calK^{\calR_2}$ is rational.
\end{proof}

Note that the above lemma cannot be improved: by
\cref{lem:nonPreservation}, there exist an lc-rational relation
$\calR$ and a recognizable set $\calL$ such that $\calL^\calR$ is not
recognizable. Furthermore, there is also a rational set $\calL$ such
that $^\calR\calL$ is not rational. Note that
$^\calR\calL=\calL^{\inv{\calR}}$ and that $\inv{\calR}$ is
rational. Hence, indeed, the above lemma is optimal.

\section{The Reachability Relation of tPDS is LC-Rational}
\label{sec:reachRel}

In this section we consider the reachability relation of
trace-pushdown systems. Concretely, we will show that, by application
of the results from the previous section, this relation is
lc-rational. Recall that $\vdash^*_\autP$ denotes the reachability
relation of the tPDS $\autP=(Q,\Delta)$, i.e., the set of pairs
$(c,d)$ of configurations such that $d$ is reachable from $c$.

Next, we want to introduce the notions of rationality and
lc-rationality for the reachability relation. Note that we cannot
derive them from the classical definitions presented in
\cref{sec:prelim} since the set of configurations $\Conf{\autP}$ is
not a monoid. However, when fixing any pair of states $p,q\in Q$ and
projecting to the pushdown, we obtain the trace relation
\[
  \reach_{p,q}(\autP):=\{([u],[v])\in\traceMon^2
  \mid (p,[u])\vdash_\autP^*(q,[v])\}\,.
\]
We say that the reachability relation of $\autP$ is
\emph{(lc-)rational} if, and only if, for each pair
$p,q\in Q$ of states, the trace relation
$\reach_{p,q}(\autP)$ is (lc-)rational.

The main theorem of this section shows that this holds for
any trace-pushdown system:

\begin{theorem}\label{thm:reachLCRational}
  Let $\structD=(A,D)$ be a dependence alphabet and $\autP=(Q,\Delta)$
  a trace-pushdown system. Then the reachability relation
  $\vdash^*_\autP$ is lc-rational.

  More precisely, from $\structD$, $\autP$, and $p,q\in Q$, one can
  construct in time polynomial in
  $||\autP||^{\mathrm{poly}(\twinindex(\structD))}$ a left-closed
  transducer $\autT$ such that $[R(\autT)]=\reach_{p,q}(\autP)$.
\end{theorem}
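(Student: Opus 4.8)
The plan is to follow Caucal's two-phase strategy, suitably generalized. The key insight from the introduction is that, although a path in a tPDS cannot in general be reorganized into just two phases (one shortening, one non-shortening), it can be reorganized into a \emph{uniformly bounded} number of phases, the bound depending only on the dependence alphabet (this is \cref{prop:swallow}). A \emph{phase} here is a maximal subpath that either never shortens the pushdown (a "write phase") or only shortens it (a "read phase"). So the first step is to invoke \cref{prop:swallow} to write $\vdash^*_\autP$ as a composition of at most $N = N(\structD)$ many single-phase reachability relations, i.e.\ for each $p,q \in Q$ we get $\reach_{p,q}(\autP)=\bigcup \reach^{\mathrm{ph}_1}_{p,r_1}\circ\reach^{\mathrm{ph}_2}_{r_1,r_2}\circ\cdots\circ\reach^{\mathrm{ph}_N}_{r_{N-1},q}$ where the union ranges over intermediate states and each factor is a single-phase reachability relation (either a read phase or a write phase), and the alternation between read and write is fixed by the structure.

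The second step is to establish that each single-phase relation is lc-rational. This splits into two cases. For a \emph{read phase}, one shows (this should be \cref{prop:read}) that the relation $\{([u],[v]) \mid (p,[u])\vdash^*(q,[v])\text{ via a non-lengthening path}\}$ has the shape $(\calK\times\calL)\cdot\Id_{\traceMon}$ or something close to it — intuitively, in a read phase a fixed prefix of the pushdown is consumed and replaced, while a suffix is untouched; so the relation is of the form handled by \cref{thm:product}, with $\calK$ recognizable (the prefixes consumed, which form a recognizable set because the diamond property \ref{def:diamond1} guarantees closedness of the relevant language) and $\calL$ rational. For a \emph{write phase}, one shows (\cref{prop:write}) symmetrically that the relation is again of the form $(\calK\times\calL)\cdot\Id_{\traceMon}$ with $\calK$ recognizable and $\calL$ rational — here condition \ref{def:tPDS1} ($D(w)\subseteq D(a)$) is what makes the newly written material "compatible" with the old prefix, so that the write phase really does behave like prepending a rational trace language. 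In both cases \cref{thm:product} immediately yields that these single-phase relations are lc-rational, and the transducers are computable in the stated polynomial-in-$\settwinindex(\structD)$ time.

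The third and final step is to assemble the pieces. Since each single-phase relation is lc-rational, and lc-rational relations are closed under composition and finite union (\cref{prop:composition}(ii), and union is routine since lc-rationality is preserved by union of transducers), the bounded composition and union from step one yields that $\reach_{p,q}(\autP)$ is lc-rational, with a left-closed transducer computable from the single-phase transducers via $N$-fold composition. The cost analysis: \cref{prop:composition}(ii) costs $O(t^N)$ where $t$ is the total size of the single-phase transducers and $N=N(\structD)$ depends only on the twin index; combined with the polynomial-in-$\settwinindex(\structD)\le 2^{\twinindex(\structD)}$ cost of building each single-phase transducer, the overall bound is $\|\autP\|^{\mathrm{poly}(\twinindex(\structD))}$ as claimed.

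The main obstacle, I expect, is step two — proving that a single-phase (read or write) reachability relation really does have the product shape $(\calK\times\calL)\cdot\Id_{\traceMon}$ with $\calK$ \emph{recognizable} (not merely rational). Rationality of $\calK$ would be easy; recognizability requires that the relevant set of consumed/produced prefixes be \emph{closed} under trace equivalence, and that is exactly where the diamond property \ref{def:diamond1} (equivalently \ref{def:diamond1'}, via \cref{lem:(P2)-suffices}) must be used: it lets one permute independent transitions along a phase so that equivalent pushdown contents are treated uniformly. Getting this closedness argument right — tracking how the diamond property lifts from single transitions to whole phases, and how \ref{def:tPDS1} keeps the "untouched suffix" genuinely independent of the active prefix — is the technical heart of the argument; the composition bookkeeping in step three, by contrast, is essentially mechanical given the earlier results.
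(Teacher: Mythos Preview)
Your outline captures the shape of the argument but omits a step that is not optional: \emph{saturation}. You invoke \cref{prop:swallow} directly on the given system $\autP$, but that proposition (look at its proof: it rests on \cref{lem:swallow1} and \cref{lem:swallow2}, both of which explicitly assume a \emph{saturated} system) only yields the bounded-phase decomposition once the system has been closed under shortcuts $(p,a,\lnf(uv),r)$ whenever $(p,a,ubv,q),(q,b,\varepsilon,r)\in\Delta$ with $u\parallel b$. Without saturation the number of read/write alternations is not bounded by anything depending only on $\structD$: already for the free monoid (twin index~$1$) a cycle of transitions such as $(p,a,aa,q),(q,a,\varepsilon,r),(r,a,aa,s),(s,a,\varepsilon,p)$ forces arbitrarily many alternations. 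The paper's proof therefore has \emph{three} stages, not two: first saturate $\autP$ to $\autP^{(\infty)}$ (\cref{prop:saturation}, polynomial in $\|\autP\|^{\alpha(\structD)}$), then apply the bounded-phase result and the single-phase lc-rationality to $\autP^{(\infty)}$ (\cref{thm:reachSaturated}). Proving that saturation terminates, preserves \ref{def:tPDS1} and \ref{def:diamond1}, and does not change the reachability relation is itself a nontrivial block of work (\cref{lem:saturationCorrect1,lem:saturationCorrect2}).

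A second, smaller gap: your notion of ``write phase'' is too coarse. In the paper a write phase is not merely ``no transition shortens the pushdown'' but rather ``every transition replaces a letter from a \emph{single fixed twin class} $T\in\twins(\structD)$ by a nonempty word'' (that is the hypothesis $\Delta\subseteq Q\times\twins(a)\times A^+\times Q$ of \cref{prop:write}). This restriction is what makes the write-phase relation collapse to $\bigcup_{a\in T}(\{[a]\}\times[H_a])\cdot\Id_{\traceMon}$; without fixing the twin class you do not get a single untouched suffix and \cref{thm:product} does not apply. Correspondingly, the relation $\Vdash_{\ov T}$ in \cref{prop:swallow} is indexed by a sequence of twin classes, not merely by a read/write alternation pattern, and the bound is $m\le\twinindex(\structD)$ write phases (each for a specific $T_i$), interleaved with read phases. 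Your step three is fine once these corrections are in place, but your assessment that ``step two is the technical heart'' understates the difficulty: the combinatorics of \cref{lem:swallow1} (pushing a write transition past later phases using saturation plus the diamond property) is where most of the work lies.
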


Suppose $D=A\times A$. Then $\twinindex(\structD)=1$
and the trace-pushdown system $\autP$ is actually a classical pushdown
system. Furthermore, in this case, the transducer $\autT$ can be
constructed in time polynomial in the size of the pushdown system
$\autP$---thus, the above theorem generalizes the classical results
from~\cite{Cau88,FinWW97}.

The proof of this theorem (that can be found on page
\pageref{sec:proof-forwards}) is inspired by the work by Finkel et
al.\ \cite{FinWW97}.\footnote{The result originates from \cite{Cau88}
  where Caucal demonstrates that the reachability relation of a
  pushdown system is effectively prefix-recognizable, i.e., a finite
  union of sets of the form $\{(uw,vw)\mid u\in U, v\in V, w\in A^*\}$
  for regular languages $U$ and $V$, but our construction generalizes
  the one from \cite{FinWW97}.} To explain its idea and
particularities, we first start with a classical pushdown system
$\autP=(Q,\Delta)$ (i.e., with a trace-pushdown system over the trace
monoid $A^*$). Suppose there are states $p$, $q$, and $r$ and
transitions $(p,a,bv,q)$ and $(q,b,\varepsilon,r)$ such that
$(p,ax) \vdash_\autP (q,bvx)\vdash_\autP (r,vx)$ holds for any word
$x$. If we add the transition $(q,a,v,r)$ to $\Delta$ that allows to
go from $(p,ax)$ to $(r,vx)$ in one step, the reachability relation
does not change. We keep adding such ``shortcuts'' and call the
resulting set of transitions $\Delta^{(\infty)}$. Note that this set
of transitions is finite since any added ``shortcut'' writes a word
that is shorter than the word $bv$ from the transition $(p,a,bv,q)$
above. Further, the reachability relations of $\autP$ and of
$\autP^{(\infty)}=(Q,\Delta^{(\infty)})$ are the same. We next split
$\Delta^{(\infty)}$ into the set of transitions shortening the
pushdown and the set of transitions that do not shorten the pushdown:
\begin{align*}
  \Delta^{(\infty)}_\varepsilon &= \{(p,a,\varepsilon,q)\in\Delta^{(\infty)}\}
  & \autP^{(\infty)}_\varepsilon &= (Q,\Delta^{(\infty)}_\varepsilon)\\
  \Delta^{(\infty)}_+ &= \{(p,a,v,q)\in\Delta^{(\infty)}\colon |v|\ge1\}
  & \autP^{(\infty)}_+ &= (Q,\Delta^{(\infty)}_+)\,.
\end{align*}
The crucial point of the arguments by Finkel et al.\ is the following:
for any two configurations $(p,u)$ and $(r,w)$, we then get
$(p,u)\vdash_\autP^*(r,w)$ if, and only if,
$(p,u)\vdash_{\autP^{(\infty)}}^*(r,w)$ if, and only if, there exists
a configuration $(q,v)$ such that
$(p,u)\vdash_{\autP^{(\infty)}_\varepsilon}^*(q,v)
\vdash_{\autP^{(\infty)}_+}^* (r,w)$. In other words, any run of the
original system $\autP$ can be simulated by a run of the system with
shortcuts $\autP^{(\infty)}$ that first shortens the pushdown (using
transitions from $\Delta^{(\infty)}_\varepsilon$) and then writes onto
the pushdown (using transitions from $\Delta^{(\infty)}_+$).

It follows that, for any set of configurations $C$, we have
\[
  \reach_{p,r}(\autP)=\bigcup_{q\in Q}
  \reach_{p,q}(\autP_\varepsilon^{(\infty)})\circ
  \reach_{q,r}(\autP_+^{(\infty)})\,.
\]

Due to the very restricted type of transitions in the two subsystems
of $\autP^{(\infty)}$, it follows that
$\reach_{p,q}(\autP_\varepsilon^{(\infty)})$ and
$\reach_{q,r}(\autP_+^{(\infty)})$ are rational word relations.
Consequently, by \cref{thm:ratWordRel}(R2) the reachability relation
$\vdash^*_\autP$ is rational for pushdown systems $\autP$.

The crucial point of the above construction is that any run of the
system $\autP^{(\infty)}$ can be brought into some ``simple form'' by
using shortcuts. Here, ``simple form'' means that it consists of two
phases: the pushdown decreases properly in every step of the first
phase and does not decrease in any step of the second phase.

Our strategy in the proof of \cref{thm:reachLCRational} will extend
the above idea:\label{page:strategy}
\begin{enumerate}
\item First, Propositions~\ref{prop:read} and \ref{prop:write}
  demonstrate that \cref{thm:reachLCRational} holds for
  ``homogeneous'' systems that formalize and strengthen the two types
  of systems $\autP^{(\infty)}_\varepsilon$ and $\autP^{(\infty)}_+$
  from above:
  
  \begin{definition}
    Let $\autP=(Q,\Delta)$ be a trace-pushdown system. It
    is \emph{homogeneous} if one of the following hold:
    \begin{enumerate}[label=(\arabic*)]
      \item $\Delta\subseteq Q\times A\times\{\varepsilon\}\times Q$, i.e.,
        all transitions $(p,a,w,q)\in\Delta$ satisfy $w=\varepsilon$, or
      \item $\Delta\subseteq Q\times\twins(a)\times A^+\times Q$ for a
        letter $a\in A$, i.e., all transitions $(p,b,w,q)\in\Delta$
        satisfy $D(a)=D(b)$ and $w\neq\varepsilon$.
    \end{enumerate}
  \end{definition}
  Note that the set of transitions $\Delta$ of any trace-pushdown
  system $\autP$ can be split into the set of transitions
  $\Delta_\varepsilon$ as above, and the sets
  $\Delta_{\twins(b)}=\{(p,a,w,q)\in\Delta\mid D(a)=D(b)\text{ and
  }w\neq\varepsilon\}$. Hence, the number of these subsystems (that
  was $2$ above) is linear in the twin index $\twinindex(\structD)$ of
  the dependence alphabet $\structD$.
\item Secondly, \cref{thm:reachSaturated} demonstrates
  \cref{thm:reachLCRational} for ``saturated'' systems, i.e., systems
  where no new ``shortcuts'' can be added:
  \begin{definition}
    Let $\autP=(Q,\Delta)$ be a trace-pushdown system. It
    is \emph{saturated} if $(p,a,ubv,q),(q,b,\varepsilon,r)\in\Delta$
    with $u\parallel b$ implies $(p,a,uv,r)\in\Delta$.
  \end{definition}
  The central argument in this proof is that any run can be
  transformed into an equivalent one that consists of a bounded number
  of phases. As explained above, this bound is~2 for pushdown systems,
  \cref{ex:twophases-counterexample} will show that this small bound
  does not suffice for trace-pushdown systems over trace monoids other
  than $A^*$. But we show that this number of phases is linear in the
  twin index $\twinindex(\structD)$ of $\structD$. This number of phases is
  the reason why our procedure is exponential in $\twinindex(\structD)$.
\item Finally, \cref{prop:saturation} proves
  \cref{thm:reachLCRational} in full generality by showing that any
  system can be saturated by adding shortcuts.
\end{enumerate}
We will do the aforementioned steps in the following three
subsections.

\subsection{Reachability in homogeneous systems}
\label{ssec:step2}

We first consider the reachability relation for systems that either
shorten their pushdown in each transition or that do not shorten it in
any transition, but only replace letters $a$ from the pushdown with
$D(a)=D(b)$ for some fixed letter $b$. Accordingly, we prove two
propositions.

The first result considers systems that shorten their pushdown in every
step.
\begin{proposition}\label{prop:read}
  Let $\structD=(A,D)$ be a dependence alphabet, $\autP=(Q,\Delta)$ 
  a trace-pushdown system with
  $\Delta\subseteq Q\times A\times\{\varepsilon\}\times Q$, and
  $p,q\in Q$ two states. Then the relation $\reach_{p,q}(\autP)$ is
  lc-rational.

  Even more, from $\structD$, $\autP$, and $p,q\in Q$, one can compute a
  left-closed transducer $\autT$ with
  $[R(\autT)]=\reach_{p,q}(\autP)$; this computation can be carried
  out in time polynomial in $\|\autP\|+\settwinindex(\structD)$.
\end{proposition}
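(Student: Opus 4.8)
The plan is to describe $\reach_{p,q}(\autP)$ explicitly as a relation of the form covered by \cref{lem:productEpsilon1}, and then simply invoke that lemma. Since every transition $(p',a,\varepsilon,q')\in\Delta$ removes the top letter $a$ of the pushdown and writes nothing, $\autP$ behaves, on its pushdown, like the NFA $\autA=(Q,A,\{p\},T,\{q\})$ with $T=\{(p',a,q')\mid(p',a,\varepsilon,q')\in\Delta\}$: one step just deletes a prefix letter and updates the state. Writing $L_{p,q}=L(\autA)$ for the language of words labeling a path from $p$ to $q$, I would first prove, by induction on the number of steps, the characterization
\[
  \reach_{p,q}(\autP)=\{([w]\cdot[v],[v])\mid w\in L_{p,q},\,v\in A^*\}
  =\bigl([L_{p,q}]\times\{[\varepsilon]\}\bigr)\cdot\Id_{\traceMon}\,.
\]
The ``$\subseteq$'' direction is routine; for ``$\supseteq$'', given $[u]=[a_1a_2\cdots a_n]\cdot[v]$ with $a_1\cdots a_n\in L_{p,q}$, I would peel off $[a_1],[a_2],\dots$ one at a time, using cancellativity of $\traceMon$ to identify the intermediate traces as $[a_ia_{i+1}\cdots a_n]\cdot[v]$ and to see that each peeling is a legal $\vdash_\autP$-step because $[a_i]$ is a prefix of the current trace. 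This prefix-peeling argument is the step that requires the most care.

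Next I would verify that $\autA$ is a \emph{closed} NFA, which is where the defining properties of a tPDS enter. Property \ref{def:tPDS1} holds vacuously here ($D(\varepsilon)=\emptyset\subseteq D(a)$), so \cref{lem:(P2)-suffices} applies and $\autP$ satisfies \ref{def:diamond1}; for transitions writing $\varepsilon$ this is precisely the diamond property of $\autA$ (whenever $p'\xrightarrow{a}p''\xrightarrow{b}p'''$ with $a\parallel b$, there is $q'$ with $p'\xrightarrow{b}q'\xrightarrow{a}p'''$). Hence any path from $p$ to $q$ labeled $xaby$ with $a\parallel b$ can be rerouted to one labeled $xbay$, so $L(\autA)=L_{p,q}$ is closed under $\sim$; thus $\autA$ is a closed NFA and $[L_{p,q}]$ is recognizable.

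Finally I would apply \cref{lem:productEpsilon1} to $\structD$ and the closed NFA $\autA$: it yields a left-closed transducer $\autT$ with $[R(\autT)]=([L(\autA)]\times\{[\varepsilon]\})\cdot\Id_{\traceMon}=\reach_{p,q}(\autP)$, so $\reach_{p,q}(\autP)$ is lc-rational; and since $\|\autA\|\le\|\autP\|$, the running time, polynomial in $\|\autA\|$ and $\settwinindex(\structD)$, is polynomial in $\|\autP\|+\settwinindex(\structD)$. I do not expect a genuine obstacle: the only two points needing real attention are the prefix-peeling characterization of $\reach_{p,q}(\autP)$ (relying on cancellativity) and the closedness of $\autA$ (relying on the diamond property \ref{def:diamond1}); once these are in place, the statement is essentially a repackaging of \cref{lem:productEpsilon1}.
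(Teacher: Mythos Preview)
Your proposal is correct and follows essentially the same route as the paper: build the NFA $\autA$ from the deleting transitions, show it is closed via the diamond property \ref{def:diamond1}, characterize $\reach_{p,q}(\autP)$ as $([L(\autA)]\times\{[\varepsilon]\})\cdot\Id_{\traceMon}$, and invoke \cref{lem:productEpsilon1}. The only cosmetic difference is that the paper verifies closedness of $\autA$ before the characterization rather than after.
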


\begin{proof}
  We first transform the tPDS $\autP$ into the NFA
  $\autA=(Q,A,\{p\},T,\{q\})$ setting
  \[
    (p_1,a,p_2)\in T\iff(p_1,a,\varepsilon,p_2)\in\Delta\,.
  \]
  Note that the tPDS $\autP$ only reads letters from the pushdown and
  never writes anything onto the pushdown.  Essentially, $\autA$ is
  the tPDS $\autP$, but we read letters from the input instead of
  the stack.

  Let $\calK=[L(\autA)]$. We prove that $\calK$ is recognizable and that
  \[
    \reach_{p,q}(\autP)=\left(\calK\times\{[\varepsilon]\}\right) \cdot
    \Id_{\traceMon}\,.
  \]

  To prove the recognizability of $\calK$, it suffices to prove
  \begin{equation}
    u\in L(\autA)\iff[u]\in\calK \label{eq:prfRead}
  \end{equation}
  for all $u\in A^*$ since $L(\autA)$ is regular (note that this
  implies in particular that the NFA $\autA$ is closed). The
  implication ``$\Rightarrow$'' is immediate by the definition of the
  trace language $\calK$.  For the implication ``$\Leftarrow$'', let
  $[u]\in\calK$. By the definition of $\calK$, there exists
  $v\in L(\autA)$ with $[u]=[v]$, i.e., $u\sim v$. Hence $u$ can be
  obtained from $v$ by transposing consecutive independent
  letters. Since the tPDS $\autP$ satisfies the diamond property
  \ref{def:diamond1}, the NFA $\autA$ satisfies the diamond property
  \ref{def:diamond1-DFA}. Hence, the $v$-labeled path in $\autA$ from
  $p$ to $q$ can be transformed into a $u$-labeled path from $p$ to
  $q$. But this means that the word $u$ is accepted by the NFA
  $\autA$, i.e., $u\in L(\autA)$.

  Thus, indeed, the trace language $\calK$ is recognizable.

  We now prove the above characterization of the relation
  $\reach_{p,q}(\autP)$.  So let $u,v\in A^*$. Then
  $([u],[v])\in\reach_{p,q}(\autP)$ iff $(p,[u])\vdash^*(q,[v])$. But
  this is equivalent to the existence of $n\ge0$, transitions
  $(p_{i-1},a_i,\varepsilon,p_i)\in\Delta$, and words $x_i$ for
  $1\le i\le n$ such that
  \begin{enumerate}
  \item $(p,[u])=(p_0,[a_1x_1])$,
  \item $x_i\sim a_{i+1} x_{i+1}$ for all $1\le i<n$, and
  \item $(p_n,[x_n])=(q,[v])$.
  \end{enumerate}
  Now suppose these transitions and words exist. The construction of
  the NFA $\autA$ yields $(p_{i-1},a_i,p_i)\in T$ for all
  $1\le i\le n$. Hence
  \[
    p=p_0\xrightarrow{\ a_1\ }p_1\xrightarrow{\ a_2\ }p_2\cdots \xrightarrow{\ a_n\ }p_n=q
  \]
  is a path in $\autA$ implying $a_1\cdots a_n\in L(\autA)$ and
  therefore $[a_1\cdots a_n]\in\calK$. Recall that we also have
  $[x_n]=[v]$ and, by induction,
  $[x_i]=[a_{i+1} a_{i+2}\cdots a_n v]$, in particular
  $u\sim a_1 x_1\sim a_1a_2\cdots a_n v$. Consequently,
  \[
    ([u],[v])=([a_1\cdots a_n],[\varepsilon])\cdot ([v],[v])
    \in(\calK\times\{[\varepsilon]\})\cdot \Id_{\traceMon}
  \]
  which proves the first inclusion.

  Conversely, suppose $[u]\in\calK$ and $v\in A^*$, i.e.,
  $([u],[\varepsilon])\cdot([v],[v])\in
  (\calK\times\{[\varepsilon]\})\cdot\Id_{\traceMon}$. Since
  $[u]\in\calK$, we obtain from Equation~\eqref{eq:prfRead} a
  $u$-labeled accepting path, say
  \[
    p=p_0\xrightarrow{\ a_1\ }p_1\xrightarrow{\ a_2\ }p_2\cdots \xrightarrow{\ a_n\ }p_n=q
  \]
  is such a path. The construction of the NFA $\autA$ yields
  $(p_{i-1},a_i,\varepsilon,p_i)\in\Delta$ for all $1\le i\le n$. With
  $x_i=a_i a_{i+1}\cdots a_n\, v$, there are also words such that the
  above three properties hold, implying
  $([u],\varepsilon)\cdot([v],[v])\in\reach_{p,q}(\autP)$ and
  therefore the converse inclusion.

  Since the closed NFA $\autA$ can be computed in time polynomial in
  the size of $\autP$, the claim follows from
  Lemma~\ref{lem:productEpsilon1}.
\end{proof}

In the above proof, we constructed a recognizable trace language
$\calK$, proved that the trace relation $\reach_{p,q}(\autP)$ equals
$(\calK\times\{[\varepsilon]\})\cdot\Id_{\traceMon}$ (this part of the
proof used that $\autP$ can only shorten its pushdown) and used that
such relations are lc-rational.

We next consider systems where no transition shortens the
pushdown. Here, we make the additional assumption that all transitions
replace letters $b$ with $D(a)=D(b)$ for some fixed letter $a$.  In
this situation, we analogously to the above proof construct a rational
trace language $\calL$, prove that the trace relation
$\reach_{p,q}(\autP)$ equals
$(\{[\varepsilon]\}\times\calL)\cdot\Id_{\traceMon}$, and use that
such relations are lc-rational.

Recall that $\twins(a)$ is the set of twins of the letter $a$, hence
we restrict to systems where all transitions replace twins of $a$.

\begin{lemma}\label{lem:write1}
  Let $\structD=(A,D)$ be a dependence alphabet, $\autP=(Q,\Delta)$ a
  trace-pushdown system with
  $\Delta\subseteq Q\times \twins(a)\times A^+\times Q$ for some
  $a\in A$, and $p,q\in Q$ be two states.  There exists a regular
  language $H_a\subseteq A^*$ such that, for any $v\in A^*$,
  \[
    \{[w]\mid
    (p,[av])\vdash_{\autP}^*(q,[w])\}=[H_a]\cdot \{[v]\}\,.
  \]

  More precisely, from $\autP$, $a\in A$, and $p,q\in Q$, one can
  construct in time polynomial in the size of $\autP$ an NFA $\autA_a$
  with
  \[
    \{[w]\mid(p,[av])\vdash^*_\autP(q,[w])\}=[L(\autA_a)]\cdot\{[v]\}
  \]
  for all $v\in A^*$.
\end{lemma}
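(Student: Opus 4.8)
The key observation is that since $\autP$ only writes (never shortens the pushdown) and, moreover, only replaces letters $b$ with $D(b)=D(a)$, the bottom of the pushdown never becomes accessible: whatever word $w$ is written on top, its letters all lie in $D(a)$, so no letter of $w$ is independent of the letters that originally sat below the replaced $b$. Concretely I would first argue that if $(p,[av])\vdash^*_\autP(q,[w])$, then $[w]=[w']\cdot[v]$ for some word $w'$, and that the derivation can be ``localized'': the computation that produces $w'$ depends only on the starting letter $a$ and not on the tail $v$. This is where condition \ref{def:tPDS1} is used, exactly as in the proof of \cref{lem:(P2)-suffices}: all letters ever written are in $D(a)$, hence independent of nothing that $v$ touches is affected, and in particular the suffix $[v]$ is carried along untouched.

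**Construction of the automaton.** I would build an NFA $\autA_a$ whose states track a state of $\autP$ together with the current top-of-pushdown letter (a twin of $a$), starting from $(p,a)$ and whose transitions simulate the action of $\Delta$ on the top letter: a transition $(p_1,b,cw'',p_2)\in\Delta$ (so $D(b)=D(a)$, $c$ the new top letter, $w''$ the rest) becomes an $\autA_a$-move that reads $w''$ on the input while updating the control state to $p_2$ and the tracked top letter to $c$ — and since $D(c)=D(a)$ again (as all written letters lie in $D(a)$, and actually the new top letter must be a twin of $a$ by the shape of $\Delta$'s right-hand sides together with \ref{def:tPDS1}), this is consistent. Acceptance requires reaching control state $q$. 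I then set $H_a=L(\autA_a)$; the language is regular and the automaton has size polynomial in $\|\autP\|$, since it has at most $|Q|\cdot|A|$ states and the $w''$-labels are broken into chains of length $<k$, contributing the factor $k\cdot|\Delta|$ already present in $\|\autP\|$.

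**Correctness.** For the inclusion $\supseteq$, given $[w]=[w']\cdot[v]$ with $w'\in L(\autA_a)$, I would read off from an accepting run of $\autA_a$ a sequence of $\Delta$-transitions and words $x_i$ witnessing $(p,[av])\vdash^*_\autP(q,[w'v])=(q,[w])$, essentially by reversing the simulation and appending the untouched suffix $[v]$ at each step (the suffix commutes past everything written because all written letters are in $D(a)$ and the original word below $a$ is arbitrary — here one uses cancellativity of $\traceMon$ and the fact that $[b]\cdot s=[c]\cdot t$ with $(b,c)\in D$ forces $b=c$, which was noted right after the cited Theorem). For the inclusion $\subseteq$, I would induct on the length of the derivation $(p,[av])\vdash^*_\autP(q,[w])$: the first step replaces the prefix $[b]$ (a twin of $a$) by $[cw'']$, so the pushdown becomes $[cw''\cdot v]$ with $D(c)=D(a)$; by induction hypothesis applied to the letter $c$ and tail $w''v$, we get $[w]=[w_1]\cdot[w''v]$ with $w_1\in L(\autA_{c})$, and one checks $\autA_c$ (started from the appropriate intermediate state) is simulated inside $\autA_a$ so that the concatenated word is in $L(\autA_a)$.

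**The main obstacle.** The delicate point is the claimed ``localization'': that the suffix $[v]$ can always be peeled off and that the part of the derivation acting on the top never consults $v$. A derivation step is defined by $[u]=[ax]$ and $[v']=[wx]$ for \emph{some} witness word $x$, and a priori the trace equivalence could shuffle letters of the written word into the region occupied by (the old) $v$. What rules this out is precisely \ref{def:tPDS1}: every written letter is in $D(a)$, and the top letter stays a twin of $a$, so written letters are dependent on the top and on each other in the relevant way, and cannot commute past into an arbitrary suffix in a way that would entangle them. I would make this rigorous by carrying, as an invariant along the derivation, a factorization $[\text{current pushdown}]=[\text{prefix over }D(a)\text{ whose first letter is a twin of }a]\cdot[v]$, and verifying the invariant is preserved by one $\Delta$-step using \ref{def:tPDS1} and the cancellation property of the trace monoid. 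Getting this invariant and its preservation stated cleanly is the real content of the lemma; everything else is bookkeeping translating between runs of $\autP$ and runs of $\autA_a$.
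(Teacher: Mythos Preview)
Your identification of the main obstacle is correct, and the plan to track the pair (tPDS state, current top letter) is exactly the right data. But the concrete NFA reads the wrong word. Trace a two-step run: from $(p,[av])$ apply $(p,a,c_1w_1,p_1)$ then $(p_1,c_1,c_2w_2,p_2)$; the pushdown becomes $[c_2w_2w_1v]$, so the prefix you need in $[H_a]$ is $[c_2w_2w_1]$. Your NFA, starting at $(p,a)$, reads $w_1$ then $w_2$ and accepts $w_1w_2$: the final top letter $c_2$ is never emitted, and the blocks $w_i$ come out in reverse order compared with their position in the pushdown. Already in the free-monoid case (where traces are just words) this gives the wrong language. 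Your inductive argument for the inclusion $\subseteq$ inherits the defect: the hypothesis yields $[w]=[w_1]\cdot[w''v]$, so the prefix above $[v]$ is $[w_1w'']$, whereas your NFA reads $w''$ first and only then simulates $\autA_c$ to read $w_1$, producing $w''w_1$; in general $[w''w_1]\neq[w_1w'']$.

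The fix is to run the simulation \emph{backward}: start the NFA at the target state $q$, emit the final top letter as the very first output symbol, and for each transition $(r,c,udv,s)\in\Delta$ with $u\parallel d$ add a $uv$-labeled path from $(s,d)$ to $(r,c)$; accept at $(p,a)$. This is precisely the paper's construction, and with this orientation the accepted word is $a_0w_1\cdots w_n$, matching the final pushdown $[a_0w_1\cdots w_nv]$. One smaller correction: your claim that the new top letter $c$ ``must be a twin of $a$'' is false --- \ref{def:tPDS1} only gives $D(c)\subseteq D(a)$. What \emph{is} true, and what your dependence argument actually shows, is that if $c\notin\twins(a)$ then the tPDS is stuck (every written letter lies in $D(a)$ and is hence dependent on any $b'\in\twins(a)$, so no such $b'$ can commute past $c$ to the front); the corresponding NFA states are simply dead ends, not forbidden.
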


\begin{proof}
  Let $\autA=(Q_\autA,A,I,\delta,F)$ be the following
  $\varepsilon$-NFA: we start with the only initial state
  $(q,\varepsilon)$ and all pairs $(r,c)\in Q\times A$ as further
  states (recall that $Q$ is the set of states of the trace-pushdown
  system $\autP$). The idea is that the second component stores the
  top letter of the pushdown that is replaced in the next step (if
  such a step exists).  To start, we add $c$-labeled transitions from
  $(q,\varepsilon)$ to $(q,c)$ for any letter $c\in A$. Then, for any
  transition $(r,c,udv,s)\in\Delta$ with $d\in A$ and $u,v\in A^*$
  such that $u\parallel d$, we add a $uv$-labeled path from $(s,d)$ to
  $(r,c)$. The set of initial states is $I=\{(q,\varepsilon)\}$ and
  the set of final states is $F=\{(p,a)\}$.
  
  We set $H_a=L(\autA)$.

  Now let $v\in A^*$ be arbitrary and set
  $\calL_v=\{[w]\mid (p,[av])\vdash^*_\autP(q,[w])\}$. Hence, it
  remains to be shown that $\calL_v=[H_a]\cdot\{[v]\}$.
  
  First, we verify the inclusion ``$\supseteq$''. Therefore, let
  $x\in[H_a]\cdot\{[v]\}$ be arbitary. Then there is $u\in H_a$ with
  $x=[u]\cdot[v]=[uv]$. Since $u$ is accepted by the $\varepsilon$-NFA
  $\autA$, we find a letter $a_0\in A$, pairs
  $(a_j,w_j)\in\twins(a)\times A^*$ for $j\in[n]$, and tPDS-states
  $q_j\in Q$ for $j\in\{0,1,\dots,n\}$ such that
  $u=a_0 w_1 w_2 \cdots w_n$ as well as
  \begin{itemize}
    \item $q=q_0$,
    \item there is a $w_{j+1}$-labeled path from $(q_j,a_j)$ to
    $(q_{j+1},a_{j+1})$ that does not contain any inner state
    from $Q\times\twins(a)$, and
    \item $(q_n,a_n)=(p,a)$.
  \end{itemize}
  The construction of the $\varepsilon$-NFA $\autA$ implies that
  for all $0\le j<n$ there are transitions
  $(q_{j+1},a_{j+1},u_{j+1} a_j v_{j+1},q_j)\in \Delta$ with
  $u_{j+1}\parallel a_j$ and $w_{j+1}=u_{j+1} v_{j+1}$. Hence, we have
  \begin{align*}
    (p,[av]) = (q_n,[a_n v])
    &\vdash_\autP (q_{n-1},[u_n a_{n-1} v_n v])\\
    &\phantom{\vdash_\autP{}}=(q_{n-1},[a_{n-1} u_n v_n v])
      =(q_{n-1},[a_{n-1} w_n v])\\
    &\vdash_\autP (q_{n-2},[u_{n-1} a_{n-2} v_{n-1} w_n v])\\
    &\phantom{\vdash_\autP{}} =(q_{n-2},[a_{n-2} u_{n-1} v_{n-1} w_n v])
      =(q_{n-2},[a_{n-2} w_{n-1} w_n v])\\
    &\ \ \vdots\\
    &\vdash_\autP (q_0,[a_0 w_1 w_2\cdots w_n v])=(q,[uv])\,.
  \end{align*}
  Consequently, $x=[u]\cdot[v]=[uv]\in \calL_v$. Since
  $x\in [H_a]\cdot\{[v]\}$ was chosen arbitrary, we have
  $\calL_v\supseteq [H_a]\cdot\{[v]\}$.
  
  For the converse inclusion, let $w\in A^*$ with $[w]\in\calL_v$,
  i.e., $(p,[av])\vdash_\autP^* (q,[w])$. In order to prove
  $[w]\in[H_a]\cdot\{[v]\}$, we will construct $u\in H_a$ such that
  $w\sim uv$.
  
  From $(p,[av])\vdash_\autP^* (q,[w])$, we get a
  natural number $n\ge0$, states $q_j\in Q$, and words $x_j\in A^*$
  for all $j\in\{0,1,\dots,n\}$ such that
  \begin{align*}
    (p,[av]) = (q_n,[x_n])
    & \vdash_\autP (q_{n-1},[x_{n-1}])\\
    & \vdash_\autP (q_{n-2},[x_{n-2}])\\
    & \ \ \vdots\\
    & \vdash_\autP (q_1,[x_1])\\
    & \vdash_\autP (q_0,[x_0]) = (q,[w])\,.
  \end{align*}
  Consequently, for any $j$ with $n\ge j>0$, there is a transition
  $(q_j,a_j,u_j',q_{j-1})\in\Delta$ and a word $y_j\in A^*$ with
  $x_j\sim a_j y_j$ and $x_{j-1}\sim u_j' y_j$ (i.e., the
  trace-pushdown system replaces the letter $a_j$ with the trace
  $[u_j']$).  The requirement on $\autP$ implies $a_j\in \twins(a)$ and
  $u_j'\neq\varepsilon$, hence there are $b_{j-1}\in A$ and
  $u_j\in A^*$ with $u_j'=b_{j-1}u_j$. Note that
  $b_{j-1}\in D(b_{j-1})\subseteq D(u_j')\subseteq D(a_j)$ since
  $\autP$ is a trace-pushdown system.

  Now let $n\ge j>1$. Then we obtain
  \[
    b_{j-1} u_j y_j = u_j' y_j \sim x_{j-1} \sim a_{j-1} y_{j-1}\,.
  \]
  From $b_{j-1}\in D(a_j)=D(a)=D(a_{j-1})$, we infer $b_{j-1}=a_{j-1}$
  as well es $u_j y_j \sim y_{j-1}$.

  Consequently, we have
  \begin{align*}
    x_0\sim u_1'y_1 = b_0 u_1 y_1
    &\sim b_0 u_1 u_2 y_3\\
    &\ \ \vdots\\
    &\sim b_0 u_1 u_2 \cdots u_n y_n
  \end{align*}
  Next recall $av\sim x_n\sim a_n y_n$ and $a\in D(a)=D(a_n)$. Hence
  $a=a_n$ and $v\sim y_n$. Thus, we get
  \[
    w\sim x_0\sim b_0 u_1 u_2\cdots u_n y_n \sim
    b_0 u_1 u_2\cdots u_n v = u v
  \]
  with $u:=b_0 u_1 u_2\cdots u_n$.
  
  Finally consider the following path in the $\varepsilon$-NFA $\autA$:
  \[
  (q,\varepsilon)=(q_0,\varepsilon)
  \xrightarrow{b_0}_{\autA} (q_0,b_0)
  \xrightarrow{u_1}_{\autA} (q_1,a_1)
  \cdots
  \xrightarrow{u_n}_{\autA} (q_n,a_n) = (q,a)\,.
  \]
  It witnesses that the word $u$ is accepted by the $\varepsilon$-NFA
  $\autA$, i.e., that $u\in H_a$.

  Thus, indeed, we found a word $u\in H_a$ such that $w\sim uv$ and
  therefore $[w]\in[H_a]\cdot\{[v]\}$.
  
  Since $w\in A^*$ with $[w]\in\calL_v$ was chosen arbitrary, we also
  proved the inclusion $\calL_v\subseteq[H_a]\cdot\{[v]\}$.

  Finally note that the $\varepsilon$-NFA $\autA$ can be constructed
  in time polynomial in the size of the trace-pushdown system $\autP$
  and that $\autA$ can be transformed into an equivalent NFA $\autA_a$
  in time polynomial in the size of $\autA$.
\end{proof}

\begin{proposition}\label{prop:write}
  Let $\structD=(A,P)$ be a dependence alphabet, $\autP=(Q,\Delta)$ a
  trace-pushdown system with
  $\Delta\subseteq Q\times \twins(b)\times A^+\times Q$ for some
  $b\in A$, and $p,q\in Q$ two states. Then the trace relation
  $\reach_{p,q}(\autP)$ is lc-rational.

  More precisely, from $\structD$, $\autP$ and $p,q\in Q$, one can
  compute a left-closed transducer $\autT$ such that
  $[R(\autT)]=\reach_{p,q}(\autP)$; this computation can be carried
  out in time polynomial in $\|\autP\|+\settwinindex(\structD)$.
\end{proposition}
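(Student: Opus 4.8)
The plan is to reduce \cref{prop:write} to \cref{lem:write1} and \cref{thm:product}. Fix the letter $b\in A$ with $\Delta\subseteq Q\times\twins(b)\times A^+\times Q$ and states $p,q\in Q$. Since every transition of $\autP$ rewrites a twin of $b$ and writes a non-empty word, any run from $(p,[u])$ that performs at least one step forces a factorization $[u]=[a]\cdot[x]$ with $a\in\twins(b)$; and as distinct twins of $b$ are mutually dependent, this $a$ is uniquely determined by $[u]$. For each $a\in\twins(b)$, \cref{lem:write1} (applied with this $a$, which is legitimate since $\twins(a)=\twins(b)$) yields a regular language $H_a\subseteq A^*$ (with an NFA for $H_a$ constructible from $\autP$ in polynomial time) such that $\{[w]\mid(p,[av])\vdash^*_\autP(q,[w])\}=[H_a]\cdot\{[v]\}$ for all $v\in A^*$. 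I would then prove the decomposition
\[
  \reach_{p,q}(\autP)\;=\;E\ \cup\ \bigcup_{a\in\twins(b)}\bigl(\{[a]\}\times[H_a]\bigr)\cdot\Id_{\traceMon},
\]
with $E=\Id_{\traceMon}$ if $p=q$ and $E=\emptyset$ otherwise: ``$\supseteq$'' is immediate ($\Id_{\traceMon}\subseteq\reach_{p,p}(\autP)$ via empty runs, and every pair $([a]\cdot t,\,h\cdot t)$ with $h\in[H_a]$ is reachable by \cref{lem:write1}), and for ``$\subseteq$'' one takes $([u],[w])\in\reach_{p,q}(\autP)$ and distinguishes an empty witnessing run (then $p=q$ and $[u]=[w]\in E$) from a non-empty one (then $[u]=[a][x]$ for the unique $a\in\twins(b)$, and \cref{lem:write1} places $[w]$ in $[H_a]\cdot\{[x]\}$).

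It then remains to observe that the right-hand side is lc-rational and efficiently computable. The trace language $\{[a]\}$ is recognizable (its word preimage is the finite, closed, hence regular language $\{a\}$, as $[a]$ is a singleton class), $[H_a]$ is rational, and $\Id_{\traceMon}$ is lc-rational, so \cref{thm:product} produces a left-closed transducer for $(\{[a]\}\times[H_a])\cdot\Id_{\traceMon}$, and its complexity bound---instantiated with a constant-size closed NFA for $\{a\}$, the polynomial-size NFA for $H_a$, and a fixed left-closed transducer for $\Id_{A^*}$---is polynomial in $\|\autP\|+\settwinindex(\structD)$. Finally $\reach_{p,q}(\autP)$ is a union of at most $|\twins(b)|+1\le\|\autP\|+1$ lc-rational relations, and lc-rational trace relations are closed under finite union (rational word relations are, and $\mathord{\sim}\circ(R_1\cup R_2)\subseteq(R_1\cup R_2)\circ\mathord{\sim}$ whenever $R_1,R_2$ are left-closed), so a disjoint-union construction gives the claimed left-closed transducer $\autT$ with $[R(\autT)]=\reach_{p,q}(\autP)$ within the stated time bound.

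The only delicate point is the ``$\subseteq$'' direction of the decomposition, i.e.\ that the part of the pushdown below the topmost twin of $b$ survives unchanged, as a suffix, throughout the whole run---which is precisely what makes the ``$[H_a]\cdot\{[v]\}$'' in \cref{lem:write1} the \emph{exact} set of reachable contents. This is where \ref{def:tPDS1} enters: a transition turns a twin $a$ of $b$ into a non-empty $w$ with $\alphabet w\subseteq D(w)\subseteq D(a)=D(b)$, so no letter of the suffix can ever become a minimal twin of $b$ (it would have to be independent of $\alphabet w\subseteq D(b)$, hence not in $D(b)$, contradicting that it is itself a twin of $b$). This monotonicity is already packaged inside \cref{lem:write1}, so in the write-up it suffices to invoke that lemma rather than re-prove it.
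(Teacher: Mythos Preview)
Your proposal is correct and follows essentially the same route as the paper: the same decomposition
\[
  \reach_{p,q}(\autP)=\calI_{p,q}\cup\bigcup_{a\in\twins(b)}\bigl(\{[a]\}\times[H_a]\bigr)\cdot\Id_{\traceMon}
\]
via \cref{lem:write1}, followed by the same appeal to \cref{thm:product} (recognizable $\{[a]\}$, rational $[H_a]$) and a disjoint-union of the resulting left-closed transducers. Your extra paragraph on why the suffix survives is a nice gloss, but as you yourself note it is already absorbed into \cref{lem:write1} and the paper simply invokes that lemma without re-arguing it.
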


\begin{proof}
  For a letter $a\in \twins(b)$, let $H_a$ denote the regular
  set from Lemma~\ref{lem:write1}. Furthermore, set
  $\calI_{p,q}=\Id_{\traceMon}$ if $p=q$ and $\calI_{p,q}=\emptyset$
  if $p\neq q$ (note that $\calI_{p,q}$ is efficiently lc-rational).

  Now consider two words $v',w\in A^*$. Since all transitions from $\Delta$ read
  some letter from $\twins(b)$, we obtain $(p,[v'])\vdash^*_\autP(q,[w])$
  iff
  \begin{itemize}
  \item $p=q$ and $v'\sim w$ (i.e., $([v'],[w])\in\calI_{p,q}$) or
  \item there exist $a\in\twins(b)$ and $v\in A^*$ with $v'\sim av$ and
    $(p,[av])\vdash^*_\autP(q,[w])$.
  \end{itemize}
  By Lemma~\ref{lem:write1}, $(p,[av])\vdash^*_\autP(q,[w])$ holds iff
  there is $u\in H_a$ with $w\sim uv$. Thus,
  $(p,[v'])\vdash^*_\autP(q,[w])$ iff
  \[
    ([v'],[w])\in\calI_{p,q}\cup\bigcup_{a\in\twins(b)}\bigl(\{[a]\}\times[H_a]\bigr)\cdot\Id_{\traceMon}\,.
  \]
  In other words, this relation equals $\reach_{p,q}(\autP)$. Since
  $\{[a]\}$ is recognizable and $[H_a]$ rational, the relation
  $\reach_{p,q}(\autP)$ is indeed lc-rational by
  Theorem~\ref{thm:product}.

  More precisely, note that NFAs for $\{[a]\}$ and $H_a$ can be
  computed in polynomial time. Hence, by Theorem~\ref{thm:product}, a
  transducer for some left-closed relation $R_a$ with
  $[R_a]=(\{[a]\}\times[H_a])\cdot\Id_{\traceMon}$ can be computed in
  time polynomial in the size of the tPDS and the set twin index of
  $\structD$. The disjoint union of these transducers (for
  $a\in\twins(b)$) together with some transducer for $\calI_{p,q}$ is
  a left-closed transducer $\autT$ with
  $[R(\autT)]=\reach_{p,q}(\autP)$ and can be computed in the
  available time since $|A|\le\|\autP\|$.
\end{proof}

\subsection{Reachability in saturated systems}
\label{ssec:step3}

We saw that any run in a saturated pushdown system over the trace
monoid $A^*$ can be simulated by a run consisting of two phases:
first, the pushdown shortens and then, it increases. In this section,
we want to prove that a similar property holds for saturated
trace-pushdown systems, i.e., that every run can be simulated by a run
consisting of a uniformly bounded number (that only depends linearly
on $\twinindex(\structD)\le|A|$) of phases. By
\cref{prop:composition,prop:read,prop:write}, it then follows
immediately that the reachability relation of a saturated
trace-pushdown system is lc-rational. The following example shows
that, differently from the pushdown case, we cannot bound the number
of phases to two.

\begin{figure}[h]
  \begin{center}
    \begin{tikzpicture}[nfa]
      \node[state] (p) {$0$};
      \node[state,right of=p] (q) {$1$};
      \node[state,right of=q] (r) {$2$};
      \node[state,right of=r] (s) {$3$};
      
      \draw (p) edge[loop above] node {$a\mid abc$} (p)
                edge node[below] {$a\mid c$} (q);
      
      \draw (q) edge[loop above] node {$b\mid \varepsilon$} (p)
                edge node[below] {$b\mid \varepsilon$} (r) ;
      
      \draw (r) edge[loop above] node {$b\mid bde$} (r)
                edge node[below] {$b\mid e$} (s);
      \draw (s) edge[loop above] node{$d\mid \varepsilon$} (s);
    \end{tikzpicture}
  \end{center}
  \caption{The trace-pushdown system from Example
    \ref{ex:twophases-counterexample}.\label{fig:twophases-counterexample}}
\end{figure}
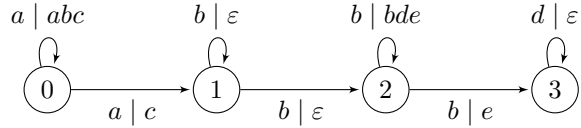

\begin{example}\label{ex:twophases-counterexample}
  We consider the dependence alphabet $\structD=(A,D)$ with
  $A=\{a,b,c,d,e\}$ and the dependence relation $D$ is given by the
  following table:
  \begin{center}
    \begin{tabular}{c|ccccc}
      $x$ & $a$ & $b$ & $c$ & $d$ & $e$\\\hline
      $D(x)$ & $A$ & $\{a,b,d,e\}$ & $\{a,c\}$ & $\{a,b,d\}$ & $\{a,b,e\}$
    \end{tabular}
  \end{center}
  Further, let $\autP=(Q,\Delta)$ be the trace-pushdown system from
  \cref{fig:twophases-counterexample}.  One can check that $\autP$ is
  saturated. The following is the only run from the configuration
  $(0,[a])$ to the configuration $(3,[e^4c^4])$:
  \begin{align*}
    (0,[a]) & \vdash^3 (0,[abcbcbc])
              \vdash (1,[cbcbcbc]) = (1,[b^3c^4]) \\
            & \vdash^2 (2,[bc^4])\\
            & \vdash^3 (2,[bdededec^4])
              \vdash (3,[edededec^4]) = (3,[d^3 e^4 c^4]) \\
            & \vdash^3 (3,[e^4c^4])
  \end{align*}
  Note that this run splits into four phases (that correspond to the
  four lines above); it increases its pushdowns in the first and third
  and decreases them in the second and fourth.
\end{example}

So far, we used the term ``phase'' without defining it formally. To be
 precise, a ``phase'' is a run of a maximal homogeneous
subsystem of $\autP$. These maximal homogeneous subsystems
$\autP_\varepsilon$ and $\autP_T$ for $T\in\twins(\structD)\subseteq 2^A$
are defined as follows:
\begin{align*}
  \Delta_\varepsilon &= \Delta\cap(Q\times A\times\{\varepsilon\}\times Q)
    & \autP_\varepsilon &= (Q,\Delta_\varepsilon)\\
  \Delta_T &= \Delta \cap (Q\times T\times A^+\times Q)
    & \autP_T &= (Q,\Delta_T)
\end{align*}
The set $\Delta_T$ contains all transitions that replace some letter
$a\in T$ by some non-empty word. Note that the definition of a homogeneous
subsystem of $\autP$ requires that $T=\twins(a)$ holds for some letter
$a\in A$. We will later use that the reachability relation in such
subsystems is efficiently lc-rational. We should also note, that there
is only a linear number of such maximal homogeneous subsystems $\autP_T$
since there are only $\twinindex(\structD)\leq|A|$ many sets
$T\in\twins(\structD)$.

Since $\twins(\structD)$ is a partitioning of $A$ into the equivalence
classes of the relation ``twin'', the set of transitions $\Delta$ is the
disjoint union of its subsets $\Delta_\varepsilon$ and $\Delta_T$ for
$T\in\twins(\structD)$. Therefore, any run of $\autP$ splits into maximal
subruns of these subsystems and these subruns are precisely what we called
``phase''. In particular, any run of the system $\autP_\varepsilon$ or
$\autP_T$ is a single phase of a run of the complete system $\autP$.

We write $\vdash_\varepsilon$ for the one-step relation
$\vdash_{\autP_\varepsilon}$ of the system $\autP_\varepsilon$;
$\vdash_T$ is to be understood similarly.

For two binary relations $\models_1$ and $\models_2$ on a set $C$, we
write $\models_1\models_2$ as shorthand for the composition
$\models_1\circ\models_2$.

\begin{definition}
  Let $\autP$ be a trace-pushdown system.
  \begin{itemize}
  \item For $T\in\twins(\structD)$, we set
    $\mathord{\Vdash_T}={\vdash^*_\varepsilon\,\vdash^+_T}\subseteq\Conf{\autP}
    \times \Conf{\autP}$.
  \item For $\ov{T}=(T_1,T_2,\dots,T_n)$ with $T_i\in\twins(\structD)$, we set
    $\Vdash_{\ov{T}} {=}
    \Vdash_{T_1}\,\Vdash_{T_2}\,\cdots\,\Vdash_{T_n}$.
  \end{itemize}
\end{definition}

In other words, a pair of configurations $(c_1,c_2)$ belongs to
$\Vdash_T$ if the system $\autP$ has a run from $c_1$ to $c_2$ that
first shortens the pushdown and then, in the second phase, uses
transitions from $\Delta_T$, only (that replace letters with $\twins(a)=T$
by non-empty words). It should be noted that the first (deleting)
phase is allowed to be empty while the second (writing) phase is
required to be non-empty. More generally, the pair $(c_1,c_2)$ belongs
to $\Vdash_{\ov{T}}$ if there is a run $r$ from $c_1$ to $c_2$ that
can be split into $r_1,r_2,\dots,r_{2n}$ where $r_{2i-1}$ shortens the
pushdown and $r_{2i}$ uses transitions from $\Delta_{T_i}$, only (for
all $1\le i\le n$); again, the odd subruns can be empty while the even
ones are required to be at least of length one.

Clearly, the binary relation $\vdash^*$ is the union of all relations
$\Vdash_{\ov{T}}\,\vdash_\varepsilon^*$ for $\ov{T}$ a sequence of
sets from $\twins(\structD)$ of arbitrary length. Our next aim is to
show that we only need to consider sequences $\ov{T}$ of bounded
length. To this aim, we will prove that any run witnessing
$c\Vdash_{\ov{T}} d$ for some ``long'' sequence $\ov{T}$ implies the
existence of some ``short'' sequence $\ov{U}$ such that
$c\Vdash_{\ov{U}} d$.

To see that the following lemma achieves this for certain runs,
suppose $u_0\neq\varepsilon$. Then using the condition \eqref{eq:swallow1}
from the lemma we infer in particular
$(p_0,[a_0x_0])\Vdash_{(\twins(a_0),T_1,\dots,T_n,T_{n+1})} d_3$. In the
very specific situation of the lemma, we then get
$(p_0,[a_0x_0])\Vdash_{(T_1,\dots,T_n,T_{n+1})} d_3$, i.e., the
sequence\linebreak $(\twins(a_0),T_1,\dots,T_{n+1})$ can be shortened to
$(T_1,\dots,T_{n+1})$.

\begin{lemma}\label{lem:swallow1}
  Let $\autP=(Q,\Delta)$ be a saturated trace-pushdown system. Let
  $T_1,\dots,T_{n+1}\in\twins(\structD)$, $(p_0,a_0,u_0,p_1)\in\Delta$,
  $d_1,d_2,d_3\in\Conf{\autP}$, and $x_0\in A^*$ such that
  \begin{itemize}
    \item $D(a_0)\subseteq D(T_{n+1})$ and
    \item the set $D(T_{n+1})$ is incomparable wrt.\ inclusion with all
      the sets $D(T_i)$ for all $1\le i\le n$.
  \end{itemize}
  Then
  \begin{equation}
    (p_0,[a_0x_0])\vdash(p_1,[u_0x_0])
    \Vdash_{(T_1,T_2,\dots,T_n)} d_1 \vdash^*_\varepsilon d_2
    \vdash_{T_{n+1}} d_3\label{eq:swallow1}
  \end{equation}
  implies
  \begin{equation}
    (p_0,[a_0x_0])\Vdash_{(T_1,\dots,T_n,T_{n+1})} d_3\,.
    \label{eq:swallow2}
  \end{equation}
\end{lemma}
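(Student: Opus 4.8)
The plan is to induct on the length $|u_0|$ of the word written by the first transition, and, within the induction, to distinguish whether the run ever deletes an occurrence that stems from $u_0$. The base case $u_0=\varepsilon$ is immediate: then $(p_0,[a_0x_0])\vdash(p_1,[u_0x_0])$ is a step of $\autP_\varepsilon$ and can be absorbed into the leading (possibly empty) $\vdash^*_\varepsilon$-part of $\Vdash_{T_1}$, giving $(p_0,[a_0x_0])\Vdash_{(T_1,\dots,T_n)}d_1$, while $d_1\vdash^*_\varepsilon d_2\vdash_{T_{n+1}}d_3$ is already a $\Vdash_{T_{n+1}}$-step. So assume $u_0\neq\varepsilon$; then the first step is a single step of $\autP_{\twins(a_0)}$. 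The structural fact driving the argument is that, by \ref{def:tPDS1}, every letter of $u_0$ has dependence set contained in $D(a_0)\subseteq D(T_{n+1})$; since $D(T_{n+1})$ is incomparable with every $D(T_i)$ ($i\le n$), no letter of $u_0$ lies in any such $T_i$, so occurrences of $u_0$-letters are never rewritten before the final step $d_2\vdash_{T_{n+1}}d_3$ and can disappear only through $\varepsilon$-transitions. I would record this with the invariant that the pushdown of every configuration from $(p_1,[u_0x_0])$ on factors as $[u_0^{\mathrm{surv}}]\cdot t$, where $u_0^{\mathrm{surv}}$ collects the still-undeleted occurrences of $u_0$; its verification is routine, the only point requiring care being that when an $\autP_{T_i}$-transition ($i\le n$) fires, its read letter is independent of $u_0^{\mathrm{surv}}$ (by minimality), hence so are all letters of the word it writes (those lie in $D(T_i)$), so that word commutes past $u_0^{\mathrm{surv}}$. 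In particular, every transition strictly between the first and the final step reads a letter, and writes a word, that are independent of every $u_0$-occurrence still present at that point.

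In the first case some $\varepsilon$-transition deletes a $u_0$-occurrence; let $\tau=(r,b,\varepsilon,r')$ be the first such one. The deleted occurrence is present in all preceding configurations, so by the previous paragraph every transition between the first step and $\tau$ reads a letter, and writes a word, independent of $b$. Using the diamond property \ref{def:diamond1'} — a swap of two consecutive transitions $(p,a,v,q),(q,b,w,r)$ with $av\parallel bw$ again yields a run of $\autP$ with the same endpoints — I would slide $\tau$ leftwards until it immediately follows $(p_0,a_0,u_0,p_1)$. Then $b$ is a minimal letter of $[u_0]$, i.e.\ $u_0=u_\alpha b u_\beta$ with $u_\alpha\parallel b$, and the saturation of $\autP$ supplies $(p_0,a_0,u_\alpha u_\beta,r')\in\Delta$. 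Replacing the first step and $\tau$ by this single transition yields a run again satisfying the hypothesis of the lemma (the $T_i$ are unchanged and $D(u_\alpha u_\beta)\subseteq D(u_0)\subseteq D(T_{n+1})$) but with a strictly shorter $u_0$, so the induction hypothesis applies.

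In the second case no $u_0$-occurrence is ever deleted, so $u_0^{\mathrm{surv}}=u_0$ and the pushdown of $d_2$ is $[u_0]\cdot t$. The final step reads some $c\in T_{n+1}$; if $c$ were not an occurrence of a $u_0$-letter, the invariant would give $c\parallel u_0$, but every letter of $u_0$ lies in $D(T_{n+1})=D(c)$ and is hence dependent on $c$ — impossible as $u_0\neq\varepsilon$. So $c$ is a minimal occurrence of $u_0$, and then $D(c)\subseteq D(u_0)\subseteq D(a_0)\subseteq D(T_{n+1})=D(c)$ forces $D(a_0)=D(T_{n+1})$, i.e.\ $a_0\in T_{n+1}$: the first step is itself a step of $\autP_{T_{n+1}}$. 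A dependence argument from the invariant and \ref{def:tPDS1} then shows that $a_0$ is independent of every letter read or written by the transitions strictly between the first and the final step, so — again by \ref{def:diamond1'} — I would slide the first step to the right, past all of them, to sit immediately before the final step. The resulting run performs $\Vdash_{(T_1,\dots,T_n)}$ and then a $\vdash^*_\varepsilon$-part from $(p_0,[a_0x_0])$, and finishes with two consecutive steps of $\autP_{T_{n+1}}$ — the slid first step, writing $u_0\neq\varepsilon$, and the original last step, writing its nonempty word — i.e.\ a $\vdash^+_{T_{n+1}}$-part; this is a $\Vdash_{(T_1,\dots,T_n,T_{n+1})}$-run ending in $d_3$.

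I expect the crux to be the second case: the realization that ``$u_0$ is never deleted'' is not the innocuous situation it first looks like but actually forces $D(a_0)=D(T_{n+1})$, so that the leading step is already a $T_{n+1}$-step and can be absorbed into the final $T_{n+1}$-phase by commuting it rightwards. The remaining ingredients — the length induction, the saturation step of the first case, and the bookkeeping with $u_0^{\mathrm{surv}}$ — are routine once one uses the diamond property \ref{def:diamond1'} in its ``swap adjacent transitions with componentwise-independent labels'' form (here \ref{lem:(P2)-suffices} is what makes the independence checks convenient) and reasons about minimal letters via Levi's lemma for traces.
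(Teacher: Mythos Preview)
Your proof is correct and follows essentially the same approach as the paper's: induction on $|u_0|$, with a case split that either yields a saturation shortcut (Case~1, then apply the induction hypothesis) or forces $D(a_0)=D(T_{n+1})$ and a rightward slide of the first step into the final $T_{n+1}$-phase (Case~2). The only difference is bookkeeping---the paper tracks just the first letter $b$ of $u_0$ and the first transition whose read letter is dependent on $b$ (showing that this transition in fact reads $b$), rather than maintaining your full $u_0^{\mathrm{surv}}$ invariant; this makes the case analysis marginally shorter but is otherwise equivalent to yours.
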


\begin{proof}
  The lemma is shown by induction on the length of the word $u_0$.
  
  If $u_0=\varepsilon$, we have
  \[
    (p_0,[a_0x_0])\vdash_{\varepsilon}
    (p_1,[u_0x_0])
    \vdash_{\varepsilon}^*\,\vdash_{T_1}^+\,
    \Vdash_{(T_2,\dots,T_n)} d_1
    \vdash_\varepsilon^* d_2
    \vdash_{T_{n+1}}d_3
  \]
  and therefore \eqref{eq:swallow2}.
  
  Now suppose $u_0=b u_0'$ with $b\in A$ and $u_0'\in A^*$.  By
  \eqref{eq:swallow1}, there are $k\in\monoN$ and configurations
  $c_j=(q_j,[w_j])$ for all $0\le j\le k$ such that
  \begin{itemize}
    \item $c_0 = (p_0,[a_0x_0])$ and
      $c_1 = (p_1,[u_0x_0])$,
    \item $c_j \vdash c_{j+1}$ for all $1\le j<k$, and
    \item $c_k = d_3$.
  \end{itemize}
  Let $j\in\{1,2,\dots,k-1\}$. Then
  $(q_j,[w_j])=c_j \vdash
  c_{j+1}=(q_{j+1},[w_{j+1}])$ implies the existence of a
  word $x_j\in A^*$ and a transition
  $(q_j,a_j,u_j,q_{j+1})\in\Delta$ such that
  \[
    [w_j]=[a_jx_j] \text{ and }
    [w_{j+1]}= [u_jx_j]
  \]
  (note that this also holds for $j=0$ by the assumption in the
  lemma).
  
  Again by \eqref{eq:swallow1}, the sequence of pairs
  \[
    \bigl((a_j,u_j)\bigr)_{1\le j<k}
  \]
  can be chosen from
  \[
    \left(
    \prod_{1\le m\le n}
    \bigl(A\times\{\varepsilon\}\bigr)^* \bigl(T_m\times A^+\bigr)^+\,
    \right)
    \bigl(A\times\{\varepsilon\}\bigr)^* \bigl(T_{n+1}\times A^+\bigr)\,.
  \]
  In particular, we have $a_{k-1}\in T_{n+1}=\twins(a_{k-1})$, i.e.,
  $D(a_{k-1})=D(T_{n+1})$.
  
  Note that
  \begin{align*}
   b &\in D(u_0) &&(\text{since }u_0\in bA^*)\\
    &\subseteq D(a_0) &&(\text{since }(p_0,a_0,u_0,p_1)\in\Delta)\\
    &\subseteq D(T_{n+1})=D(a_{k-1})\,.
  \end{align*}
  In other words, there is some $\ell\in\{1,2,\dots,k-1\}$ with
  $b\in D(a_\ell)$, we choose $\ell$ minimal with this property implying
  $a_i\parallel b$ for all $1\le i<\ell$.
  
  \begin{claim*}
    $a_\ell=b$.
  \end{claim*}
  
  \begin{claimproof}
    By induction, we construct words $x_i'$ with $x_i\sim b x_i'$ for
    all $1\le i<\ell$. For $i=1$, we have
    \[
      a_1x_1\sim u_0x_0 = b u_0'x_0\,.
    \]
    Since $a_1\parallel b$, this implies the existence of some word
    $x_1'$ with $x_1\sim bx_1'$. Now let $1<i<\ell-1$. Then
    \begin{align*}
      a_ix_i &\sim u_{i-1}x_{i-1}\sim u_{i-1}bx_{i-1}'\\
             &\sim bu_{i-1}x_{i-1}' & (\text{since }b\notin D(a_{i-1})\supseteq D(u_{i-1}))\,.
    \end{align*}
    Since $a_i\parallel b$, this implies the existence of some word
    $x_i'$ with $x_i\sim bx_i'$. Thus, in particular,
    \[
      a_\ell x_\ell\sim u_{\ell-1}x_{\ell-1}\sim u_{\ell-1}bx_{\ell-1}'\sim bu_{\ell-1}x_{\ell-1}'
    \]
    where, as before, the last equivalence follows from
    $b\notin D(a_{\ell-1})\supseteq D(u_{\ell-1})$.  Recall that
    $b\in D(a_\ell)$. Hence
    $a_\ell x_\ell\sim b u_{\ell-1} x_{\ell-1}'$ implies $a_\ell=b$.\qed
  \end{claimproof}
  
  Now, we distinguish two cases, namely whether the word $u_\ell$ is
  empty or not.
  
  First, suppose $u_\ell=\varepsilon$. 
  
  The diamond property of $\autP$ and the independence of $b=a_\ell$
  from all letters $a_1,a_2,\ldots,a_{\ell-1}$ implies that we can
  reorder these transitions. More precisely, there are states
  $p_1,\ldots, p_\ell\in Q$ with $p_\ell=q_{\ell+1}$ such that we obtain
  the following new run from $c_0$ to $c_k$:
  \begin{itemize}
    \item first apply the transition
      $(q_0,a_0,bu_0',q_1)$,
    \item then do the $a_\ell$-transition
      $(q_1, a_\ell, u_\ell, p_1) = (q_1, b, \varepsilon, p_1)$,
    \item then follow the $a_j$-transitions
      $(p_j, a_j, u_j, p_{j+1})$ for $j=1,2,\ldots,\ell-1$,
    \item and finally follow the original path
      $c_{\ell+2},c_{\ell+3},\ldots,c_k$.
  \end{itemize}
  So far, we reordered the path from $c_0$ to $c_k$ in such a way that
  the second transition (starting in the configuration $c_1$) is
  $(q_1, b, \varepsilon, p_1)$.
  
  Note that this reordered path has all the properties of the original
  path $c_0,c_1,c_2,\dots,c_k$, so we abuse notation and call the
  configurations in this new path $c_0,c_1,c_2,\dots,c_k$. The crucial
  effect is that then $\ell=1$ and $u_1=\varepsilon$. In other words,
  we have transitions
  \[
    (q_0,a_0,bu_0',q_1)\text{ and }(q_1,b,\varepsilon,q_2)
  \]
  in the saturated system $\autP$. Saturation implies
  \[
    (q_0,a_0,u_0',q_2)\in\Delta\,.
  \]
  Hence, also the sequence of configurations $c_0,c_2,c_3,\ldots,c_k$
  (omitting $c_1$) is a path in $\autP$ that leads from $c_0$ to
  $c_k$. Note that, in the first transition in this path, $a_0$ is
  replaced by the word $u_0'$ which is properly shorter than
  $u_0$. Hence, by the induction hypothesis, we get
  \eqref{eq:swallow2} which settles the case $u_\ell=\varepsilon$.
  
  Now let $u_\ell\neq\varepsilon$. Then there is $i\in\{1,2,\dots,n+1\}$
  such that $a_\ell\in T_i$, i.e., $D(a_\ell)=D(T_i)$. If $i\le n$, we
  get
  \begin{align*}
    D(T_i) &= D(a_\ell)=D(b) && (\text{since }a_\ell=b)\\
           &\subseteq D(a_0) && (\text{since }(p_0,a_0,bu_0',p_1)=(p_0,a_0,u_0,p_1)\in\Delta)\\
           &\subseteq D(T_{n+1})
  \end{align*}
  which contradicts the incomparability of $D(T_i)$ and $D(T_{n+1})$ for all
  $i\in\{1,\ldots,n\}$. Hence $i=n+1$, i.e.,
  $D(a_\ell)=D(T_{n+1})$. Note that, among all the transitions
  $(q_j,a_j,u_j,q_{j+1})$ for $j\in\{1,2,\dots,k-1\}$, only the last
  one satisfies $D(a_j)=D(T_{n+1})$ and $u_j\neq\varepsilon$. Hence,
  $\ell=k-1$. Thus, we showed $i=n+1$ and $\ell=k-1$.
  
  Note that
  \[
    D(b)\subseteq D(a_0)\subseteq D(T_{n+1})=D(a_\ell)=D(b)
  \]
  implies $D(b)=D(a_0)$.
  Since $a_\ell$ is independent from all the letters from\linebreak
  $\{a_1,a_2,\ldots,a_{\ell-1}\}=\{a_1,a_2,\ldots,a_{k-2}\}$, the same
  applies to $a_0$. Hence, the diamond property of $\autP$ and the
  independence of $a_0$ from all letters $a_1,a_2,\ldots,a_{k-2}$
  imply that we can reorder these transitions. More precisely, there
  are states $p_0,\ldots,p_{k-2}\in Q$ with $p_0=q_0$ such that we have
  the following run from $c_0$ to $c_k$:
  \begin{itemize}
    \item first apply the $a_j$-transitions
      $(p_{j-1}, a_j, u_j, p_j)$ for $j=1,2,\dots,k-2$,
    \item then the $b$-transition
      $(p_{k-2}, a_0, u_0, q_{k-1})$,
    \item and then the $a_{k-1}$-transition
      $(q_{k-1},a_{k-1},u_{k-1},q_k)$.
  \end{itemize}
  Since $D(a_0)= D(b)=D(a_\ell)=D(T_{n+1})$, the resulting run
  satisfies \eqref{eq:swallow2}.
\end{proof}

As explained before the previous lemma, in certain situations, we can
shorten the sequence~$\ov{T}$. To apply the lemma, we require in
particular that, if the run starts with a transition of
$\autP_{\twins(a_0)}$, then this very first phase has length one. Similarly,
the very last phase (a run of $\autP_{T_{n+1}}$) has length one. The
following lemma replaces these requirements by ``non-empty phase'' and
therefore generalizes the above lemma.

\begin{lemma}\label{lem:swallow2}
  Let $\autP=(Q,\Delta)$ be a saturated
  trace-pushdown system. Let $T_0,T_1,\ldots,T_{n+1}\in\twins(\structD)$ such that
  \begin{itemize}
    \item $D(T_0)\subseteq D(T_{n+1})$ and
    \item the sets $D(T_i)$ and $D(T_{n+1})$ are incomparable wrt.~inclusion
      for all $1\le i\le n$.
  \end{itemize}
  Then
  $\Vdash_{(T_0,T_1,\ldots,T_{n+1})} {\subseteq}
  \Vdash_{(T_1,\ldots,T_{n+1})}$.
\end{lemma}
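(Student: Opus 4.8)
The plan is to prove the inclusion by peeling the transitions of the first (writing) phase off one at a time, each time invoking \cref{lem:swallow1}. Take a run witnessing $c\Vdash_{(T_0,T_1,\dots,T_{n+1})}d$ for configurations $c,d$; by definition it decomposes as
\[
  c\vdash^*_\varepsilon(p_0,[a_0x_0])\vdash^+_{T_0}c'\Vdash_{(T_1,\dots,T_{n+1})}d\,,
\]
where $(p_0,a_0,u_0,p_1)\in\Delta$ (with $a_0\in T_0$, $u_0\neq\varepsilon$) is the first transition of the $T_0$-phase, so that $(p_0,[a_0x_0])\vdash(p_1,[u_0x_0])$. The leading $\vdash^*_\varepsilon$-part will be carried along unchanged and merged, at the very end, with the leading $\vdash^*_\varepsilon$-part of the resulting $\Vdash_{(T_1,\dots,T_{n+1})}$-run; so it suffices to show $(p_0,[a_0x_0])\Vdash_{(T_1,\dots,T_{n+1})}d$, which I would prove by induction on the length $m\ge1$ of the $T_0$-phase.

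The core step, which also settles the base case $m=1$, is the following absorption claim: \emph{if $(p_0,a_0,u_0,p_1)\in\Delta$ with $D(a_0)\subseteq D(T_{n+1})$ and $(p_1,[u_0x_0])\Vdash_{(T_1,\dots,T_{n+1})}d$, then $(p_0,[a_0x_0])\Vdash_{(T_1,\dots,T_{n+1})}d$}. Since $D(a_0)=D(T_0)\subseteq D(T_{n+1})$ and $D(T_{n+1})$ is incomparable with each $D(T_i)$ for $1\le i\le n$ (both by the hypotheses of the lemma), all side conditions of \cref{lem:swallow1} are available. To prove the claim, decompose the run $(p_1,[u_0x_0])\Vdash_{(T_1,\dots,T_{n+1})}d$ by splitting off the first step of its final $T_{n+1}$-phase: it has the shape
\[
  (p_1,[u_0x_0])\Vdash_{(T_1,\dots,T_n)}d_1\vdash^*_\varepsilon e_0\vdash_{T_{n+1}}e_1\vdash^*_{T_{n+1}}d\,.
\]
Now $(p_0,[a_0x_0])\vdash(p_1,[u_0x_0])\Vdash_{(T_1,\dots,T_n)}d_1\vdash^*_\varepsilon e_0\vdash_{T_{n+1}}e_1$ is exactly the premise of \cref{lem:swallow1}, which yields $(p_0,[a_0x_0])\Vdash_{(T_1,\dots,T_n,T_{n+1})}e_1$; re-appending the remaining steps $e_1\vdash^*_{T_{n+1}}d$ merely prolongs the (non-empty) final $T_{n+1}$-phase, so that $(p_0,[a_0x_0])\Vdash_{(T_1,\dots,T_{n+1})}d$, proving the claim.

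For the inductive step $m>1$, I would split the $T_0$-phase after its first transition, obtaining $(p_1,[u_0x_0])\vdash^+_{T_0}c'\Vdash_{(T_1,\dots,T_{n+1})}d$, i.e., $(p_1,[u_0x_0])\Vdash_{(T_0,T_1,\dots,T_{n+1})}d$ via a run whose first $T_0$-phase has length $m-1$ (and with empty leading $\vdash^*_\varepsilon$-part). The tuple $(T_0,\dots,T_{n+1})$ is unchanged, so the hypotheses of the lemma still hold and the induction hypothesis gives $(p_1,[u_0x_0])\Vdash_{(T_1,\dots,T_{n+1})}d$. Applying the absorption claim to the transition $(p_0,a_0,u_0,p_1)$ then yields $(p_0,[a_0x_0])\Vdash_{(T_1,\dots,T_{n+1})}d$. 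Prepending $c\vdash^*_\varepsilon(p_0,[a_0x_0])$ and merging the two $\vdash^*_\varepsilon$-segments completes the induction and the proof.

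The combinatorial heart — reordering independent transitions via the diamond property and collapsing a delete-step following a write-step via saturation — is entirely contained in \cref{lem:swallow1}, so I do not expect any genuine difficulty. The only points requiring care are bookkeeping: massaging the trailing $\Vdash_{(T_1,\dots,T_{n+1})}$-run into the precise ``$\Vdash_{(T_1,\dots,T_n)}$ then $\vdash^*_\varepsilon$ then single $\vdash_{T_{n+1}}$'' shape that \cref{lem:swallow1} expects (handled by the split-off/re-append trick above), and checking that the two structural hypotheses — $D(T_0)\subseteq D(T_{n+1})$ and incomparability of $D(T_{n+1})$ with $D(T_1),\dots,D(T_n)$ — are exactly what \cref{lem:swallow1} and the induction hypothesis require and are preserved by each peeling step (which is immediate, since $T_0,\dots,T_{n+1}$ never change).
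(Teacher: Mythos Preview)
Your proposal is correct and follows essentially the same route as the paper: both arguments iterate \cref{lem:swallow1} to absorb the $T_0$-phase one transition at a time, with the split-off/re-append trick on the final $T_{n+1}$-phase used to match the precise shape that \cref{lem:swallow1} expects. The only cosmetic difference is that the paper phrases the iteration as a minimality/contradiction argument on the length~$k$ of the $T_0$-phase (peeling off the \emph{last} $T_0$-step), whereas you do an explicit induction on~$m$ (peeling off the \emph{first} $T_0$-step and then applying your absorption claim); these are two presentations of the same underlying descent.
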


\begin{proof}
  Let $c,d\in\Conf{\autP}$ be two configurations with
  $c\Vdash_{(T_0,T_1,\ldots,T_{n+1})}d$. Then there exist $k\ge1$
  such that
  \begin{equation}
    c\vdash_\varepsilon^* \, 
    \vdash_{T_0}^k \,
    \Vdash_{(T_1,\ldots,T_n)} \,
    \vdash_\varepsilon^* \,
    \vdash_{T_{n+1}}^+ d\,.
    \label{eq:swallow3}
  \end{equation}
  Let $k\ge0$ be minimal such that
  \eqref{eq:swallow3} holds. If $k>0$, there exist
  configurations $c',d'\in\Conf{\autP}$ such that
  \[
    c \vdash_\varepsilon^* \, \vdash_{T_0}^{k-1}
    c' \vdash_{T_0}\,
    \Vdash_{(T_1,\ldots,T_n)}\,
    \vdash_\varepsilon^*\,
    \vdash_{T_{n+1}}
    d' \vdash_{T_{n+1}}^* d
  \]
  From \cref{lem:swallow1}, we obtain
  \[
    c' \Vdash_{(T_1,\ldots,T_n)}
    \,\vdash_\varepsilon^* \, \vdash_{T_{n+1}}^+ d'
  \]
  and therefore
  \[
    c \vdash_\varepsilon^* \, \vdash_{T_0}^{k-1}
    \Vdash_{(T_1,\ldots,T_n)}\,
    \vdash_\varepsilon^*\,
    \vdash_{T_{n+1}}^+ d\,,
  \]
  contradicting our choice of $k$. Hence $k=0$ such that
  \eqref{eq:swallow3} implies $c\Vdash_{(T_1,\ldots,T_{n+1})} d$.
\end{proof}

We will now show that, indeed, the reachability relation of a
saturated system is efficiently lc-rational. The central argument will
be that the above lemma allows to bound the number of phases
uniformly.

\begin{proposition}\label{prop:swallow}
  Let $\structD=(A,D)$ be a dependence alphabet, $\autP=(Q,\Delta)$ a
  trace-pushdown system, and $c,d\in\Conf{\autP}$ two configurations
  of $\autP$ with $c\vdash^*_\autP d$. Then there exist
  $m\le\twinindex(\structD)$ and $\ov{T}=(T_1,\ldots,T_m)$ with
  $T_i\in\twins(\structD)$ such that
  $c\Vdash_{\ov{T}}\,\vdash_\varepsilon^* d$.
\end{proposition}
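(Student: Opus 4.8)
The plan is to start from the observation recalled just before the statement, namely that $\vdash^{*}_{\autP}$ is the union of the relations $\Vdash_{\ov T}\,\vdash^{*}_{\varepsilon}$ taken over all finite sequences $\ov T$ of twin‑classes from $\twins(\structD)$. Thus, given $c\vdash^{*}_{\autP}d$, there is at least one sequence $\ov T=(T_{1},\dots,T_{N})$ with $T_{i}\in\twins(\structD)$ and $c\Vdash_{\ov T}\,\vdash^{*}_{\varepsilon}d$; among all such witnesses I would fix one of minimal length $N$ and prove $N\le\twinindex(\structD)$. Towards a contradiction, assume $N>\twinindex(\structD)$.

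The heart of the argument is a short combinatorial claim about the sequence of dependence sets $D(T_{1}),\dots,D(T_{N})$ inside the poset $\bigl(\{D(a)\mid a\in A\},\subseteq\bigr)$, which has exactly $\twinindex(\structD)$ elements (recall that $T\mapsto D(T)$ is a bijection from $\twins(\structD)$ onto this set). Since $N>\twinindex(\structD)$, two of the sets coincide by pigeonhole, so $\Pi=\{(i,j)\mid 1\le i<j\le N,\ D(T_{i})\subseteq D(T_{j})\}$ is nonempty; I would pick $(i^{*},j^{*})\in\Pi$ with $j^{*}-i^{*}$ minimal. I claim that then $D(T_{m})$ is incomparable (w.r.t.\ inclusion) with $D(T_{j^{*}})$ for every $m$ with $i^{*}<m<j^{*}$: if some such $D(T_{m})$ were comparable to $D(T_{j^{*}})$, then either $D(T_{m})\subseteq D(T_{j^{*}})$, giving $(m,j^{*})\in\Pi$ with strictly smaller gap, or $D(T_{j^{*}})\subsetneq D(T_{m})$, giving $D(T_{i^{*}})\subseteq D(T_{j^{*}})\subsetneq D(T_{m})$ and hence $(i^{*},m)\in\Pi$ with strictly smaller gap — either way contradicting minimality of the gap.

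With this in hand I would apply \cref{lem:swallow2} to the contiguous block $(T_{i^{*}},T_{i^{*}+1},\dots,T_{j^{*}})$, reading $T_{i^{*}}$ as ``$T_{0}$'', $T_{j^{*}}$ as ``$T_{n+1}$'', and the entries strictly in between as ``$T_{1},\dots,T_{n}$'': its hypotheses $D(T_{i^{*}})\subseteq D(T_{j^{*}})$ and incomparability of the intermediate sets with $D(T_{j^{*}})$ are exactly what was just established, so $\Vdash_{(T_{i^{*}},\dots,T_{j^{*}})}\subseteq\Vdash_{(T_{i^{*}+1},\dots,T_{j^{*}})}$. Since composition of binary relations is monotone in each argument, substituting this inclusion into $\Vdash_{\ov T}=\Vdash_{(T_{1},\dots,T_{i^{*}-1})}\circ\Vdash_{(T_{i^{*}},\dots,T_{j^{*}})}\circ\Vdash_{(T_{j^{*}+1},\dots,T_{N})}$ yields $c\Vdash_{\ov T'}\,\vdash^{*}_{\varepsilon}d$ for the strictly shorter sequence $\ov T'$ obtained from $\ov T$ by deleting $T_{i^{*}}$, contradicting minimality of $N$. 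Hence $N\le\twinindex(\structD)$, which is the assertion with $m=N$.

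The step I expect to be most delicate is recognising and justifying the ``minimal‑gap nested pair is automatically deletable'' claim: its proof is short but hinges on the precise shape of the hypotheses of \cref{lem:swallow2} — containment of the first set in the last, incomparability required only of the sets strictly in between — and one must also observe that \cref{lem:swallow2}, although stated for a whole sequence, may be invoked on an arbitrary contiguous sub‑block because $\Vdash_{(\cdot)}$ is a composition and composition is monotone. The remaining bookkeeping (the pigeonhole via the bijection $T\leftrightarrow D(T)$, and the descent on $j-i$) is routine.
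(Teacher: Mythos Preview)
Your argument is essentially the paper's own proof: take a minimal-length witnessing sequence, use pigeonhole to find a pair $i<j$ with $D(T_i)\subseteq D(T_j)$ of minimal gap, check that minimality forces the intermediate $D(T_m)$ to be incomparable with $D(T_j)$, and apply \cref{lem:swallow2} to the block $(T_i,\dots,T_j)$ to delete $T_i$ and contradict minimality. The only differences are cosmetic---the paper parametrises the gap as $n+1$ and splits off configurations $c',d'$ at the block boundaries instead of invoking monotonicity of composition explicitly.
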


\begin{proof}
  Since $c\vdash^* d$, there is some sequence of sets
  $\ov{U}=(U_1,U_1,\ldots,U_m)$ with $U_i\in\twins(\structD)$ such that
  \[
    c \Vdash_{\ov{U}}\, \vdash_\varepsilon^* d\,.
  \]
  Let $\ov{U}$ be such a sequence of minimal length and suppose
  $m>\twinindex(\structD)$. Then there are natural numbers $i$ and $n$
  such that
  \begin{equation}
    \label{eq:inclusion-in-sequence}
    1\le i<i+n+1\le m\text{ and }D(U_i)\subseteq D(U_{i+n+1})\,.
  \end{equation}
  Let $n\ge0$ be minimal such that this holds for some $i$.
  
  Then there are configurations $c',d'\in\Conf{\autP}$ such that
  \[
    c \Vdash_{(U_1,\ldots,U_{i-1})}
    c' \Vdash_{(U_i,\ldots, U_{i+n+1})}
    d' \Vdash_{(U_{i+n+2},\ldots,U_m)}\,\vdash_\varepsilon^* d
  \]
  where we understand $\Vdash_{()}$ as identity relation. For notational
  simplicity, we set $T_j=U_{i+j}$ for all $0\le j\le n+1$ such that
  \[
    c' \Vdash_{(T_0,T_1,\ldots,T_{n+1})} d'\text{ and }
    D(T_0)\subseteq D(T_{n+1})\,.
  \]
  Let $j\in\{1,2,\ldots,n\}$. Then $D(T_j)\nsubseteq D(T_{n+1})$ since
  otherwise we would have chosen $n-j<n$ for $n$. Similarly,
  $D(T_{n+1})\nsubseteq D(T_j)$ since otherwise
  $D(T_0)\subseteq D(T_{n+1})\subseteq D(T_j)$ and we would have chosen $j-i<n$
  for $n$. Hence we showed that the sets $D(T_j)$ and $D(T_{n+1})$ are
  incomparable for all $j\in\{1,2,\dots,n\}$. From \cref{lem:swallow2},
  we obtain
  \[
    c' \Vdash_{(T_1,\ldots,T_{n+1})} d'\,.
  \]
  But this implies
  \[
    c \Vdash_{(U_1,\ldots,U_{i-1},U_{i+1},\ldots,U_m)} \vdash_\varepsilon^* d
  \]
  contradicting our choice of $m$.

  Hence, indeed, $m\le\twinindex(\structD)$.
\end{proof}

\begin{theorem}\label{thm:reachSaturated}
  Let $\structD=(A,D)$ be a dependence alphabet, $\autP=(Q,\Delta)$ a
  saturated trace-pushdown system over $\structD$, and $p,q\in
  Q$. Then the reachability relation $\reach_{p,q}(\autP)$ is
  lc-rational.

  More precisely, from $\structD$, $\autP$, $p$, and $q$, one can
  compute a left-closed transducer $\autT$ with
  $\reach_{p,q}(\autP)=[R(\autT)]$; this computation can be carried
  out in time $\|\autP\|^{\mathrm{poly}(\twinindex(\structD))}$.
\end{theorem}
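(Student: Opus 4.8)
The strategy is to split every run into phases, as provided by \cref{prop:swallow}, to realise each phase by a run of one of the homogeneous subsystems $\autP_\varepsilon$ and $\autP_T$ (for $T\in\twins(\structD)$), and then to glue the corresponding transducers together. The key identity I would establish is that, for all $p,q\in Q$,
\[
  \reach_{p,q}(\autP)=\bigcup_{m=0}^{\twinindex(\structD)}\;\bigcup_{\substack{(T_1,\dots,T_m)\in\twins(\structD)^m\\ p=s_0,\,s_1,\dots,s_{2m+1}=q}}\reach_{s_0,s_1}(\autP_\varepsilon)\circ\reach_{s_1,s_2}(\autP_{T_1})\circ\reach_{s_2,s_3}(\autP_\varepsilon)\circ\cdots\circ\reach_{s_{2m},s_{2m+1}}(\autP_\varepsilon)\,.
\]
For the inclusion ``$\supseteq$'' one uses that $\autP_\varepsilon$ and every $\autP_{T_i}$ are subsystems of $\autP$, so each factor is contained in the corresponding $\reach_{\cdot,\cdot}(\autP)$, and that reachability is transitive. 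For ``$\subseteq$'', given $(p,[u])\vdash^*_\autP(q,[v])$, \cref{prop:swallow} yields $m\le\twinindex(\structD)$ and $\ov{T}=(T_1,\dots,T_m)$ with $(p,[u])\Vdash_{\ov{T}}\,\vdash^*_\varepsilon(q,[v])$; expanding $\Vdash_{\ov{T}}\,\vdash^*_\varepsilon=\vdash^*_\varepsilon\,\vdash^+_{T_1}\,\vdash^*_\varepsilon\cdots\vdash^+_{T_m}\,\vdash^*_\varepsilon$ and using $\vdash^+_{T_i}\subseteq\vdash^*_{\autP_{T_i}}$ places $([u],[v])$ in one of the composition terms, where $s_1,\dots,s_{2m}$ are the states occurring between the phases.

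Before glueing, I would check that $\autP_\varepsilon$ and each $\autP_T$ meet the hypotheses of \cref{prop:read} and \cref{prop:write}, respectively: property \ref{def:tPDS1} is inherited since $\Delta_\varepsilon,\Delta_T\subseteq\Delta$; property \ref{def:diamond1'} is inherited because the state $q'$ obtained from the diamond property of $\autP$ carries two transitions with the same two labels, which therefore lie again in $\Delta_\varepsilon$ (resp.\ $\Delta_T$); and $T=\twins(a)$ for some $a\in A$ by definition of $\twins(\structD)$, so $\autP_T$ indeed has the shape demanded by \cref{prop:write}. Hence, for each of the $1+\twinindex(\structD)$ homogeneous subsystems and each ordered pair of states, Propositions~\ref{prop:read} and~\ref{prop:write} yield a left-closed transducer for the corresponding trace relation in time polynomial in $\|\autP\|+\settwinindex(\structD)$; since $\settwinindex(\structD)\le 2^{\twinindex(\structD)}$, each such transducer has size $\|\autP\|^{\mathrm{poly}(\twinindex(\structD))}$.

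It then remains to assemble these building blocks along the displayed identity. The union ranges over $\sum_{m=0}^{\twinindex(\structD)}\twinindex(\structD)^m\,|Q|^{2m}\le(\twinindex(\structD)+1)\bigl(\twinindex(\structD)\,|Q|^2\bigr)^{\twinindex(\structD)}=\|\autP\|^{\mathrm{poly}(\twinindex(\structD))}$ index tuples, and for each of them I would compose the $2m+1\le 2\twinindex(\structD)+1$ relevant left-closed transducers using \cref{prop:composition}(ii), obtaining a single left-closed transducer for that term in time $O(t^{2\twinindex(\structD)+1})$ with $t=\|\autP\|^{\mathrm{poly}(\twinindex(\structD))}$, hence again in time $\|\autP\|^{\mathrm{poly}(\twinindex(\structD))}$. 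Finally I would let $\autT$ be the disjoint union of all these transducers; since a union of left-closed relations is left-closed (as $\mathord{\sim}\circ(R_1\cup R_2)=(\mathord{\sim}\circ R_1)\cup(\mathord{\sim}\circ R_2)\subseteq(R_1\cup R_2)\circ\mathord{\sim}$) and $[\,\cdot\,]$ commutes with unions, $\autT$ is a left-closed transducer with $[R(\autT)]=\reach_{p,q}(\autP)$, and the whole construction runs in time $\|\autP\|^{\mathrm{poly}(\twinindex(\structD))}$.

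Most of the substance — bounding the number of phases by $\twinindex(\structD)$ — is already in \cref{prop:swallow}, so the remainder is essentially bookkeeping; the one genuinely delicate point is the complexity accounting, namely that a quantity of size $\mathrm{poly}(\|\autP\|)\cdot 2^{O(\twinindex(\structD))}$ (the dependence on $\settwinindex(\structD)$ in Propositions~\ref{prop:read} and~\ref{prop:write}) raised to a power that is $\mathrm{poly}(\twinindex(\structD))$ (the $O(t^n)$ bound of \cref{prop:composition}(ii), with $n$ the number of phases) still stays within $\|\autP\|^{\mathrm{poly}(\twinindex(\structD))}$ — which holds because $2^{O(\twinindex(\structD))}\le\|\autP\|^{O(\twinindex(\structD))}$ once $\|\autP\|\ge 2$.
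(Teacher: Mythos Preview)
Your proof is correct and follows essentially the same route as the paper's: decompose runs into at most $\twinindex(\structD)$ two-phase blocks via \cref{prop:swallow}, realise each phase through the homogeneous subsystems using \cref{prop:read,prop:write}, then compose with \cref{prop:composition}(ii) and take unions. The only cosmetic difference is that the paper packages the building blocks as $\calR_{r_1,r_2}(\autP)=\reach_{r_1,r_2}(\autP_\varepsilon)\cup\bigcup_{T}\reach_{r_1,r_2}(\autP_T)$ and composes a fixed number $n=2\twinindex(\structD)+1$ of these (relying on $\Id_{\traceMon}\subseteq\reach_{r,r}(\autP_\varepsilon)$ for padding), whereas you enumerate the sequence $(T_1,\dots,T_m)$ and the intermediate states explicitly; your version is in fact slightly more careful in spelling out that the subsystems inherit \ref{def:tPDS1} and \ref{def:diamond1'} and that unions of left-closed relations are left-closed.
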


\begin{proof}
  In this proof, let $n=2\cdot\twinindex(\structD)+1$.

  For states $r_1,r_2\in Q$, let
  \[
    \calR_{r_1,r_2}(\autP)=\reach_{r_1,r_2}(\autP_\varepsilon)\cup
    \bigcup_{T\in\twins(\structD)}\reach_{r_1,r_2}(\autP_T)
  \]
  denote the union of the reachability relations of the homogeneous
  subsystems $\autP_\varepsilon$ and $\autP_T$ of $\autP$.  Recall
  that $\Delta_T\neq\emptyset$ implies the existence of some $a\in A$
  with $\twins(a)=T$, hence
  \[
    \calR_{r_1,r_2}(\autP)=\reach_{r_1,r_2}(\autP_\varepsilon)\cup
    \bigcup_{b\in A}\reach_{r_1,r_2}(\autP_{\twins(b)})\,.
  \]
  This relation is lc-rational according to
  \cref{prop:read,prop:write}. Now Proposition~\ref{prop:swallow} implies
  \[
    \reach_{p,q}(\autP)=\bigcup_{\substack{q_0,q_1,\ldots,q_n\in Q,\\
        q_0=p,q_n=q}} \prod_{0\leq i<n}\calR_{q_i,q_{i+1}}(\autP)
  \]
  for any $p,q\in Q$ (here, $\prod$ represents the iterated
  application of $\circ$). Since $\calR_{q_i,q_{i+1}}(\autP)$ is
  lc-rational for any pair $0\le i<n$ and since the class of
  lc-rational trace relations is closed under composition
  (\cref{prop:composition}) and union, the trace relation
  $\reach_{p,q}(\autP)$ is also lc-rational.

  It remains to verify the claimed upper bound.

  First, fix some sequence $\ov{q}=(q_0,\ldots,q_n)$ of states of
  $\autP$. Let $0\le i<n$. By \cref{prop:read,prop:write}, we can
  construct, for all $X\in\{\varepsilon\}\cup\twins(\structD)$, a
  left-closed transducer $\autT_{q_i,q_{i+1},X}$ with
  $[R(\autT_{q_i,q_{i+1},X})]=\reach_{q_i,q_{i+1}}(\autP_X)$ in time
  polynomial in $\|\autP\|+\settwinindex(\structD)$. Let the left-closed
  transducer $\autT_{q_i,q_{i+1}}$ be the disjoint union of all the
  transducers $\autT_{q_i,q_{i+1},Z}$ such that
  $[R(\autT_{q_i,q_{i+1}})]=\calR_{q_i,q_{i+1}}$. Since the number of
  possible values for $X$ is polynomial in $\twinindex(\structD)$, the
  left-closed transducer $\autT_{q_i,q_{i+1}}$ can be constructed in
  time polynomial in $\|\autP\|+\settwinindex(\structD)+\twinindex(\structD)$.

  Let $t$ denote the total size of all these transducers
  $\autT_{q_i,q_{i+1}}$ as well as the tPDS $\autP$ (since the number
  $n$ of these transducers is polynomial in $\twinindex(\structD)$,
  the number $t$ is polynomial in
  $\|\autP\|+\settwinindex(\structD)+\twinindex(\structD)$). By
  \cref{prop:composition}, a left-closed transducer
  $\autT_{\overline{q}}$ with
  $[R(\autT_{\overline{q}})]=\prod_{0\le i\le
    n}[R(\autT_{q_i,q_{i+1}})]$ can be computed in time
  $O(t^n)$.

  Since there are $|Q|^{n+1}\le t^{n+1}$ many such sequences $\ov{q}$
  of states, a left-closed transducer $\autT$ for the union of all
  these relations can be obtained in time
  $t^{\mathrm{poly}(n)}=t^{\mathrm{poly}(\twinindex(\structD))}$. Recall
  that $t$ is polynomial in
  $\|\autP\|+\settwinindex(\structD)+\twinindex(\structD)$. The number of sets
  $D(B)$ for $B\subseteq A$ is at most $2^{\twinindex(\structD)}$ since any set
  $D(B)$ is the union of the sets $D(b)$ for $b\in B$. Hence
  $\settwinindex(\structD)\le 2^{\twinindex(\structD)}$ implying that $t$ is
  polynomial in $\|\autP\|^{\mathrm{poly}(\twinindex(\structD))}$. But
  this implies that the construction of $\autT$ can, indeed, be
  carried out in time
  $\|\autP\|^{\mathrm{poly}(\twinindex(\structD))}$.
\end{proof}

\subsection{Saturating a system}
\label{ssec:step1}

So far, we showed that the reachability relation is lc-rational,
provided the system is saturated. To get the result in full
generality, it remains to transform an arbitrary system into an
equivalent saturated one. For a classical pushdown system (i.e., a
trace-pushdown system over the trace monoid $A^*$), the idea is very
simple: if there are transitions $(p,a,bw,q)$ and
$(q,b,\varepsilon,r)$, then adding the transition $(p,a,w,r)$ does not
change the behavior and transforms the system closer to a saturated
one. In the trace-pushdown setting, the technicalities are a bit more
involved: suppose we have the transitions $(p,a,cbw,q)$ and
$(q,b,\varepsilon,r)$ with $b\parallel c$. Then $[cbw]=[bcw]$, i.e.,
after doing the first transition (that writes the trace
$[cbw]=[bcw]$ onto the pushdown), the second transition (eliminating $b$) can be
executed immediately. Therefore, also in this situation, we add the
transition $(p,a,cw,r)$ to get closer to a saturated system.

We also want to keep the system ``small''. In particular, we do not
want to add transitions $(p,a,u,q)$ and $(p,a,v,q)$ with $u\sim v$ as
they are redundant. To achieve this technically, we use lexicographic
normal forms defined as follows.

\begin{definition}
  Let $\structD=(A,D)$ be a dependence alphabet and $\leq$ a linear
  order on $A$. Further, let $u\in A^*$. Then $\lnf(u)$ is the (wrt.\
  $\leq$) lexicographically minimal word $v\in A^*$ with $u\sim v$; it
  is called the \emph{lexicographic normal form} of $u$.
\end{definition}

Note that, since only finitely many words $v$ satisfy $u\sim v$, the
lexicographic normal form exists for any word $u$. Further, it can
easily be computed from $u$ in polynomial time
(cf.~\cite[Section~1.5]{Die90}).

Formally, we construct the trace-pushdown systems
$\autP^{(k)}=(Q,\Delta^{(k)})$ for any $k\in\monoN$ as
follows:
\begin{itemize}
  \item we set $\Delta^{(0)}:=\{(p,a,\lnf(w),q)\mid (p,a,w,q)\in\Delta\}$.
  \item To obtain $\Delta^{(k+1)}$, we add to the set $\Delta^{(k)}$ all
    transitions $(p,a,\lnf(uv),r)$ for which there are a letter
    $b\in A$ and a state $q\in Q$ such that
    $(p,a,ubv,q),$ $(q,b,\varepsilon,r)\in\Delta^{(k)}$
    and $u\parallel b$.
\end{itemize}
Let $\Delta^{(\infty)}=\bigcup_{k\ge0}\Delta^{(k)}$ be the ``limit''
of the increasing sequence of sets $\Delta^{(k)}$. Since $\Delta$ is
finite, there exists a natural number $N$ with
$\Delta\subseteq Q \times A \times A^{\le N} \times Q$, i.e., all
words written by the trace-pushdown system $\autP$ are of length at
most $N$. The construction ensures that this also holds for all
pushdown systems $\autP^{(k)}=(Q,\Delta^{(k)}$. Hence the limit set
$\Delta^{(\infty)}$ is finite, i.e., the pair
$\autP^{(\infty)}=(Q,\Delta^{(\infty)})$ is a pushdown system. Since
the sequence of sets $\Delta^{(k)}$ is increasing, there is some
natural number $\ell\le\bigl|Q\times A\times A^{\le N}\times Q\bigr|$
with $\Delta^{(\infty)}=\Delta^{(\ell)}$ (later, we will see that a
much smaller number $\ell$ suffices).

\begin{example}
  In \cref{fig:apds-shortcuts} we depict our construction of the
  pushdown systems $\autP^{(k)}$ for $k\in\{0,1,2\}$ starting with the
  tPDS $\autP=\autP^{(0)}$ on the left of
  \cref{fig:apds-shortcuts}. It can be verified that
  $\autP^{(2)}=\autP^{(3)}$. Further, all the pushdown systems
  $\autP^{(k)}$ satisfy conditions~\ref{def:tPDS1} and
  \ref{def:diamond1}.
  
  Below, we prove these two properties for all trace-pushdown systems
  $\autP$.
\end{example}

\begin{figure}[h]
  \begin{center}
    \begin{tikzpicture}[nfa,scale=0.85,every node/.append style={scale=0.85}]
      \node[elliptic state] (p00) at (1.125,1.9486) {$0$};
      \node[elliptic state] (p10) {$1$};
      \node[elliptic state,right of=p10] (p20) {$2$};
      \draw[-,dashed,ultra thick] (3.375,2.45) -- (3.375,-1.35);
      \node[right of=p00] (p00h) {};

      \node[elliptic state,right of=p00h] (p01) {$0$};
      \node[elliptic state,right of=p20] (p11) {$1$};
      \node[elliptic state,right of=p11] (p21) {$2$};
      \draw[-,dashed,ultra thick] (7.875,2.45) -- (7.875,-1.35);
      \node[right of=p01] (p01h) {};

      \node[elliptic state,right of=p01h] (p02) {$0$};
      \node[elliptic state,right of=p21] (p12) {$1$};
      \node[elliptic state,right of=p12] (p22) {$2$};

      \draw (p00) edge node[left] {$a\mid ab$} (p10)
            (p10) edge[loop below] node {$a\mid ab,b\mid\varepsilon$} (p10)
                  edge node {$a\mid\varepsilon,c\mid\varepsilon$} (p20)
            (p20) edge[loop below] node {$b\mid\varepsilon,c\mid\varepsilon$} (p20)
            (p01) edge node[left] {$a\mid ab$} (p11)
                  edge[color=darkred,thick] node {$\boldsymbol{a\mid b}$} (p21)
            (p11) edge[loop below] node {$a\mid ab,b\mid\varepsilon$} (p101)
                  edge node {$a\mid\varepsilon,c\mid\varepsilon$} node[below] {$\boldsymbol{\color{darkred}a\mid b}$} (p21)
            (p21) edge[loop below] node {$b\mid\varepsilon,c\mid\varepsilon$} (p21)
            (p02) edge node[left] {$a\mid ab$} (p12)
                  edge node {$a\mid b,\boldsymbol{\color{darkred}a\mid\varepsilon}$} (p22)
            (p12) edge[loop below] node {$a\mid ab,b\mid\varepsilon$} (p12)
                  edge node {$a\mid\varepsilon,c\mid\varepsilon$} node[below] {$a\mid b$} (p22)
            (p22) edge[loop below] node {$b\mid\varepsilon,c\mid\varepsilon$} (p22);
    \end{tikzpicture}
  \end{center}
  \caption{The trace-pushdown system $\autP=\autP^{(0)}$,
    $\autP^{(1)}$, and $\autP^{(2)}=\autP^{(\infty)}$ (from left to
    right). New transitions are marked in bold and
    red.\label{fig:apds-shortcuts}}
\end{figure}
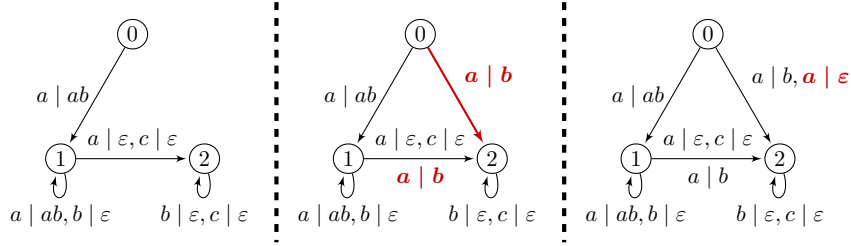

By definition, the pair $\autP^{(k)}=(Q,\Delta^{(k)})$ is a pushdown
system. The following lemma states that this pushdown system is even a
trace-pushdown system.

\begin{lemma}\label{lem:saturationCorrect1}
  The pair $\autP^{(\infty)}$ is a trace-pushdown system.
\end{lemma}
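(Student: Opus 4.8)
The plan is to show that $\autP^{(\infty)}=(Q,\Delta^{(\infty)})$ satisfies both defining conditions \ref{def:tPDS1} and \ref{def:diamond1} of a trace-pushdown system (recall that, by \cref{lem:(P2)-suffices}, checking \ref{def:diamond1} in place of \ref{def:diamond1'} suffices once \ref{def:tPDS1} is known). Since $\Delta^{(\infty)}=\bigcup_{k\ge 0}\Delta^{(k)}$, it is enough to prove by induction on $k$ that every $\Delta^{(k)}$ satisfies \ref{def:tPDS1} and \ref{def:diamond1}; the two conditions are preserved in the limit because they are universally quantified over finitely many transitions, and any finite set of transitions witnessing a failure already lies in some $\Delta^{(k)}$.

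First I would handle \ref{def:tPDS1}, which is the easy part. For the base case $k=0$, note that $\lnf(w)\sim w$ implies $\alphabet{\lnf(w)}=\alphabet{w}$, hence $D(\lnf(w))=D(w)\subseteq D(a)$ because $\autP$ is a tPDS. For the inductive step, a new transition $(p,a,\lnf(uv),r)$ arises from $(p,a,ubv,q),(q,b,\varepsilon,r)\in\Delta^{(k)}$ with $u\parallel b$; then $\alphabet{uv}\subseteq\alphabet{ubv}$, so $D(\lnf(uv))=D(uv)\subseteq D(ubv)\subseteq D(a)$ by the induction hypothesis applied to the transition $(p,a,ubv,q)$. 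Thus every $\Delta^{(k)}$ satisfies \ref{def:tPDS1}.

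The main obstacle is \ref{def:diamond1}. Here the induction step must show: if $(p,a,v,q),(q,b,w,r)\in\Delta^{(k+1)}$ with $a\parallel b$, then there is $q'$ with $(q,b,w,q'),(q',a,v,r)\in\Delta^{(k+1)}$. The subtlety is that one or both of these two transitions may be freshly added in passing from $\Delta^{(k)}$ to $\Delta^{(k+1)}$, so one cannot simply invoke the induction hypothesis; one has to trace back each new transition to its generating pair of older transitions, apply the diamond property (and possibly the saturation-style commutation of an $\varepsilon$-transition past an independent transition) at level $k$, and then re-derive the reordered pair at level $k+1$. I would organize this as a case distinction on which of the two transitions is old and which is new. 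The delicate point in the "new" cases is bookkeeping the lexicographic normal forms: the added transition literally writes $\lnf(uv)$, so when I reorder, I must check that the reordered transition again has the form $(\cdot,\cdot,\lnf(\text{something }\sim uv),\cdot)$, which it does since $\lnf$ depends only on the $\sim$-class and reordering independent letters preserves that class. I also expect to need the observation that independence of the \emph{letters} $a$ and $b$ together with \ref{def:tPDS1} upgrades (as in \cref{lem:(P2)-suffices}) to $av\parallel bw$ for the words written, which is what actually licenses the trace-level rearrangement $[v][w]\cdots$ used when commuting the two transitions past each other inside a run. Once all cases are discharged, \ref{def:diamond1} holds for every $\Delta^{(k)}$, hence for $\Delta^{(\infty)}$, and together with \ref{def:tPDS1} this shows $\autP^{(\infty)}$ is a trace-pushdown system.
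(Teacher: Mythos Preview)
Your proposal is correct and follows essentially the same approach as the paper: induction on $k$ to show each $\autP^{(k)}$ satisfies \ref{def:tPDS1} and \ref{def:diamond1}, handling the diamond property by a case analysis on which of the two transitions is newly added, tracing a new transition back to its generating pair in $\Delta^{(k)}$, applying the inductive diamond property there (together with the observation, via \ref{def:tPDS1}, that the intermediate popped letter inherits independence from the read letter), and then re-deriving the reordered pair at level $k{+}1$. The paper structures the cases a bit differently---it splits on whether the \emph{second} transition lies in $\Delta^{(k)}$ and reduces the case where it is new to the already-proved case where it is old---but the content is the same.
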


\begin{proof}
  Since $\autP^{(\infty)}$ equals one of the systems $\autP^{(k)}$, it
  suffices to prove this statement by induction on $k$ for all systems
  $\autP^{(k)}$. The base case $k=0$ is obvious since $\autP$ is a
  trace-pushdown system.

  Now let $k\ge0$.
  We have to verify that $\autP^{(k+1)}$ satisfies the two conditions
  \ref{def:tPDS1} and \ref{def:diamond1} from \cref{def:cpds} and
  \cref{lem:(P2)-suffices}, respectively.

  To show \ref{def:tPDS1}, let $(p,a,w,r)\in\Delta^{(k+1)}$. If this
  transition belongs to $\Delta^{(k)}$, we obtain $D(w)\subseteq D(a)$
  by the induction hypothesis. So suppose, in the other case,
  $(p,a,w,r)\in\Delta^{(k+1)}\setminus\Delta^{(k)}$.  Then, by
  definition of $\Delta^{(k+1)}$, there are words $u,v\in A^*$ with
  $w_1=\lnf(uv)$, a letter $b\in A$ with $u\parallel b$, and a state
  $q\in Q$ such that
  $(p,a,ubv,q),(q,b,\varepsilon,r)\in\Delta^{(k)}$. Then we obtain
  \[
    D(w)=D(uv)\subseteq D(ubv)\subseteq D(a)
  \]
  where the last inclusion follows from the induction hypothesis since
  $(p,a,ubv,q)\in\Delta^{(k)}$. Hence $\autP^{(k+1)}$ satisfies
  property~\ref{def:tPDS1}.

  We next show \ref{def:diamond1}. So let
  $(p,a,w,q),(q,c,x,r)\in\Delta^{(k+1)}$ with $a\parallel c$.

  We distinguish the cases $(q,c,x,r)\in\Delta^{(k)}$ and
  $(q,c,x,r)\in\Delta^{(k+1)}\setminus\Delta^{(k)}$.

  \noindent
  \textbf{Case 1.} Suppose $(q,c,x,r)\in\Delta^{(k)}$.

  If, in addition, $(p,a,w,q)\in\Delta^{(k)}$, then the induction
  hypothesis yields a state $q'$ with
  $(p,c,x,q'),(q',a,w,r)\in\Delta^{(k)}\subseteq\Delta^{(k+1)}$.

  So suppose $(p,a,w,q)\in\Delta^{(k+1)}\setminus\Delta^{(k)}$ (cf.\
  \cref{fig:saturationCorrect:f1} for a visualization of the proof
  where all arrows belong to $\Delta^{(k+1)}$ and the dashed ones even
  to $\Delta^{(k)}$ ). By the definition of $\Delta^{(k+1)}$, there
  are words $u,v\in A^*$ with $w=\lnf(uv)$, a letter $b\in A$ with
  $u\parallel b$, and a state $s\in Q$ such that
  $(p,a,ubv,s),(s,b,\varepsilon,q)\in\Delta^{(k)}$.

  Note that $(s,b,\varepsilon,q),(q,c,x,r)\in\Delta^{(k)}$. We know
  $D(b)\subseteq D(ubv)\subseteq D(a)$ where the last inclusion
  follows from $(p,a,ubv,s)\in\Delta^{(k)}$ since $\autP^{(k)}$
  satisfies \ref{def:tPDS1}. From $a\parallel c$, we obtain
  $c\notin D(a)\supseteq D(b)$, i.e., $b\parallel c$.  Since the tPDS
  $\autP^{(k)}$ satisfies \ref{def:diamond1}, we obtain a state
  $q'\in Q$ with transitions
  $(s,c,x,q'),(q',b,\varepsilon,r)\in\Delta^{(k)}$.

  Note further that $(p,a,ubv,s),(s,c,x,q')\in\Delta^{(k)}$ and
  $a\parallel c$. Hence the diamond property \ref{def:diamond1} in
  $\autP^{(k)}$ implies the existence of a state $s'\in Q$ with
  $(p,c,x,s'),(s',a,ubv,q')\in\Delta^{(k)}$.

  Finally note that
  $(s',a,ubv,q'),(q',b,\varepsilon,r)\in\Delta^{(k)}$. Since
  $b\parallel u$ and $w=\lnf(uv)$, the construction ensures
  $(s',a,w,r)\in\Delta^{(k+1)}$.

  In summary, we found a state $s'\in Q$ with
  $(p,c,x,s')\in\Delta^{(k)}$ and $(s',a,w,r)\in\Delta^{(k+1)}$. Since
  $\Delta^{(k)}\subseteq\Delta^{(k+1)}$, this finishes the
  verification of \ref{def:diamond1} in case 1.

  \noindent
  \textbf{Case 2.} Suppose $(p,a,w,q)\in\Delta^{(k+1)}$ and
  $(q,c,x,r)\in\Delta^{(k+1)}\setminus\Delta^{(k)}$ (cf.\
  \cref{fig:saturationCorrect:f2} for a visualization of the proof).

  By construction of $\Delta^{(k+1)}$, there are words
  $y,z\in A^*$ with $x=\lnf(yz)$, a letter $d\in A$ with
  $y\parallel d$, and a state $s\in Q$ such that
  $(q,c,ydz,s),(s,d,\varepsilon,r)\in\Delta^{(k)}$. 

  Note that $(p,a,w,q)\in\Delta^{(k+1)}$ and
  $(q,c,ydz,s)\in\Delta^{(k)}$ with $a\parallel c$. Hence, by case 1,
  there is a state $q'\in Q$ with $(p,c,xdy,q')\in\Delta^{(k)}$ and
  $(q',a,w,s)\in\Delta^{(k+1)}$.

  Note further that $(q',a,w,s)\in\Delta^{(k+1)}$ and
  $(s,d,\varepsilon,r)\in\Delta^{(k)}$. We have
  $D(d)\subseteq D(ydz)\subseteq D(c)$ where the last inclusion
  follows from $(q,c,ydz,s)\in\Delta^{(k)}$ since $\autP^{(k)}$
  satisfies \ref{def:tPDS1} by the induction hypothesis. From
  $a\parallel c$, we obtain $a\notin D(c)\supseteq D(d)$ and therefore
  $a\parallel d$. Hence, again by case 1, there is a state $s'$ with
  $(q',d,\varepsilon,s')\in\Delta^{(k)}$ and
  $(s',a,w,r)\in\Delta^{(k+1)}$.

  Finally note that
  $(p,c,ydz,q'),(q',d,\varepsilon,s')\in\Delta^{(k)}$. Since
  $d\parallel y$ and $\lnf(x)=yz$, the construction of $\Delta^{(k+1)}$ yields
  $(p,c,x,s')\in\Delta^{(k+1)}$.

  Thus, we found a state $s'\in Q$ with
  $(p,c,x,s'),(s',a,w,r)\in\Delta^{(k+1)}$ which finishes the
  verification of \ref{def:diamond1} in case 2.

  As the above two cases cover all possibilities, we showed that
  $\autP^{(k+1)}$ satisfies \ref{def:diamond1}.  \qedhere
  
  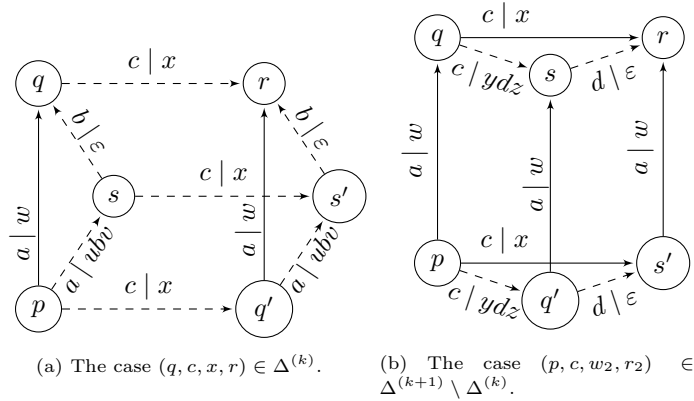
\begin{figure}[h]
    \centering
    \subcaptionbox{The case $(q,c,x,r)\in\Delta^{(k)}$.\label{fig:saturationCorrect:f1}}{
      \begin{tikzpicture}[nfa]
        \node[state] (p) at (0,0) {$p$};
        \node[state] (q1) at (0,3) {$q$};
        \node[state] (r) at (3,3) {$r$};
        \node[state] (s1) at (1,1.5) {$s$};
        \node[state] (q2) at (4,1.5) {$s'$};
        \node[state] (s2) at (3,0) {$q'$};
        
        \draw (p) edge node[rotate=90, left=6pt] {$a\mid w$} (q1)
                  edge[dashed] node[sloped,below=-3pt] {$a\mid ubv$} (s1)
                  edge[dashed] node {$c\mid x$} (s2)
              (s1) edge[dashed] node[sloped,above=-3pt] {$b\mid \varepsilon$} (q1)
                   edge[dashed] node {$c\mid x$} (q2)
              (s2) edge[dashed] node[sloped,below=-3pt] {$a\mid ubv$} (q2)
                   edge node[rotate=90, left=6pt] {$a\mid w$} (r)
              (q1) edge[dashed] node {$c\mid x$} (r)
              (q2) edge[dashed] node[sloped,above=-3pt] {$b\mid\varepsilon$} (r);
      \end{tikzpicture}
    }
    \subcaptionbox{The case $(p,c,w_2,r_2)\in\Delta^{(k+1)}\setminus\Delta^{(k)}$.\label{fig:saturationCorrect:f2}}{
      \begin{tikzpicture}[nfa]
        \node[state] (p) at (0,1) {$p$};
        \node[state] (q) at (0,4) {$q$};
        \node[state] (r) at (3,4) {$r$};
        \node[state] (s) at (1.5,3.5) {$s$};
        \node[state] (q') at (1.5,0.5) {$q'$};
        \node[state] (s') at (3,1) {$s'$};

        \draw (p) edge[sloped,above=-3pt] node {$a\mid w$} (q)
                  edge node[near start] {$c\mid x$} (s')
                  edge[dashed] node[sloped,below] {$c\mid ydz$} (q')
              (q) edge node[near start] {$c\mid x$} (r)
                  edge[dashed] node[sloped,below] {$c\mid ydz$} (s)
              (q') edge node[sloped,above=-3pt] {$a\mid w$} (s)
                   edge[dashed] node[sloped,below] {$d\mid\varepsilon$} (s')
              (s) edge[dashed] node[sloped,below] {$d\mid\varepsilon$} (r)
              (s') edge[sloped,above=-3pt] node {$a\mid w$} (r);
      \end{tikzpicture}
    }
    \caption{Visualization of the proof of the diamond properties
      (cf.~proof of Lemma~\ref{lem:saturationCorrect1}). All
      arrows vizualize transitions from $\Delta^{(k+1)}$ and dashed
      arrows transitions that even belong to
      $\Delta^{(k)}$).\label{fig:saturationCorrect}}
  \end{figure}
\end{proof}


Since $\autP$ and $\autP^{(\infty)}$ have the same alphabet and the same
set of states, the sets of configurations of these two trace-pushdown
systems coincide. The following lemma states that they also agree in
their reachability relations.

\begin{lemma}\label{lem:saturationCorrect2}
  For any two configurations $(p,[u])$ and $(q,[v])$, we have
  $(p,[u]) \vdash^*_\autP (q,[v])$ if, and only if,
  $(p,[u]) \vdash^*_{\autP^{(\infty)}} (q,[v])$.
\end{lemma}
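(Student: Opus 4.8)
The plan is to prove the two implications separately; the forward implication is immediate from $\Delta\subseteq\Delta^{(\infty)}$, and the backward implication is handled by induction on the stage $k$ of the saturation construction, using that each newly added ``shortcut'' transition can be decomposed into two genuine steps of the previous system.

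For the implication ``only if'', first note that $\lnf(w)\sim w$ for every word $w$, so replacing a prefix $[a]$ of a trace by $[w]$ is the same as replacing it by $[\lnf(w)]$; hence $\vdash_\autP{}={}\vdash_{\autP^{(0)}}$. Since $\Delta^{(0)}\subseteq\Delta^{(k)}\subseteq\Delta^{(\infty)}$ for all $k$, every step of $\autP$ is a step of $\autP^{(\infty)}$, so $\vdash^*_\autP{}\subseteq{}\vdash^*_{\autP^{(\infty)}}$.

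For the implication ``if'', recall that $\autP^{(\infty)}=\autP^{(\ell)}$ for some $\ell\in\monoN$, so it suffices to prove $\vdash^*_{\autP^{(k)}}{}\subseteq{}\vdash^*_\autP$ for all $k$, which I do by induction on $k$. The base case is $\vdash_{\autP^{(0)}}{}={}\vdash_\autP$ noted above. For the inductive step it is enough to show that every single step $(p,[s])\vdash_{\autP^{(k+1)}}(r,[t])$ is realised by $(p,[s])\vdash^*_{\autP^{(k)}}(r,[t])$, for then $\vdash^*_{\autP^{(k+1)}}{}\subseteq{}\vdash^*_{\autP^{(k)}}{}\subseteq{}\vdash^*_\autP$ by the induction hypothesis. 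So let $\tau\in\Delta^{(k+1)}$ be the transition used. If $\tau\in\Delta^{(k)}$, this is already a step of $\autP^{(k)}$ and we are done. Otherwise $\tau=(p,a,\lnf(uv),r)$ arising from transitions $(p,a,ubv,q),(q,b,\varepsilon,r)\in\Delta^{(k)}$ with $u\parallel b$, and there is a word $x$ with $[s]=[ax]$ and $[t]=[\lnf(uv)\,x]=[uvx]$. Since $u\parallel b$ implies $ub\sim bu$, we compute in $\autP^{(k)}$
\[
  (p,[ax])\vdash_{\autP^{(k)}}(q,[ubvx])=(q,[b\,uvx])\vdash_{\autP^{(k)}}(r,[uvx])=(r,[t])\,,
\]
so $(p,[s])\vdash^*_{\autP^{(k)}}(r,[t])$, as required.

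There is essentially no real obstacle here: the only points requiring a little care are that $\lnf$ preserves the underlying trace, and that the independence $u\parallel b$ is exactly what makes the prefix $[b]$ available immediately after executing the first transition, so that the shortcut $\tau$ can genuinely be split into two steps of the earlier system; the remainder is bookkeeping over the finitely many stages $\Delta^{(k)}$.
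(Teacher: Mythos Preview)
Your proof is correct and follows essentially the same approach as the paper: the forward direction uses that $\lnf(w)\sim w$ so $\vdash_\autP{}={}\vdash_{\autP^{(0)}}{}\subseteq{}\vdash_{\autP^{(\infty)}}$, and the backward direction argues that each new shortcut in $\Delta^{(k+1)}\setminus\Delta^{(k)}$ is simulated by two steps of $\autP^{(k)}$, exploiting $u\parallel b$ to bring $b$ to the front. The only cosmetic difference is that the paper phrases the backward direction as ``one step of $\autP^{(k+1)}$ is simulated by finitely many steps of $\autP^{(k)}$'' and then chains down, whereas you explicitly wrap this in an induction on $k$ to conclude $\vdash^*_{\autP^{(k)}}{}\subseteq{}\vdash^*_\autP$; the content is the same.
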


\begin{proof}
  We first demonstrate the implication ``$\Rightarrow$''.
  For this, it suffices to show that
  $(p,[u])\vdash_{\autP}(q,[v])$ implies
  $(p,[u])\vdash_{\autP^{(\infty)}}(q,[v])$. But
  this is trivial since the transitions $(p,a,w,q)\in\Delta$
  and $(p,a,\lnf(w),q)\in\Delta^{(0)}\subseteq\Delta^{(\infty)}$
  have the same effect on any configuration. For the other
  implication, it suffices to show for any natural number $k$ that
  $(p,[u]) \vdash_{\autP^{(k+1)}} (r,[v])$
  implies
  $(p,[u]) \vdash_{\autP^{(k)}}^* (r,[v])$,
  i.e., that every single step of $\autP^{(k+1)}$ can be simulated by
  a sequence of steps of $\autP^{(k)}$.
  
  Since $(p,[u]) \vdash_{\autP^{(k+1)}} (r,[v])$,
  there is a transition $(p,a,w,r)\in\Delta^{(k+1)}$ and a
  word $u'$ such that $u\sim au'$ and
  $v\sim wu'$. If the transition $(p,a,w,r)$
  belongs to $\Delta^{(k)}$, we immediately get
  $(p,[u]) \vdash_{\autP^{(k)}}
  (r,[v])$. Otherwise, there are words $w_1$ and $w_2$ with
  $w=\lnf(w_1w_2)$ such that, for some letter $b\in A$ and some
  state $q\in Q$, we have
  $(p,a,w_1bw_2,q),(q,b,\varepsilon,r)\in\Delta^{(k)}$
  and $w_1\parallel b$. But then $w_1bw_2\sim bw_1w_2$
  implying
  \begin{align*}
    (p,[u])=(p,[au']) &\vdash_{\autP^{(k)}} (q,[w_1bw_2u'])
    = (q,[bw_1w_2u'])\\
    &\vdash_{\autP^{(k)}} (r,[w_1w_2u'])
    = (r,[v])\,.
  \end{align*}
  Hence, also the implication ``$\Leftarrow$'' of the claim holds.
\end{proof}

So far, we proved that the reachability relation of all saturated
trace-push\-down systems are lc-rational and that all trace-pushdown
system can be saturated, i.e., the reachability relation of all
trace-pushdown systems are lc-rational. Our main result
\cref{thm:reachLCRational} also claims that a transducer for this
relation can be constructed in polynomial time (assuming the
dependence alphabet $\structD$ to be fixed). Since
\cref{thm:reachSaturated} contains a corresponding statement, it
remains to be shown that the above saturation procedure can be
carried out in polynomial time (and therefore leads to a system
of polynomial size).

Recall that the independence number $\alpha(\structD)$ is the maximal
size of a set of letters that are mutually independet.

\begin{lemma}
  Let $\structD=(A,D)$ be a dependence alphabet and $\autP$ be a
  trace-pushdown system.  The number of transitions in
  $\Delta^{(\infty)}$ is polynomial in $\|\autP\|^{\alpha(\structD)}$
  and this set can be computed in time polynomial in
  $\|\autP\|^{\alpha(\structD)}$.
\end{lemma}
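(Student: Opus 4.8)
The plan is to bound $|\Delta^{(\infty)}|$ by tracking what words the added transitions can write. The crucial observation is that a saturation step only \emph{deletes a minimal letter} from such a word: when $(p,a,\lnf(uv),r)$ is added from $(p,a,ubv,q),(q,b,\varepsilon,r)\in\Delta^{(k)}$ with $u\parallel b$, then $[ubv]=[b]\cdot[uv]$, so the trace $[uv]$ is a right divisor of $[ubv]$. Iterating this, one proves by induction on $k$ the claim: \emph{for every transition $(p,a,w',r)\in\Delta^{(k)}$ there is an original transition $(p,a,w,q)\in\Delta$ with the same start state $p$ and the same read letter $a$ such that $[w']$ right-divides $[w]$, i.e.\ $[w]=t\cdot[w']$ for some $t\in\traceMon$.} The base case $k=0$ is immediate since $\Delta^{(0)}$ only replaces $w$ by $\lnf(w)\sim w$; the step combines the observation above, transitivity of right-division, and $[\lnf(uv)]=[uv]$. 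Since all words occurring in any $\Delta^{(k)}$ are lexicographic normal forms, a transition is determined by $(p,a,[w'],r)$, whence
\[
  |\Delta^{(\infty)}|\ \le\ |\Delta|\cdot|Q|\cdot
  \max_{(p,a,w,q)\in\Delta}\ \#\{\,t\in\traceMon \mid t\text{ right-divides }[w]\,\}\,.
\]

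It remains to bound the number of right divisors of a trace $s$ of length $N$. I use the dependence-graph (pomset) representation $(V,<,\lambda)$ of $s$ with $|V|=N$; as is standard in trace theory (see \cite{DieR95}), every right divisor of $s$ is the trace of the sub-pomset induced on some order filter (up-closed set) $F\subseteq V$, so the number of right divisors of $s$ is at most the number of order filters of $(V,<)$. That number equals the number of antichains of $(V,<)$, since a filter is the up-closure of its uniquely determined set of minimal elements, which is an antichain. Now an antichain is a set of pairwise incomparable positions, and two positions carrying dependent labels — in particular, two positions with equal label — are comparable in the pomset; hence an antichain carries pairwise independent and pairwise distinct labels, so it has at most $\alpha(\structD)$ elements. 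Consequently there are at most $(N+1)^{\alpha(\structD)}$ antichains, hence at most that many right divisors of $s$. Plugging this into the displayed inequality and using $|\Delta|\le\|\autP\|$, $|Q|\le\|\autP\|$ and $N+1\le\|\autP\|$, we obtain $|\Delta^{(\infty)}|\le\|\autP\|^{\alpha(\structD)+2}$, which — since $\alpha(\structD)\ge1$ — is polynomial in $\|\autP\|^{\alpha(\structD)}$.

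For the running time I would compute the increasing chain $\Delta^{(0)}\subseteq\Delta^{(1)}\subseteq\cdots$ until it stabilizes. To pass from $\Delta^{(k)}$ to $\Delta^{(k+1)}$ one iterates over all ordered pairs of transitions $(p,a,w_1,q),(q,b,\varepsilon,r)$ of $\Delta^{(k)}$, checks whether $b$ is a minimal letter of $[w_1]$ (equivalently, whether the prefix of $w_1$ preceding its first occurrence of $b$ is $\parallel b$) and, if so, adds $(p,a,\lnf(uv),r)$ for the induced factorization $w_1=ubv$. Each such check and each computation of a lexicographic normal form runs in time polynomial in $\|\autP\|$, and there are at most $|\Delta^{(k)}|^2\le|\Delta^{(\infty)}|^2$ pairs; moreover every non-final round strictly enlarges the set, which stays inside $\Delta^{(\infty)}$, so the fixpoint is reached after at most $|\Delta^{(\infty)}|$ rounds. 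Hence the whole procedure runs in time polynomial in $\|\autP\|^{\alpha(\structD)}$ and outputs $\Delta^{(\infty)}$, which in passing re-proves its finiteness.

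The main obstacle is the counting of right divisors in the second paragraph: one has to see that the governing parameter is the independence number $\alpha(\structD)$ and not $|A|$ — the naive bound ``for each letter, pick a suffix of the chain of its occurrences'' only yields $N^{|A|}$ — and obtain the correct count through the width of the dependence pomset together with the elementary fact that incomparable positions must carry independent labels. Establishing the ``right divisor of an original word'' invariant in the first paragraph is routine, but it must be stated with the same start state and read letter for the bound to go through.
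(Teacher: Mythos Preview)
Your proof is correct and follows essentially the same approach as the paper: establish by induction that every word written by a transition of $\Delta^{(k)}$ induces a trace that is a suffix (right divisor) of some original written trace, then bound the number of such suffixes by $N^{\alpha(\structD)}$ up to polynomial factors. The only noteworthy difference is that the paper cites \cite{BerMS89} for the suffix count whereas you supply a self-contained antichain argument via the dependence pomset (and you additionally track the pair $(p,a)$ and spell out the fixpoint-iteration running time, both of which the paper omits); these are refinements rather than a different route.
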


\begin{proof}
  For $k\ge0$, let
  $W_k=\{w\in A^*\mid \exists(p,a,w,r)\in\Delta^{(k)}\}$
  denote the set of words written by some transition from $\Delta^{(k)}$.
  
  By induction on $k$, we show that, for any word $w\in W_k$, there
  exists a word $v\in W_0$ such that the trace $[w]$ is a
  \emph{suffix} of the trace $[v]$ (i.e., $v\sim uw$ for some word
  $u\in A^*$).  Setting $v:=w$, this is trivial for $k=0$. Now let
  $k\ge0$ and $w\in W_{k+1}\setminus W_k$. Then there is a transition
  $(p,a,w,r)\in\Delta^{(k+1)}\setminus \Delta^{(k)}$ that is added in
  step $k+1$ of our construction. Hence there is some transition
  $(p,a,ubv,q)\in\Delta^{(k)}$ with $u\parallel b$ and
  $w=\lnf(uv)$. Hence
  \begin{align*}
    ubv &\sim  b\,uv && (\text{since }u\parallel b)\\
    &\sim  b\,w  && (\text{since }w=\lnf(uv))
  \end{align*}
  implying that $[w]$ is a suffix of $[ubv]$ and $ubv\in W_k$. By the
  induction hypothesis, there is some $x\in W_0$ such that $[ubv]$
  is a suffix of $[x]$. Now the transitivity of the suffix
  relation implies the claim for $w$ which completes the inductive
  proof.
  
  Let $N$ be the maximal length of a word from $W_0$. From
  \cite{BerMS89}, we obtain that the set of \textit{traces}
  \[
    W_k'=\{[w]\in A^*\mid \exists(p,a,w,r)\in\Delta^{(k)}\}
  \]
  contains at most $|W_0|\cdot N^{\alpha(\structD)}$ elements since
  all traces from $W_k'$ are suffixes of traces from $W_0'$. Since,
  for any word $w\in A^*$, there is a unique word $v$ in lexicographic
  normal form with $w\sim v$, we get
  \[
    |W_k|=|W_k'|\le |W_0|\cdot N^{\alpha(\structD)}\,.
  \]
  It follows that
  \[
    |\Delta^{(k)}|\le\left| Q\times A\times Q\right|\cdot |\Delta|\cdot N^{\alpha(\structD)}
  \]
  which is polynomial in $\|\autP\|^{\alpha(\structD)}$.
\end{proof}

The main properties of the above saturation procedure can be
summarized as follows:

\begin{proposition}\label{prop:saturation}
  Let $\structD=(A,D)$ be a dependence alphabet and $\autP$ a
  trace-pushdown system. Then, in time polynomial in
  $\|\autP\|^{\alpha(\structD)}$, one can construct an equivalent
  saturated system $\autP^{(\infty)}$, i.e., a saturated
  trace-pushdown system with
  $\vdash_\autP^* {=} \vdash_{\autP^{(\infty)}}^*$.\qed
\end{proposition}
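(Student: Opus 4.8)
The plan is to take the system $\autP^{(\infty)}=(Q,\Delta^{(\infty)})$ produced by the iterated shortcut construction above as the required saturated system and simply to assemble the facts already established for it. By \cref{lem:saturationCorrect1}, $\autP^{(\infty)}$ is again a trace-pushdown system, and by \cref{lem:saturationCorrect2} it has the same reachability relation as $\autP$, i.e.\ $\vdash_\autP^*{=}\vdash_{\autP^{(\infty)}}^*$ (note that $\autP$ and $\autP^{(\infty)}$ share the same states and alphabet, hence the same configuration set). The time bound is exactly the content of the preceding lemma: $\Delta^{(\infty)}$ has at most polynomially many transitions in $\|\autP\|^{\alpha(\structD)}$ and can be computed within that bound. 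Concretely, one computes the increasing chain $\Delta^{(0)}\subseteq\Delta^{(1)}\subseteq\cdots$; each step either strictly enlarges the current set or already yields the limit, and since the whole chain lives inside $Q\times A\times A^{\le N}\times Q$ it stabilises after a number of steps that is again polynomial in $\|\autP\|^{\alpha(\structD)}$, each individual step (inspecting all pairs of transitions of the current system and adding the corresponding $\lnf$-shortcuts) costing polynomial time.

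It remains only to note that $\autP^{(\infty)}$ is saturated. Because the sets $\Delta^{(k)}$ form an increasing chain of subsets of a finite set, there is an $\ell$ with $\Delta^{(\infty)}=\Delta^{(\ell)}=\Delta^{(\ell+1)}$. Hence, whenever $(p,a,ubv,q),(q,b,\varepsilon,r)\in\Delta^{(\infty)}=\Delta^{(\ell)}$ with $u\parallel b$, the defining step of $\Delta^{(\ell+1)}$ places $(p,a,\lnf(uv),r)$ into $\Delta^{(\ell+1)}=\Delta^{(\infty)}$, which is the shortcut transition demanded by the definition of a saturated system.

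Strictly speaking, the definition of ``saturated'' asks for $(p,a,uv,r)\in\Delta$ while the construction only inserts $(p,a,\lnf(uv),r)$; this is immaterial, since by construction every transition of $\autP^{(k)}$ (and hence of $\autP^{(\infty)}$) already writes a word in lexicographic normal form, so $\lnf(uv)$ is the unique normalised representative of $[uv]$ and $(p,a,\lnf(uv),r)$ has the same effect on every configuration as $(p,a,uv,r)$ would. In other words, ``saturated'' is to be read in the $\lnf$-normalised convention under which all our systems are maintained, and this is precisely the reading used later, e.g.\ in \cref{lem:swallow1}. Thus the proposition follows; it is essentially a corollary of \cref{lem:saturationCorrect1}, \cref{lem:saturationCorrect2}, and the preceding counting lemma, and there is no genuine obstacle, the only point requiring a word being this reconciliation with $\lnf$-normalisation.
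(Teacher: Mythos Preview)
Your proposal is correct and is precisely the intended argument: the paper itself leaves the proposition without proof (it carries only a \qed), treating it as the obvious assembly of \cref{lem:saturationCorrect1}, \cref{lem:saturationCorrect2}, and the preceding counting lemma, together with the observation that the chain $\Delta^{(0)}\subseteq\Delta^{(1)}\subseteq\cdots$ stabilises. Your explicit remark about the discrepancy between the literal definition of ``saturated'' (requiring $(p,a,uv,r)\in\Delta$) and the construction (which inserts $(p,a,\lnf(uv),r)$) is a fair point that the paper glosses over; note, however, that in the one place saturation is actually invoked (the proof of \cref{lem:swallow1}) the relevant word is a proper suffix $u_0'$ of a word $bu_0'$ already in lexicographic normal form, and such suffixes are again in normal form, so no ambiguity arises in practice.
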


\subsection{Proof of \cref{thm:reachLCRational}}
\label{sec:proof-forwards}

We finally summarize our proof of the fact that the reachability
relation of trace-pushdown systems is efficiently lc-rational.

So let $\structD=(A,D)$ be a dependence alphabet and $\autP$ a
trace-pushdown system.
\begin{enumerate}
\item Using \cref{prop:saturation}, we can construct an equivalent
  saturated system $\autP^{(\infty)}$ in time polynomial in
  $\|\autP\|^{\alpha(\structD)}$. Since the independence number
  $\alpha(\structD)$ is bounded by the twin index
  $\twinindex(\structD)$, the construction of $\autP^{(\infty)}$ is
  possible in time polynomial in $\|\autP\|^{\twinindex(\structD)}$
  \item Now the claim follows from \cref{thm:reachSaturated}.\qed
\end{enumerate}

\section{Closure Properties of the Reachability Sets of tPDS}

From \cite{Cau88}, we know that classical pushdown systems efficiently
preserve the regularity of sets of configurations under forwards and
under backwards reachability. More formally, suppose $\autP$ is a
pushdown system (i.e., a trace-pushdown system over the trace monoid
$A^*$), $p,q,r$ are three states, and $\autB$ is an NFA over the
alphabet $A$. Let $C=\{q\}\times L(\autB)$ denote the set of
configurations with state $q$ whose pushdown content belongs to the
language $L(\autB)$. Then $\{u\mid (p,u)\vdash^*_\autP C\}$ is the set
of pushdown contents that allow to reach some pushdown content from
$L(\autB)$ (while changing state from $p$ to $q$). Caucal's result
says that this language is efficiently regular, i.e., that we can
compute in polynomial time an NFA $\autA$ accepting this
set. Symmetrically, $\{w\mid C\vdash^*_\autP(r,w)\}$ is the set of
pushdown contents that can be reached from some pushdown content from
$L(\autB)$ (while changing state from $q$ to $r$). Caucal's result
implies that also this language is efficiently regular, i.e., that we can
compute in polynomial time an NFA $\autC$ accepting this set.

In this section, we ask to what extent these two results generalize to
trace-pushdown systems. Since now, we are concerned with sets of
traces as opposed to sets of words, we can consider rational or
recognizable such sets. Accordingly, we ask whether the rationality or
the recognizability is preserved under forwards and backwards
reachability in a trace-pushdown system.

In \cite{KoeK24}, we consider cooperating multi-pushdown sytems. Any
such system is a trace-pushdown system, but not conversely.  There, we
present a cooperating multi-pushdown system and a rational set $C$ of
configurations such that the set of configurations backwards reachable
from $C$ is not even decidable, let alone rational,
cf.~\cite[Prop.~17]{KoeK24}.

In \cite{KoeK24}, we also show that the backwards reachability of
cooperating multi-pushdown systems preserves the recognizability of a
set of configurations. Next, we generalize this result to tPDS and we
prove, in addition, that tPDS preserve the rationality under forwards
reachability.

\begin{theorem}\label{thm:reach}
  Let $\structD=(A,D)$ be a dependence alphabet and $\autP=(Q,\Delta)$
  a trace-pushdown system. Then the following two statements hold:
  \begin{enumerate}[label=(\arabic*)]
  \item The backwards reachability relation preserves
    recognizability. More precisely, from $\structD$, $\autP$,
    $p,q\in Q$, and from a \emph{closed NFA} $\autB$, one can compute a
    \emph{closed NFA} $\autA$ such that
    \[
      [L(\autA)]=\{[u]\mid (p,[u])\vdash^*_\autP\{q\}\times[L(\autB)]\}\,;
    \]
    this computation can be carried out in time polynomial in
    $\|\autP\|^{\mathrm{poly}(\twinindex(\structD))} + \|\autB\|$.
    \label{thm:reach:i1}
  \item The forwards reachability relation preserves rationality. More
    precisely, from $\structD$, $\autP$, $q,r\in Q$, and from an
    \emph{NFA} $\autB$, one can compute an \emph{NFA} $\autC$ such
    that
    \[
      [L(\autC)]=\{[w]\mid \{q\}\times[L(\autB)]\vdash^*_\autP(r,[w])\}\,;
    \]
    this computation can be carried out in time polynomial in
    $\|\autP\|^{\mathrm{poly}(\twinindex(\structD))} + \|\autB\|$.
    \label{thm:reach:i2}
  \end{enumerate}
\end{theorem}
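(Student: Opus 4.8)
The plan is to reduce Theorem~\ref{thm:reach} to the machinery already developed, namely \cref{thm:reachLCRational} (the reachability relation of a tPDS is lc-rational) together with the preservation results of \cref{thm:preservation}. The two claims are almost immediate once the set of configurations $\{q\}\times[L(\autB)]$ is projected to the pushdown content and identified with the trace language $[L(\autB)]\subseteq\traceMon$. For claim~\ref{thm:reach:i1}, I would observe that
\[
  \{[u]\mid (p,[u])\vdash^*_\autP\{q\}\times[L(\autB)]\}
  = {}^{\reach_{p,q}(\autP)}[L(\autB)]\,,
\]
since $(p,[u])\vdash^*_\autP(q,[v])$ for some $[v]\in[L(\autB)]$ is exactly the statement that $([u],[v])\in\reach_{p,q}(\autP)$ for some such $[v]$. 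Symmetrically, for claim~\ref{thm:reach:i2},
\[
  \{[w]\mid\{q\}\times[L(\autB)]\vdash^*_\autP(r,[w])\}
  = [L(\autB)]^{\reach_{q,r}(\autP)}\,.
\]

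\textbf{First} I would invoke \cref{thm:reachLCRational} to obtain, from $\structD$, $\autP$, and the relevant pair of states, a left-closed transducer $\autT$ with $[R(\autT)]=\reach_{p,q}(\autP)$ (respectively $\reach_{q,r}(\autP)$); this takes time polynomial in $\|\autP\|^{\mathrm{poly}(\twinindex(\structD))}$. \textbf{Then} for claim~\ref{thm:reach:i1}: the given NFA $\autB$ is closed, so $[L(\autB)]$ is recognizable, and \cref{thm:preservation}(i) applied to the lc-rational relation $[R(\autT)]$ produces in polynomial time a closed NFA $\autA$ with $[L(\autA)]={}^{[R(\autT)]}[L(\autB)]$, which by the identity above is exactly the desired backwards-reachability set; combining the two time bounds gives the claimed $\|\autP\|^{\mathrm{poly}(\twinindex(\structD))}+\|\autB\|$. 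For claim~\ref{thm:reach:i2}: now $\autB$ is an arbitrary NFA, so $[L(\autB)]$ is rational, and \cref{thm:preservation}(ii) applied to $[R(\autT)]$ yields in polynomial time an NFA $\autC$ with $[L(\autC)]=[L(\autB)]^{[R(\autT)]}$, which is the desired forwards-reachability set, again within the stated time bound.

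\textbf{The only genuine content} beyond bookkeeping is verifying the two set-theoretic identities relating the configuration-reachability sets to left- and right-application of $\reach_{p,q}(\autP)$; these are direct unfoldings of the definitions of $\vdash^*_\autP$ on sets of configurations, of $\reach_{p,q}(\autP)$, and of the notations ${}^\calR\calL$ and $\calL^\calR$, and involve no combinatorics. I do not anticipate a real obstacle here: all the difficulty has already been absorbed into \cref{thm:reachLCRational} and \cref{thm:preservation}. The one point to state carefully is the composition of the running-time estimates—the transducer $\autT$ has size polynomial in $\|\autP\|^{\mathrm{poly}(\twinindex(\structD))}$, and the subsequent automaton construction is polynomial in $\|\autT\|+\|\autB\|$, so the overall bound is polynomial in $\|\autP\|^{\mathrm{poly}(\twinindex(\structD))}+\|\autB\|$ as claimed.
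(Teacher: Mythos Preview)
Your proposal is correct and follows essentially the same approach as the paper's own proof: identify the backwards- and forwards-reachability sets as ${}^{\reach_{p,q}(\autP)}[L(\autB)]$ and $[L(\autB)]^{\reach_{q,r}(\autP)}$ respectively, invoke \cref{thm:reachLCRational} to obtain a left-closed transducer for the reachability relation, and then apply \cref{thm:preservation}(i) and (ii) to conclude. The paper's argument is no more and no less than this, with the same handling of the time bound.
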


\begin{proof}
  We first prove statement \ref{thm:reach:i1}. Let
  $\calL=\{[u]\mid (p,[u])\vdash^*_\autP\{q\}\times[L(\autB)]\}$ such
  that we have to construct a closed NFA $\autA$ with
  $[L(\autA)]=\calL$.
  
  Recall that the relation $\calR:=\reach_{p,q}(\autP)$ is the set of
  pairs $([u],[v])$ such that
  $(p,[u])\vdash^*_\autP(q,[v])$. Consequently,
  $\calL = {}^\calR[L(\autB)]$.

  By \cref{thm:reachLCRational}, the relation $\calR$ is
  lc-rational. Since the NFA $\autB$ is closed,
  \cref{thm:preservation}(i) implies that the set $\calL$ is
  recognizable, i.e., that there exists a closed NFA $\autA$ with
  $[L(\autA)]=\calL$.

  To verify the upper time bound, recall that a left-closed transducer
  $\autT$ for $\calR$ can be computed in time polynomial in
  $\|\autP\|^{\mathrm{poly}(\twinindex(\structD))}$
  by \cref{thm:reachLCRational}. From this left-closed transducer
  $\autT$ and the closed NFA $\autB$, we can construct the closed NFA
  $\autA$ in time polynomial in
  $\|\structD\|+\|\autT\|+\|\autB\|\le\|\autT\|+\|\autB\|$. Consequently,
  $\autA$ can be constructed in time polynomial in
  $\|\autP\|^{\mathrm{poly}(\twinindex(\structD))}
  + \|\autB\|$.

  Statement \ref{thm:reach:i2} can be shown similarly, using
  \cref{thm:preservation}(ii) instead of \cref{thm:preservation}(i).
\end{proof}

It remains to consider the question whether a tPDS preserves the
recognizability under forwards reachability. The answer to this
question is negative: every rational trace language $\calL$ is the set
of configurations reachable from some single configuration. Since $\calL$
can be non-recognizable, this implies that the recognizability is not
preserved under forwards reachability:

\begin{proposition}\label{prop:rational}
  Let $\structD=(A,D)$ be a dependence alphabet and
  $\calL\subseteq\traceMon$ be rational. Then there are a dependence
  alphabet $\structD'=(A',D')$ with $A\subseteq A'$ and
  $D = D'\cap (A\times A)$, a trace-pushdown system
  $\autP=(Q,\Delta)$ over $\structD'$,
  $c\in\Conf{\autP}$, and $q\in Q$ such that
  $\calL=\{[w]\mid c\vdash^*(q,[w])\}$.
\end{proposition}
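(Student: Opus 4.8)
The statement says: given a rational trace language $\calL$ over $\structD = (A,D)$, we can realize $\calL$ as the forwards-reachability set $\{[w] \mid c \vdash^* (q,[w])\}$ of a single configuration $c$ in some tPDS $\autP$ over a possibly larger dependence alphabet $\structD'$. First I would recall that $\calL = [L]$ for some regular word language $L \subseteq A^*$, and fix an NFA $\autB = (P, A, I, T, F)$ with $L(\autB) = L$. The naive idea — build a tPDS that ``runs $\autB$'' by pushing the letters it reads — fails, because a tPDS must satisfy \ref{def:tPDS1} ($D(w) \subseteq D(a)$ for each transition) and the diamond property \ref{def:diamond1'}. The cleanest way to sidestep both constraints is to introduce a \emph{fresh letter} $\$\notin A$ that is dependent on everything, and set $\structD' = (A', D')$ with $A' = A \uplus \{\$\}$ and $D' = D \cup (\{\$\} \times A') \cup (A' \times \{\$\})$. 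Then $D'(\$) = A'$, so any transition of the form $(p, \$, w, q)$ with $w \in (A')^*$ automatically satisfies $D'(w) \subseteq A' = D'(\$)$, i.e.\ \ref{def:tPDS1} holds for all such transitions.

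Next I would build $\autP = (Q, \Delta)$ with $Q = P$ (the states of $\autB$) plus, if needed, one extra sink state, and with the single start configuration $c = (\iota, [\$])$ for some chosen $\iota \in I$ — or, to handle a set of initial states, add one extra state $q_0$ with $\varepsilon$-style transitions; but since tPDS transitions must read a letter, it is simpler to either assume $\autB$ has a single initial state (standard, after the usual NFA normalization that does not change $L(\autB)$) or to encode the initial choice into the first pushed letter. The transitions of $\autP$ will all pop $\$$ and push back a word beginning with $\$$: for each transition $(p, a, q) \in T$ of $\autB$ I put $(p, \$, \$ a, q)$ into $\Delta$ — wait, that pushes $a$ \emph{after} $\$$, but the pushdown is accessed at the prefix, so I actually want the read-off order to match. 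I would instead have $\autP$ build the \emph{reverse} of the word read, or — cleaner — push letters so that the final pushdown content, read left to right, is exactly a word of $L$. Concretely: for $(p,a,q) \in T$ add the transition $(p, \$, a\$, q)$, so after simulating a path $\iota \xrightarrow{a_1 \cdots a_n} f$ the pushdown holds $[a_1 a_2 \cdots a_n \$]$; then for each $f \in F$ add a final transition $(f, \$, \varepsilon, q_{\mathrm{fin}})$ that pops the marker. The reachable configurations of the form $(q_{\mathrm{fin}}, [w])$ are then exactly those with $w \in L$, since $\$$ is dependent on all letters and so never commutes past anything: the trace $[a_1 \cdots a_n \$]$ determines the word $a_1 \cdots a_n$ uniquely as the part before the unique $\$$, and $[a_1 \cdots a_n]$ over $\structD'$ equals $[a_1 \cdots a_n]$ over $\structD$ because $D'\cap(A\times A) = D$. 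Hence $\{[w] \mid c \vdash^* (q_{\mathrm{fin}}, [w])\} = [L] = \calL$.

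The remaining obligation is to check that $\autP$ is a genuine tPDS, i.e.\ that \ref{def:tPDS1} and the diamond property \ref{def:diamond1'} (equivalently \ref{def:diamond1}, by \cref{lem:(P2)-suffices}) hold. Condition \ref{def:tPDS1} is immediate: every transition either pops $\$$ and pushes a word over $A \cup \{\$\}$, and $D'(\$) = A'$ contains all letters, or it is the final pop with $w = \varepsilon$. For the diamond property \ref{def:diamond1}, I need: whenever $(p, \$, v, q), (q, \$, w, r) \in \Delta$ with the read letters independent — but both read letters are $\$$, and $\$ \not\parallel \$$ since $(\$,\$) \in D'$ by reflexivity. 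So the hypothesis $a \parallel b$ of \ref{def:diamond1} is never satisfied, and the property holds vacuously. Thus $\autP$ is a tPDS, and we are done. The main thing to get right — the ``main obstacle,'' though it is more of a design choice than a difficulty — is arranging the pushed letters in the correct order so that the final pushdown content spells out a word of $L$ rather than its reverse, and making sure the marker letter $\$$ is both (a) dependent on everything, so \ref{def:tPDS1} is free and the diamond property is vacuous, and (b) never duplicated or lost, so that the word-to-trace correspondence stays injective on the reachable configurations; using exactly one bottom marker that is popped precisely once, at the very end, achieves this.
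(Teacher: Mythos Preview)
Your overall design is sound and close in spirit to the paper's, but the concrete construction has a bug that kills it after one step. With the transition $(p,\$,a\$,q)$, applying it in the configuration $(\iota,[\$])$ yields $(q_1,[a_1\$])$. Since $\$$ is dependent on every letter, the trace $[a_1\$]$ cannot be written as $[\$]\cdot x$ for any $x$, so no further $\$$-reading transition can fire: the system is stuck, and the only configuration ever reached in $q_{\mathrm{fin}}$ is $(q_{\mathrm{fin}},[\varepsilon])$ when $\iota\in F$. In other words, your ``fix'' of swapping $\$a$ to $a\$$ went the wrong way: the original form $(p,\$,\$a,q)$ is the one that keeps $\$$ on top and lets the simulation continue, at the price of producing the \emph{reverse} word $[a_n\cdots a_1]$ on the stack.

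The easy repair is either (i) keep $(p,\$,\$a,q)$ and start instead from an NFA for the reversal $L^{\mathrm R}$ (regular since $L$ is), so that after reversal you recover $[L]$; or (ii) simulate the NFA \emph{backwards}, adding $(q,\$,\$a,p)$ for each $(p,a,q)\in T$, starting from a final state and popping $\$$ upon reaching an initial state --- then the word comes out in the right order directly. The paper does essentially (ii), but with a different packaging: it uses a single tPDS state and adds the NFA states themselves (plus a start marker $\#$) as new, fully dependent alphabet letters, so that the current NFA state lives on top of the pushdown instead of in the control. Your variant, once repaired, trades this off the other way (one new letter, many tPDS states); both are fine, and your observations that \ref{def:tPDS1} is automatic because $D'(\$)=A'$ and that \ref{def:diamond1} is vacuous because every transition reads $\$$ carry over unchanged.
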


\begin{proof}
  Since $\calL$ is rational, there is an NFA $\autA=(S,A,I,T,F)$ with
  $[L(\autA)]=\calL$. Without loss of generality, we can assume that
  $S\cap A=\emptyset$ holds. We want to simulate the runs of $\autA$
  in inverse direction with the help of a (stateless) trace-pushdown
  system $\autP$. To this end, our system initially writes a final
  state $f\in F$ on top of the pushdown of $\autP$. Then we simulate
  an edge $(p,a,q)\in T$ by replacing the state $q$ at the top of the
  pushdown by the word $pa$. We are done, if $\autA$ is in an initial
  state $\iota\in I$.  In this case, we simply pop the state $\iota$
  from the pushdown.
  
  For this construction we first have to extend our dependence
  alphabet $\structD$ to $\structD'=(A',D')$ which is also able to
  handle states of $\autA$:
  \begin{itemize}
    \item $A':=A\cup S\cup\{\#\}$ where $\#\notin A\cup S$ and
    \item $D'$ is the symmetric closure of
      $D\cup ((S\cup\{\#\})\times A')$, i.e., the new letters from
      $S\cup\{\#\}$ are dependent from all letters in $A'$.
  \end{itemize}
  Then we construct a trace-pushdown system
  $\autP=(Q,\Delta')$ with
  \begin{itemize}
    \item $Q:=\{\top\}$ and
    \item $\Delta'$ consists of the following transitions:
    \begin{itemize}
      \item $(\top,\#,f,\top)$ for each $f\in F$,
      \item $(\top,p,qa,\top)$ for each transition
        $(q,a,p)\in\Delta$, and
      \item $(\top,\iota,\varepsilon,\top)$ for each
        $\iota\in I$.
    \end{itemize}
  \end{itemize}
  As mentioned before, $\autP$ simulates computations of $\autA$
  backwards. Underneath the top symbol $q\in S$ of our pushdown, we
  find a trace $[w]\in\traceMon$ such that $\autA$ has a
  $v$-labeled run (with $v\sim w$) from $q$ to
  $F$. Formally, for all $p,q\in Q$ and $w\in A^*$, we have
  $(\top,[p])\vdash_\autP^*(\top,[qw])$ if, and only
  if, there is a word $v\in A^*$ with $q\xrightarrow{v}_\autA p$ and
  $v\sim w$. This implies
  \[
    (\top,[\#]) \vdash_\autP^* (\top,[w])
    \iff
    [w]\in [L(\autA)]
  \]
  for each word $w\in A^*$. Hence,
  $\calL=[L(\autA)]=\{[w]\mid
  (\top,[\#])\vdash_\autP^*(\top,[w])\}$.
\end{proof}

\section{Conclusion}

In this paper, we investigate to what extend Caucal's preservation
results for pushdown systems can be extended to trace-pushdown
systems. Under a natural restriction of the transition relation, this
turns out to be possible, but requires nontrivial extensions of his
ideas: rational relations have to be restricted to lc-rational
relations (whose theory is developed in this paper), the two distinct
generalisations of regular languages to trace languages play different
roles, and the saturation of a system does not allow to replace every
path by a path with at most two phases, but only by a path with a
bounded number of phases where the bound depends on the dependence
alphabet.

We show that the reachability relation is decidable in polynomial
time. This result complements Zetzsche's work on valence automata over
loop-free graph monoids: while he imposes conditions on the monoid, we restrict
the transition relation.

While the reachability problem is decidable, it is still an open question
whether also the recurrent and alternating reachability problems for
trace-pushdown systems are decidable. Also the expressive power of
trace-pushdown systems is still unknown.

\bibliography{lit}

\begin{thebibliography}{10}

\bibitem{Ber79}
Jean Berstel.
\newblock {\em Transductions and {{Context-Free Languages}}}.
\newblock Teubner Studienb{\"u}cher, 1979.
\newblock \href {https://doi.org/10.1007/978-3-663-09367-1}
  {\path{doi:10.1007/978-3-663-09367-1}}.

\bibitem{BerMS89}
A.~Bertoni, G.~Mauri, and N.~Sabadini.
\newblock Membership problems for regular and context-free trace languages.
\newblock {\em Inf. Comput.}, 82(2):135--150, 1989.

\bibitem{BouEM97}
Ahmed Bouajjani, Javier Esparza, and Oded Maler.
\newblock Reachability analysis of pushdown automata: {{Application}} to
  model-checking.
\newblock In Antoni Mazurkiewicz and J{\'o}zef Winkowski, editors, {\em 8th
  {{International Conference}} on {{Concurrency Theory}}}, volume 1243 of {\em
  Lecture {{Notes}} in {{Computer Science}}}, pages 135--150. Springer, 1997.
\newblock \href {https://doi.org/10.1007/3-540-63141-0_10}
  {\path{doi:10.1007/3-540-63141-0_10}}.

\bibitem{Cau88}
Didier Caucal.
\newblock R\'ecritures suffixes de mots.
\newblock Technical report, INRIA, 1988.
\newblock inria-00075683.

\bibitem{Die90}
V.~Diekert.
\newblock {\em Combinatorics on Traces}.
\newblock Lecture Notes in Comp.\ Science vol.\ 454. Springer, 1990.

\bibitem{DieR95}
Volker Diekert and Grzegorz Rozenberg.
\newblock {\em The Book of Traces}.
\newblock World scientific, 1995.
\newblock \href {https://doi.org/10.1142/9789814261456}
  {\path{doi:10.1142/9789814261456}}.

\bibitem{OsuMZ16}
Emanuele D'Osualdo, Roland Meyer, and Georg Zetzsche.
\newblock First-order logic with reachability for infinite-state systems.
\newblock In Martin Grohe, Eric Koskinen, and Natarajan Shankar, editors, {\em
  LICS'16}, pages 457--466. {ACM}, 2016.

\bibitem{EldKO08}
Murray Elder, Mark Kambites, and Gretchen Ostheimer.
\newblock On groups and counter automata.
\newblock {\em International Journal of Algebra and Computation},
  18(08):1345--1364, 2008.

\bibitem{ElgM65}
C.C. Elgot and G.~Mezei.
\newblock On relations defined by generalized finite automata.
\newblock {\em IBM J.\ Res.\ Develop.}, 9:47--65, 1965.

\bibitem{ElsO04}
Gillian~Z. Elston and Gretchen Ostheimer.
\newblock On groups whose word problem is solved by a counter automaton.
\newblock {\em Theoretical Computer Science}, 320(2-3):175--185, 2004.

\bibitem{FinWW97}
Alain Finkel, Bernard Willems, and Pierre Wolper.
\newblock A {{Direct Symbolic Approach}} to {{Model Checking Pushdown
  Systems}}.
\newblock {\em Electronic Notes in Theoretical Computer Science}, 9:27--37,
  1997.
\newblock \href {https://doi.org/10.1016/S1571-0661(05)80426-8}
  {\path{doi:10.1016/S1571-0661(05)80426-8}}.

\bibitem{Gil96}
Robert~H. Gilman.
\newblock {\em Formal Languages and Infinite Groups}, volume~25 of {\em DIMACS
  Series in Discrete Mathematics and Theoretical Computer Science}.
\newblock DIMACS, 1996.

\bibitem{IbaSK76}
Oscar~H. Ibarra, Sartaj~K. Sahni, and Chul~E. Kim.
\newblock Finite automata with multiplication.
\newblock {\em Theoretical Computer Science}, 2(3):271--294, 1976.

\bibitem{ItoMVM01}
Masami Ito, Carlos Mart{\'i}n-Vide, and Victor Mitrana.
\newblock Group weighted finite transducers.
\newblock {\em Acta Informatica}, 38:117--129, 2001.

\bibitem{Kam09}
Mark Kambites.
\newblock Formal languages and groups as memory.
\newblock {\em Communications in Algebra}, 37(193-208), 2009.

\bibitem{Kle56}
Stephen~C. Kleene.
\newblock Representation of events in nerve nets and finite automata.
\newblock In {\em Automata Studies}, volume~34 of {\em Annals of {{Mathematics
  Studies}}}, pages 3--40. Princeton University Press, 1956.

\bibitem{KoeK23}
Chris K{\"o}cher and Dietrich Kuske.
\newblock Forwards- and {{Backwards-Reachability}} for {{Cooperating
  Multi-Pushdown Systems}}.
\newblock In {\em {{FCT}} 2023}, volume 14292 of {\em Lecture {{Notes}} in
  {{Computer Science}}}, pages 318--332. Springer, 2023.
\newblock \href {https://doi.org/10.1007/978-3-031-43587-4_23}
  {\path{doi:10.1007/978-3-031-43587-4_23}}.

\bibitem{KoeK24}
Chris K{\"o}cher and Dietrich Kuske.
\newblock Backwards-{{Reachability}} for {{Cooperating Multi-Pushdown
  Systems}}.
\newblock {\em Journal of Computer and System Science}, 148:1--14, 2025.
\newblock \href {https://doi.org/10.1016/j.jcss.2024.103601}
  {\path{doi:10.1016/j.jcss.2024.103601}}.

\bibitem{Kus20}
Dietrich Kuske.
\newblock The {{Subtrace Order}} and {{Counting First-Order Logic}}.
\newblock In Henning Fernau, editor, {\em Computer Science -- {{Theory}} and
  {{Applications}}}, Lecture {{Notes}} in {{Computer Science}}, pages 289--302,
  Cham, 2020. Springer International Publishing.
\newblock \href {https://doi.org/10.1007/978-3-030-50026-9_21}
  {\path{doi:10.1007/978-3-030-50026-9_21}}.

\bibitem{Kus23}
Dietrich Kuske.
\newblock A {{Class}} of {{Rational Trace Relations Closed Under Composition}}.
\newblock In {\em {{FSTTCS}} 2023}, volume 284 of {\em {{LIPIcs}}}, pages
  20:1--20:20. Schloss-Dagstuhl - Leibniz Zentrum f{\"u}r Informatik, 2023.
\newblock \href {https://doi.org/10.4230/LIPIcs.FSTTCS.2023.20}
  {\path{doi:10.4230/LIPIcs.FSTTCS.2023.20}}.

\bibitem{Min61}
M.~Minsky.
\newblock Recursive unsolvability of {P}ost's problem of 'tag' and other topics
  in theory of {T}uring machines.
\newblock {\em Annals of Mathematics}, 74(3):437--455, 1961.

\bibitem{MitS01}
Victor Mitrana and Ralf Stiebe.
\newblock Extended finite automata over groups.
\newblock {\em Discrete Applied Mathematics}, 108(3):287--300, 2001.

\bibitem{Niv68}
Maurice Nivat.
\newblock Transductions des langages de {{Chomsky}}.
\newblock {\em Annales de l'Institut Fourier}, 18(1):339--455, 1968.
\newblock \href {https://doi.org/10.5802/aif.287} {\path{doi:10.5802/aif.287}}.

\bibitem{Ren10}
Elaine Render.
\newblock {\em Rational Monoid and Semigroup Automata}.
\newblock PhD thesis, University of Manchester, 2010.

\bibitem{REnK09}
Elaine Render and Mark Kambites.
\newblock Rational subsets of polycyclic monoids and valence automata.
\newblock {\em Information and Computation}, 207(11):1329--1339, 2009.

\bibitem{Sch19}
Tanja Schellmann.
\newblock {\em {Model-Checking von Kellersystemen}}.
\newblock {Bachelor's Thesis}, Technische Universit{\"a}t Ilmenau, Ilmenau,
  2019.

\bibitem{Zet13}
Georg Zetzsche.
\newblock Silent transitions in automata with storage.
\newblock In Fedor~V. Fomin, Rusins Freivalds, Marta~Z. Kwiatkowska, and David
  Peleg, editors, {\em ICALP 2013}, Lecture Notes in Computer Science
  vol.~7966, pages 434--445. Springer, 2013.

\bibitem{Zet15}
Georg Zetzsche.
\newblock The emptiness problem for valence automata or: Another decidable
  extension of petri nets.
\newblock In Mikolaj Bojanczyk, Slawomir Lasota, and Igor Potapov, editors,
  {\em Reachability Problems - 9th International Workshop, {RP} 2015, Warsaw,
  Poland, September 21-23, 2015, Proceedings}, Lecture Notes in Computer
  Science vol.~9328, pages 166--178. Springer, 2015.

\bibitem{Zet21}
Georg Zetzsche.
\newblock The emptiness problem for valence automata over graph monoids.
\newblock {\em Inf. Comput.}, 277:104583, 2021.

\end{thebibliography}

\newoutputstream{todos}
\openoutputfile{paper.todos.ctr}{todos}
\addtostream{todos}{\arabic{@todonotes@numberoftodonotes}}
\closeoutputstream{todos}

\end{document}